\newcommand{\field}[1]{\mathbb{#1}}
\newcommand{\X}{\field{X}}
\newcommand{\E}{\field{E}}
\def\zhang{\textcolor{red}}
\def\chakraborty{\textcolor{blue}}
\theoremstyle{example} \theoremstyle{remark} \theoremstyle{lemma}
\theoremstyle{definition} \theoremstyle{corol}
\theoremstyle{proposition} \theoremstyle{condition}
\theoremstyle{assumption}
\newtheorem{assumption}{\n{Assumption}}[section]
\newtheorem{example}{\n{Example}}[section]
\newtheorem{remark}{\n{Remark}}[section]
\newtheorem{lemma}{\n{Lemma}}[section]
\newtheorem{definition}{\n{Definition}}[section]
\newtheorem{proposition}{\n{Proposition}}[section]
\newcommand{\Rmnum}[1]{\expandafter\romannumeral #1}
\newcommand*{\I}{\imath}
\font\n=cmcsc12
\def\X{{\mathbf{X}}}
\newcommand\independent{\protect\mathpalette{\protect\independenT}{\perp}}
\def\independenT#1#2{\mathrel{\rlap{$#1#2$}\mkern2mu{#1#2}}}
\renewcommand{\baselinestretch}{1}
\renewcommand\@biblabel[1]{}
\renewenvironment{thebibliography}[1]
     {\section*{\refname}%
      \@mkboth{\MakeUppercase\refname}{\MakeUppercase\refname}%
      \list{}%
           {\leftmargin0pt
            \@openbib@code
            \usecounter{enumiv}}%
      \sloppy
      \clubpenalty4000
      \@clubpenalty \clubpenalty
      \widowpenalty4000%
      \sfcode`\.\@m}
     {\def\@noitemerr
       {\@latex@warning{Empty `thebibliography' environment}}%
      \endlist}
\begin{document}

\def\spacingset#1{\renewcommand{\baselinestretch}%
{#1}\small\normalsize} \spacingset{1}


  \title{\bf Distance Metrics for Measuring Joint Dependence with Application to Causal Inference}
  \author{Shubhadeep Chakraborty \hspace{.2cm}\\
    Department of Statistics, Texas A\&M University\\
    and \\
    Xianyang Zhang \\
    Department of Statistics, Texas A\&M University}
  \maketitle

\bigskip
\begin{abstract}
Many statistical applications require the
quantification of joint dependence among more than two random
vectors. In this work, we generalize the notion of distance
covariance to quantify joint dependence
among $d \geq 2$ random vectors. We introduce the high order
distance covariance to measure the so-called Lancaster interaction dependence. The joint distance covariance is then
defined as a linear combination of pairwise distance covariances and
their higher order counterparts which together completely
characterize mutual independence. We further introduce some related
concepts including the distance cumulant, distance characteristic
function, and rank-based distance covariance. Empirical estimators are constructed based on certain Euclidean
distances between sample elements. We study the large sample
properties of the estimators and propose a bootstrap procedure to
approximate their sampling distributions. The asymptotic validity of
the bootstrap procedure is justified under both the null and
alternative hypotheses. The new metrics are employed to perform
model selection in causal inference, which is based on the joint
independence testing of the residuals from the fitted structural
equation models. The effectiveness of the method is
illustrated via both simulated and real datasets.
\end{abstract}

\noindent%
{\it Keywords:}  Bootstrap, Directed Acyclic Graph, Distance
Covariance, Interaction Dependence,
U-statistic, V-statistic.
\vfill

\newpage
\spacingset{1.45} 

\section{Introduction}
Measuring and testing dependence is of central importance in
statistics, which has found applications in a wide variety of areas
including independent component analysis, gene selection, graphical
modeling and causal inference. Statistical tests of
independence can be associated with widely many dependence measures.
Two of the most classical measures of association between two
ordinal random variables are Spearman's rho and Kendall's tau.
However, tests for (pairwise) independence using these two classical
measures of association are not consistent, and only have power for
alternatives with monotonic association. Contingency table-based
methods, and in particular the power-divergence family of test
statistics (Read and Cressie, 1988), are the best known general
purpose tests of independence, but are limited to relatively low
dimensions, since they require a partitioning of the space in which
each random variable resides. Another classical measure of
dependence between two random vectors is the mutual information
(Cover and Thomas, 1991), which can be interpreted as the
Kullback-Leibler divergence between the joint density and the
product of the marginal densities. The idea originally dates back to
the 1950's, in groundbreaking works by Shannon and Weaver (1949),
Mcgill (1954) and Fano (1961). Mutual information completely
characterizes independence and generalizes to more than two random
vectors. However, test based on mutual information involves
distributional assumptions for the random vectors and hence is not
robust to model misspecification.

In the past fifteen years, kernel-based methods have received
considerable attention in both the statistics and machine learning literature. For instance, Bach and Jordan (2002) derived a regularized correlation operator from the covariance and
cross-covariance operators and used its largest singular value to conduct independence test. Gretton et al. (2005; 2007) introduced a kernel-based independence measure, namely the Hilbert-Schmidt Independence Criterion (HSIC), to test for independence of two random vectors. This idea was recently extended by Sejdinovic et al. (2013) and Pfister et al. (2018) to quantify the joint independence among more than two random vectors.


Along with a different direction, Sz\'{e}kely et al. (2007), in their
seminal paper, introduced the notion of distance covariance (dCov) and
distance correlation as a measure of dependence between two random
vectors of arbitrary dimensions. Given the theoretical appeal of the
population quantity and the striking simplicity of the sample
version, the idea has been widely extended and analyzed in various
ways in Sz\'{e}kely and Rizzo (2012; 2014), Lyons (2013), Sejdinovic
et al. (2013), Dueck et al. (2014), Bergsma et al. (2014), Wang et
al. (2015), and Huo and Sz\'{e}kely (2016), to mention only a few.
The dCov between two random vectors $X \in {\mathbb
R}^p$ and $Y \in {\mathbb R}^q$ with finite first moments is defined
as the positive square root of 
\begin{eqnarray*}
dCov^2(X,Y)=\frac{1}{c_{p}c_{q}}\int_{{\mathbb
R}^{p+q}}\frac{|f_{X,Y}(t,s)-f_X(t)f_Y(s)|^2}{|t|_p^{1+p}|s|_q^{1+q}}dtds,
\end{eqnarray*}
where $f_X$, $f_Y$ and $f_{X,Y}$ are the individual and joint
characteristic functions of $X$ and $Y$ respectively, $|\cdot|_p$
is the Euclidean norm of $\mathbb{R}^p$, $c_{p}=\pi^{(1+p)/2}/\,\Gamma((1+p)/2)$
is a constant with $\Gamma(\cdot)$ being the complete gamma function. An
important feature of dCov is that it fully
characterizes independence because $dCov(X,Y)=0$ if and
only if $X$ and $Y$ are independent.

Many statistical applications require the quantification of joint
dependence among $d\geq 2$ random variables (or vectors). Examples
include model diagnostic checking for directed acyclic graph (DAG) where
inferring pairwise independence is not enough in this case (see more
details in Section \ref{sec:dag}), and independent component
analysis which is a means for finding a suitable
representation of multivariate data such that the components of the transformed data are mutually independent.
In this paper, we shall introduce new metrics which generalize the
notion of dCov to quantify joint dependence of $d
\geq 2$ random vectors. We first introduce the notion of high order
dCov to measure the so-called Lancaster interaction
dependence (Lancaster, 1969). We generalize the notion of Brownian covariance (Sz\'{e}kely et al., 2009) and show that it coincides with the high order distance covariance. We then define the joint dCov (Jdcov) as a linear combination of pairwise dCov and their high order counterparts. The proposed metric
provides a natural decomposition of joint dependence into the sum
of lower order and high order effects, where the relative importance
of the lower order effect terms and the high order effect terms is
determined by a user-chosen number. In the population case, Jdcov is
equal to zero if and only if the $d$ random vectors are mutually
independent, and thus completely characterizes joint independence.
It is also worth mentioning that the proposed metrics are invariant to
permutation of the variables and they inherit some nice properties
of dCov, see Section \ref{sec:jdov}.

Following the idea of Streitberg (1990), we introduce the concept of
distance cumulant and distance characteristic function, which leads
us to an equivalent characterization of independence of the $d$
random vectors. Furthermore, we establish a scale invariant
version of Jdcov and discuss the concept of rank-based distance
measures, which can be viewed as the counterparts of Spearman's rho
to dCov and JdCov.

JdCov and its scale-invariant versions can be conveniently estimated
in finite sample using V-statistics or their bias-corrected
versions. We study the asymptotic properties of the estimators, and
introduce a bootstrap procedure to approximate their sampling
distributions. The asymptotic validity of the bootstrap procedure is
justified under both the null and alternative hypotheses. The new
metrics are employed to perform model selection in a causal
inference problem, which is based on the joint independence testing
of the residuals from the fitted structural equation models. We
compare our tests with the bootstrap version of the $d$-variate HSIC
(dHSIC) test recently introduced in Pfister et al. (2018) and the
mutual independence test proposed by Matteson and Tsay (2017).
Finally we remark that although we focus on Euclidean space valued
random variables, our results can be readily extended to general
metric spaces in view of the results in Lyons (2013).

The rest of the paper is organized as follows. Section \ref{sec:high} introduces the high order distance covariance and studies its basic properties.
Section \ref{sec:jdov} describes the JdCov to quantity joint dependence. Sections \ref{sec:cum}-\ref{sec:rank}
further introduce some related concepts including the distance cumulant, distance characteristic
function, and rank-based distance covariance. We study the estimation of the distance metrics in Section \ref{sec:est} and present a joint independence test
based on the proposed metrics in Section \ref{sec:infer}.
Section \ref{sec:num} is devoted to numerical studies. The new metrics are employed to perform model selection in causal inference in
Section \ref{sec:dag}. Section \ref{sec:comp} discusses the efficient computation of distance metrics and future research directions.
The technical details are gathered in the supplementary material.

\emph{Notations}. Consider $d\geq 2$ random vectors
$\mathcal{X}=\{X_1,\dots,X_d\}$, where $X_i\in\mathbb{R}^{p_i}$. Set $p_0=\sum^{d}_{i=1}p_i$. Let $\{X_1',\dots,X_d'\}$ be an
independent copy of $\mathcal{X}$. Denote by $\I=\sqrt{-1}$ the
imaginary unit. Let $|\cdot|_p$ be the Euclidean norm of
$\mathbb{R}^p$ with the subscript omitted later without ambiguity.
For $a,b\in\mathbb{R}^p$, let $\langle a,b\rangle=a^{\top}b$. For a
complex number $a$, denote by $\bar{a}$ its conjugate. Let $f_i$ be
the characteristic function of $X_i$, i.e., $f_i(t)=\E[e^{\I \langle
t,X_i\rangle}]$ with $t\in\mathbb{R}^{p_i}$. Define
$w_{p}(t)=(c_{p}|t|^{1+p}_p)^{-1}$ with
$c_{p}=\pi^{(1+p)/2}/\,\Gamma((1+p)/2).$
Write $dw=(c_{p_1}c_{p_2}\dots
c_{p_d}|t_1|_{p_1}^{1+p_1}\cdots|t_d|_{p_d}^{1+p_d}
)^{-1}dt_1\cdots dt_d.$ Let $I^d_k$ be the collection of $k$-tuples
of indices from $\{1,2,\dots,d\}$ such that each index occurs
exactly once. Denote by $\lfloor a \rfloor $ the integer part of $a\in\mathbb{R}$.
Write $X \independent Y$ if $X$ is independent of $Y.$

\section{Measuring joint dependence}\label{sec:joint}
\subsection{High order distance covariance}\label{sec:high}
We briefly review the concept of Lancaster interactions first introduced by Lancaster (1969).
The Lancaster interaction measure associated with a multidimensional probability distribution of $d$ random variables $\{X_1,\dots,X_d\}$ with the
joint distribution $F=F_{1,2,\dots,d}$, is a signed measure $\Delta F$ given by
\begin{equation}\label{eq-land}
\Delta F \, = \, (F_1^* - F_1) (F_2^* - F_2) \cdots (F_d^* - F_d)\,,
\end{equation}
where after expansion, a product of the form $F_i^* F_j^*\cdots F_k^*$ denotes the corresponding joint distribution function $F_{i,j,\dots,k}$ of $\{X_i,X_j,\dots,X_k\}$. For example for $d=4$, the term $F_1^* F_2^* F_3 F_4$ stands for $F_{12} F_3 F_4$, $F_1^* F_2 F_3 F_4$ stands for $F_1 F_2 F_3 F_4$, etc.
In particular for $d = 3$, (\ref{eq-land}) simplifies to
\begin{equation}\label{eq-lan3}
\Delta F \, = \, F_{123} - F_1 F_{23} - F_2 F_{13} - F_3 F_{12} + 2 F_1 F_2 F_3\, .
\end{equation}
In light of the Lancaster interaction measure, we introduce the concept of $d$th order dCov as follows.

\begin{definition}
The $d$th order dCov is defined as the
positive square root of
\begin{align}
\label{eqn1}
dCov^2(X_1,\dots,X_d)=&\int_{\mathbb{R}^{p_0}}\left|\E\left[\prod_{i=1}^{d}(f_i(t_i)-e^{\I
\langle t_i,X_i\rangle})\right]\right|^2dw,
\end{align}
When $d=2$, it reduces to the dCov in Sz\'{e}kely et al. (2007).
\end{definition}

The term $\E[\prod_{i=1}^{d}(f_i(t_i)-e^{\I \langle
t_i,X_i\rangle})]$ in the definition of dCov is a counterpart of the Lancaster interaction measure in (\ref{eq-land}) with the joint distribution functions replaced by
the joint characteristic functions. When $d=3$, $dCov^2(X_1,X_2,X_3)>0$ rules out the possibility
of any factorization of the joint distribution.
To see this, we note that $X_1\independent (X_2,X_3)$, $X_2\independent (X_1,X_3)$
or $X_3\independent (X_1,X_2)$ all lead to
$dCov^2(X_1,X_2,X_3)=0$. On the other hand,
$dCov^2(X_1,X_2,X_3)=0$ implies that
\begin{align*}
&f_{123}(t_1,t_2,t_3)-f_1(t_1)f_2(t_2)f_3(t_3)
\\=&f_1(t_1)f_{23}(t_2,t_3)+f_2(t_2)f_{13}(t_1,t_3)+f_3(t_3)f_{12}(t_1,t_2)-3f_1(t_1)f_2(t_2)f_3(t_3)
\end{align*}
for $t_i\in\mathbb{R}^{p_i}$ almost everywhere. In this case, the
``higher order effect'' i.e.,
$f_{123}(t_1,t_2,t_3)-f_1(t_1)f_2(t_2)f_3(t_3)$ can be represented
by the ``lower order/pairwise effects''
$f_{ij}(t_i,t_j)-f_i(t_i)f_j(t_j)$ for $1\leq i\neq j\leq 3$. However, this does not necessarily imply that $X_1, X_2$ and $X_3$ are jointly independent. In other words when $d=3$ (or more generally when $d \geq 3$), joint independence of $X_1, X_2$ and $X_3$ is not a necessary condition for dCov to be zero. To address this issue, we shall introduce a new distance metric to quantify any forms of dependence among $\mathcal{X}$ in Section \ref{sec:jdov}.

In the following, we present some basic properties of high order dCov. Define the bivariate
function
$U_i(x,x')=\E|x-X'_i|+\E|X_i-x'|-|x-x'|-\E|X_i-X'_i|$
for $x,x'\in\mathbb{R}^{p_i}$ with $1\leq i\leq d$. Our definition of dCov is partly motivated by the following lemma.
\begin{lemma}\label{lem1}
For $1\leq i\leq d,$
$$U_i(x,x')=\int_{\mathbb{R}^{p_i}}\left\{(f_i(t)-e^{\I
\langle t, x\rangle})(f_i(-t)-e^{-\I \langle
t,x'\rangle})\right\}w_{p_i}(t)dt.$$
\end{lemma}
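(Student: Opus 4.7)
The plan is to expand the product in the integrand, use the basic integral representation of the Euclidean norm established in Székely et al. (2007)—namely $|u|_p = \int_{\mathbb{R}^p}(1-\cos\langle t,u\rangle)\,w_p(t)\,dt$—to recognize each resulting piece as a Euclidean distance, and then match up the four terms with the four terms defining $U_i(x,x')$.

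First, I would let $(X_i, X_i')$ be i.i.d.\ copies and write $f_i(t) = \E[e^{\I\langle t,X_i\rangle}]$ and $f_i(-t) = \E[e^{-\I\langle t,X_i'\rangle}]$. Expanding the product gives four terms, each of which is, after Fubini, an expectation of an exponential of the form $e^{\I\langle t,u\rangle}$ with $u \in \{X_i-X_i',\; X_i - x',\; x - X_i',\; x - x'\}$ (with appropriate signs $+,-,-,+$ respectively). Next I would exploit the symmetry of the weight $w_{p_i}(t)$ under $t \mapsto -t$: since the integrand $g(t)$ satisfies $g(-t)=\overline{g(t)}$, the integral against $w_{p_i}$ is real-valued, so the $e^{\I\langle t,u\rangle}$ can be replaced by $\cos\langle t,u\rangle$. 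Because the four signs sum to zero, the constant $1$ in $1-\cos\langle t,u\rangle$ contributes nothing, and the integrand can be rewritten as
\begin{equation*}
\E(1-\cos\langle t,X_i-x'\rangle) + \E(1-\cos\langle t,x-X_i'\rangle) - \E(1-\cos\langle t, X_i-X_i'\rangle) - (1-\cos\langle t,x-x'\rangle).
\end{equation*}
Applying the norm identity term by term (with another application of Fubini to pull the expectations through) yields $\E|X_i-x'| + \E|x-X_i'| - \E|X_i-X_i'| - |x-x'|$, which is precisely $U_i(x,x')$.

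The main technical point to be careful about is justifying the use of Fubini and the termwise application of the norm identity, since the individual terms $\E\,(1-\cos\langle t,\cdot\rangle)\,w_{p_i}(t)$ are integrable (they are nonnegative and integrate to finite quantities under the assumed first-moment conditions on $X_i$), so Tonelli's theorem applies before one regroups and subtracts. The reduction from complex exponentials to cosines via the evenness of $w_{p_i}$ is the only subtle step; after that, the identification with $U_i(x,x')$ is a direct matching of the four summands.
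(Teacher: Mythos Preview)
Your proposal is correct and follows essentially the same route as the paper's proof: both expand the product into four exponential terms, reduce to cosines via the evenness of $w_{p_i}$ (the paper writes out the sine part explicitly and notes it integrates to zero, while you phrase this as ``the integral is real by $g(-t)=\overline{g(t)}$''), and then apply the identity $\int_{\mathbb{R}^{p_i}}(1-\cos\langle t,u\rangle)\,w_{p_i}(t)\,dt=|u|$ from Lemma~1 of Sz\'ekely et al.\ (2007) term by term. Your added remarks on Fubini/Tonelli are appropriate and slightly more explicit than the paper about the justification for interchanging integral and expectation.
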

By Lemma \ref{lem1} and Fubini's theorem, the $d$th order (squared) dCov admits the following equivalent representation,
\begin{equation}
\label{eqn2}
\begin{split}
dCov^2(X_1,\dots,X_d)
=&\int_{\mathbb{R}^{p_0}}\left|\E\left[\prod_{i=1}^{d}(f_i(t_i)-e^{\I \langle t_i,X_i\rangle})\right]\right|^2dw\\
=&\int_{\mathbb{R}^{p_0}}\E\left[\prod_{i=1}^{d}(f_i(t_i)-e^{\I
\langle t_i,
X_i\rangle})\right]\E\left[\prod_{i=1}^{d}\overline{(f_i(t_i)-e^{\I
\langle t_i, X_i'\rangle})}\right]dw \\
=& \,\E\left[\prod^{d}_{i=1}
U_i(X_i,X'_i)\right].
\end{split}
\end{equation}
This suggests that similar to dCov, its high order counterpart has an
expression based on the moments of $U_i$s, which results in very
simple and applicable empirical formulas, see more details in Section \ref{sec:est}.

\begin{remark}\label{rem1}
{\rm
From the definition of dCov in Sz\'{e}kely et al. (2007), it might appear that its most natural generalization to the case of $d = 3$ would be to define a measure in the following way
$$\frac{1}{c_{p}c_{q}c_{r}}\int_{{\mathbb
R}^{p+q+r}}\frac{|f_{X,Y,Z}(t,s,u)-f_X(t)f_Y(s)f_Z(u)|^2}{|t|_p^{1+p}|s|_q^{1+q}|u|_r^{1+r}}\,dtdsdu \, ,$$
where $X\in\mathbb{R}^p$, $Y\in\mathbb{R}^q$ and $Z \in \mathbb{R}^r$. Assuming that the integral above exists, one can easily verify that such a measure completely
characterizes joint independence among $X, Y$ and $Z$. 
However, it does not admit a nice equivalent representation as in (\ref{eqn2}) (unless one considers a different weighting function). We exploit this equivalent representation of the $d$th order dCov to propose a V-statistic type estimator of the population quantity (see Section 3) which is much simpler to compute rather than evaluating an integral as in the original definition in (\ref{eqn1}).
}
\end{remark}

\begin{remark}\label{rem2}
{\rm
Sz\'{e}kely et al. (2009) introduced the notion of covariance with respect to a stochastic process. Theorem $8$ in Sz\'{e}kely et al. (2009) shows that population distance covariance coincides with the covariance with respect to Brownian motion (or the so-called \textit{Brownian covariance}). The Brownian covariance of two random variables $X\in \mathbb{R}^p$ and $Y \in \mathbb{R}^q$ with $\mathbb{E}(|X|^2+|Y|^2)<\infty$ is defined as the positive square root of $$\mathcal{W}^2(X,Y)=Cov^2_W(X,Y)=\mathbb{E}[X_W X^{'}_W Y_{W^{'}} Y^{'}_{W^{'}}]\, ,$$ where $W$ and $W^{'}$ are independent Brownian motions with zero mean and covariance function $C(t,s)=|s| + |t| - |s-t|$ on $\mathbb{R}^p$ and $\mathbb{R}^q$ respectively, and $$X_W = W(X) - \mathbb{E}[\,W(X)|W\,]\, .$$ Conditional on $W$ (or $W^{'}$), $X^{'}_W$ (or $Y^{'}_{W^{'}}$) is an i.i.d. copy of $X_W$ (or $Y_{W^{'}}$). Then following Theorem $8$ in Sz\'{e}kely et al. (2009) and Definition $2.1$, we have $dCov^2(X,Y) = \mathcal{W}^2(X,Y) \,.$}
\end{remark}

Now for $d \geq 2$ random variables $\{X_1, X_2, \dots , X_d\}$ where $X_i \in \mathbb{R}^{p_i}, 1\leq i \leq d$, we can generalize the notion of Brownian covariance as the positive square root of $$\mathcal{W}^2(X_1, \dots , X_d) = \mathbb{E}\left[ \prod_{i=1}^d X_{i_{W_i}} X^{'}_{i_{W_i}} \right]\, ,$$ where $W_i$'s are independent Brownian motions on $\mathbb{R}^{p_i}$, $1\leq i \leq d$. Property $(2)$ in Proposition $2.1$ below establishes the connection between the higher order distance covariances and the generalized notion of Brownian covariance.

Similar to $dCov$, our
definition of high order $dCov$ possesses the following
important properties.
\begin{proposition}\label{prop0}
We have the following properties regarding $dCov(X_1, X_2, \dots , X_d)$:
\begin{enumerate}[(1)]
\item For any $a_i\in\mathbb{R}^{p_i}$, $c_i\in\mathbb{R}$, and orthogonal transformations $A_i\in\mathbb{R}^{p_i\times p_i}$, $dCov^2(a_1+c_1A_1X_1,\dots,a_d+c_dA_dX_d)=\prod_{i=1}^{d}|c_i|\,
dCov^2(X_1,\dots,X_d).$ Moreover, dCov is invariant to any permutation of $\{X_1,X_2,\\ \dots,X_d\}$.
\item Under Assumption \ref{ass1} (see Section \ref{sec:est}), the $d$th order $dCov$ exists and $$\mathcal{W}^2(X_1, \dots , X_d) = dCov^2(X_1, \dots , X_d)\,.$$
\end{enumerate}
\end{proposition}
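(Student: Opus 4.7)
The two properties are proved separately. For property (1) the plan is to work directly from the integral representation (\ref{eqn1}) and push the affine transformations through using standard properties of characteristic functions together with a change of variable in each $t_i$. For the translation $X_i\mapsto a_i+X_i$, the identity $f_{a_i+X_i}(t_i)=e^{\I\langle t_i,a_i\rangle}f_i(t_i)$ gives $f_{a_i+X_i}(t_i)-e^{\I\langle t_i,a_i+X_i\rangle}=e^{\I\langle t_i,a_i\rangle}\bigl(f_i(t_i)-e^{\I\langle t_i,X_i\rangle}\bigr)$, so the phases factor out of the $d$-fold product and vanish after taking the modulus squared. For the orthogonal rotation $X_i\mapsto A_i X_i$, the substitution $s_i=A_i^\top t_i$ preserves both $|t_i|$ and $dt_i$ and restores the original integrand. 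For the scaling $X_i\mapsto c_i X_i$, the substitution $s_i=c_i t_i$ yields $|t_i|^{-(1+p_i)}dt_i=|c_i|\cdot|s_i|^{-(1+p_i)}ds_i$, so each coordinate contributes one factor $|c_i|$ and the $d$ coordinates together produce $\prod_{i=1}^d|c_i|$. Permutation invariance is immediate because both the integrand of (\ref{eqn1}) and the weight $dw$ are symmetric in the index set.

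For property (2), I would first use Assumption \ref{ass1} to guarantee that each $U_i(x,x')$ is finite and that $\E\bigl[\prod_{i=1}^{d}U_i(X_i,X_i')\bigr]$ exists, so the existence of $dCov^2$ follows at once from (\ref{eqn2}). The equality $\mathcal{W}^2=dCov^2$ then reduces to verifying the conditional identity
\[
\E\bigl[X_{i_{W_i}}X'_{i_{W_i}} \,\big|\, X_i,X_i'\bigr]=U_i(X_i,X_i'),\qquad 1\le i\le d.
\]
To derive this, set $\mu_i(W_i):=\E[W_i(X_i)\mid W_i]$, expand $(W_i(X_i)-\mu_i)(W_i(X_i')-\mu_i)$, and evaluate the four resulting expectations against the Brownian covariance $\E[W_i(s)W_i(t)]=|s|+|t|-|s-t|$. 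After the cancellations involving $|x|$, $|x'|$ and $\E|X_i|$, the remainder is exactly $\E|x-X_i|+\E|x'-X_i|-|x-x'|-\E|X_i-X_i'|$, which is $U_i(x,x')$.

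To finish, I would condition on the entire sample $(X_1,X_1',\dots,X_d,X_d')$. Since $W_1,\dots,W_d$ are mutually independent and independent of the data, the factors $X_{i_{W_i}}X'_{i_{W_i}}$ for distinct $i$ are conditionally independent, so by the tower property
\[
\mathcal{W}^2(X_1,\dots,X_d)=\E\Bigl[\prod_{i=1}^{d}\E\bigl[X_{i_{W_i}}X'_{i_{W_i}}\mid X_i,X_i'\bigr]\Bigr]=\E\Bigl[\prod_{i=1}^{d}U_i(X_i,X_i')\Bigr],
\]
and the right-hand side equals $dCov^2(X_1,\dots,X_d)$ by (\ref{eqn2}). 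The main obstacle I anticipate is the bookkeeping in the Brownian-covariance calculation together with a clean Fubini argument under Assumption \ref{ass1} that legitimates every interchange of expectations, both in the four-term expansion above and in the final conditioning step; once those are in place, the extension from the $d=2$ result of Sz\'ekely et al.\ (2009) is essentially structural.
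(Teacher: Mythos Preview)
Your proposal is correct. For property~(2) your argument is exactly the one the paper has in mind: the text simply cites Theorem~7 of Sz\'ekely et al.\ (2009) and relies on the representation~(\ref{eqn2}), and your conditional computation $\E\bigl[X_{i_{W_i}}X'_{i_{W_i}}\mid X_i,X_i'\bigr]=U_i(X_i,X_i')$ together with the conditional independence of the $W_i$ is precisely how that theorem is proved.

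For property~(1), however, you take a genuinely different route from the paper. The paper works through the $U_i$-representation~(\ref{eqn2}): since Euclidean distances satisfy $|a_i+c_iA_ix-(a_i+c_iA_ix')|=|c_i|\,|x-x'|$, one has $U_i(a_i+c_iA_iX_i,a_i+c_iA_iX_i')=|c_i|\,U_i(X_i,X_i')$ directly, and the factor $\prod_i|c_i|$ drops out of the expectation in one line with no integration at all. Your argument instead stays with the spectral definition~(\ref{eqn1}) and handles each transformation by a change of variable in $t_i$; the scaling step $|t_i|^{-(1+p_i)}dt_i=|c_i|\,|s_i|^{-(1+p_i)}ds_i$ is the key computation. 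Both approaches are short and valid. The paper's version is slightly cleaner because it bypasses the weight function entirely, while yours has the advantage of not relying on the equivalence~(\ref{eqn2}) and makes transparent why the specific exponent $1+p_i$ in the weight is what produces exactly one factor of $|c_i|$ per coordinate.
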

Property (1) shows that
dCov is invariant to translation, orthogonal transformation and permutation on
$X_i$s. In property (2), the existence of the $d$th order $dCov$ follows from (\ref{eqn2}) and application of Fubini's Theorem and H\"{o}lder's
inequality. The equality with Brownian covariance readily follows from the proof of Theorem $7$ in Sz\'{e}kely et al. (2009).

Theorem 7 in Sz\'{e}kely et al. (2007) shows the relationship between distance correlation and the correlation coefficient for bivariate normal distributions. We extend that result in case of multivariate normal random variables with zero mean, unit variance and pairwise correlation $\rho$. Proposition \ref{prop:normality} below establishes a relationship between the correlation coefficient and higher order distance covariances for multivariate normal random variables.

\begin{proposition}\label{prop:normality}
Suppose $(X_1, X_2, \dots , X_d)\sim N(0,\Sigma)$, where
$\Sigma=(\sigma_{i,j})_{i,j=1}^{d}$ with $\sigma_{ii}=1$ for $1\leq
i\leq d$ and $\sigma_{ij}=\rho$ for $1\leq
i \neq j \leq d$. When $d=2k-1$ or $d=2k$, $dCov^2(X_1,\dots ,X_d)= O(|\rho|^{2k})$ for $k \geq 2$.
\end{proposition}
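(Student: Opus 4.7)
My plan is to work directly from the definition (\ref{eqn1}) and Taylor-expand in $\rho$ around $\rho=0$. For equicorrelated standard normals, the joint characteristic function of any subcollection $\{X_i:i\in S\}$ equals $\exp\bigl(-\tfrac12\sum_{i\in S}t_i^2-\rho A_S\bigr)$, where $A_S:=\sum_{\{i,j\}\subseteq S}t_it_j$. Expanding $\prod_i(f_i(t_i)-e^{\I t_iX_i})$ by letting a subset $S\subseteq\{1,\dots,d\}$ index which factors contribute $-e^{\I t_iX_i}$, and then taking the expectation, gives
\[
  \E\!\left[\prod_{i=1}^d (f_i(t_i)-e^{\I t_iX_i})\right] \,=\, g(t)\sum_{S\subseteq\{1,\dots,d\}}(-1)^{|S|}e^{-\rho A_S},
\]
with $g(t):=\exp\bigl(-\tfrac12\sum_{i=1}^d t_i^2\bigr)$.

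Taylor-expanding $e^{-\rho A_S}=\sum_{n\ge0}(-\rho)^nA_S^n/n!$ and interchanging summations then reduces the problem to the combinatorial claim that $\sum_{S}(-1)^{|S|}A_S^n=0$ for every $n<k$. To prove it I would write $A_S^n=\sum_{(p_1,\dots,p_n)}t_{p_1}\cdots t_{p_n}\,\mathbf{1}\{p_1\cup\cdots\cup p_n\subseteq S\}$, where each $p_l$ ranges over unordered pairs in $\{1,\dots,d\}$ and $t_p:=t_it_j$ for $p=\{i,j\}$. Swapping the order of summation and using $\sum_{S\supseteq T}(-1)^{|S|}=(-1)^{|T|}(1-1)^{d-|T|}$ leaves only those tuples whose union $T=\bigcup_l p_l$ is all of $\{1,\dots,d\}$. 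Since each pair contains exactly two indices, $n$ pairs can cover at most $2n$ of them, so we must have $2n\ge d$, i.e.\ $n\ge\lceil d/2\rceil = k$ in both cases $d=2k$ and $d=2k-1$. Consequently $\E\prod_i(f_i(t_i)-e^{\I t_iX_i})=\rho^k g(t)R(t;\rho)$ with $R(\cdot;\rho)$ analytic in $\rho$.

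Substituting into (\ref{eqn1}) gives $dCov^2(X_1,\dots,X_d)=\rho^{2k}\int g(t)^2|R(t;\rho)|^2\,dw$, so it only remains to show this integral stays bounded as $\rho\to 0$. A Taylor-with-remainder bound yields $|R(t;\rho)|\le\frac{1}{k!}\sum_S|A_S|^k e^{|\rho||A_S|}$. Since $A_S$ is quadratic in $t$, the Gaussian factor $g(t)^2=e^{-\sum_i t_i^2}$ dominates $|R|^2$ at infinity for $|\rho|$ in a small enough neighborhood of $0$; near the origin, the polynomial remainder together with the remaining factors of $g$ provides enough powers of the $t_i$'s to absorb the weight $\prod_i|t_i|^{-(1+p_i)}$, exactly as in the finiteness argument for Proposition \ref{prop0}(2). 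Dominated convergence then delivers $dCov^2=O(|\rho|^{2k})$.

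The decisive step is the combinatorial cancellation in the second paragraph; once inclusion--exclusion combined with the elementary covering bound $2n\ge d$ is in place, the remainder is routine. The main obstacle I anticipate is organizing the expansion so the cancellation is transparent, and checking that the Gaussian tail decay is not spoiled by the polynomial blow-up of $|R(t;\rho)|$ for $\rho$ in a fixed neighborhood of $0$.
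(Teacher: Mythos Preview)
Your inclusion--exclusion argument for the cancellation is correct and considerably cleaner than the paper's proof. The paper expands $A = g(t)\sum_S(-1)^{|S|}e^{-\rho A_S}$ term by term, tabulates the coefficient of each low-degree monomial, and uses binomial identities such as $1-\binom{d-2}{1}+\binom{d-2}{2}-\cdots=0$ to show the low-order terms drop out. Your observation that $\sum_S(-1)^{|S|}A_S^n = (-1)^d\sum_{(p_1,\dots,p_n):\,\bigcup_l p_l = [d]}\prod_l t_{p_l}$ subsumes all of that in one line, and the covering bound $2n\ge d$ immediately gives $n\ge k=\lceil d/2\rceil$ in both parities.

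There is, however, a genuine gap in your integrability argument at the origin. The Lagrange-remainder bound $|R(t;\rho)|\le\frac{1}{k!}\sum_S|A_S|^k e^{|\rho||A_S|}$ is too crude here: a single summand such as $|A_{\{1,2\}}|^k=|t_1t_2|^k$ carries no power of $t_3,\dots,t_d$ at all, so squaring and dividing by the weight $\prod_i t_i^{2}$ leaves a non-integrable singularity along $\{t_j=0\}$ for $j\ge 3$. The Gaussian factor $g(t)^2$ is $\approx 1$ near the origin and contributes nothing, and your appeal to Proposition~\ref{prop0}(2) does not help either, since that finiteness argument goes through the $U_i$-representation and H\"older's inequality rather than through pointwise bounds on the $t$-integrand. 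The fix is already implicit in your combinatorics: every surviving quantity $\sum_S(-1)^{|S|}A_S^n$ with $n\ge k$ is, by your own formula, a polynomial in which each $t_i$ appears to degree at least one (since $\bigcup_l p_l=[d]$). Hence $R(t;\rho)=\bigl(\prod_{i=1}^d t_i\bigr)\tilde R(t;\rho)$ with $\tilde R$ entire in $\rho$ and polynomially bounded in $t$; equivalently, the original integrand $\E\prod_i(f_i(t_i)-e^{\I t_iX_i})$ manifestly vanishes whenever any $t_j=0$, and smoothness gives the factorization directly. After pulling out $\prod_i t_i$, the weight is absorbed and your Gaussian domination at infinity finishes the job. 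The paper does essentially this by explicitly writing down the leading nonzero term ($t_{i_1}^2\prod_{j\neq i_1}t_j$ for $d=2k-1$, $\prod_i t_i$ for $d=2k$) and handling the remainder separately; your route, once patched, is more uniform.
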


Proposition \ref{prop01} in the supplementary materials shows some additional properties of the $d$th order $dCov$.

\subsection{Joint distance covariance}\label{sec:jdov}
In this subsection, we introduce a new joint dependence measure
called the joint dCov (Jdcov), which is designed to
capture all types of interaction dependence among the $d$ random
vectors. To achieve this goal, we define JdCov as the linear
combination of all $k$th order dCov for $1\leq k\leq
d$.
\begin{definition}
The JdCov among $\{X_1,\dots,X_d\}$ is given by
\begin{equation}\label{def1}
\begin{split}
&JdCov^2(X_1,\dots,X_d;C_2,\dots,C_d)
\\=&C_2\sum_{(i_1,i_2)\in
I^d_2}dCov^2(X_{i_1},X_{i_2})+C_3\sum_{(i_1,i_2,i_3)\in
I^d_3}dCov^2(X_{i_1},X_{i_2},X_{i_3})
\\&+\cdots+C_d \,dCov^2(X_1,\dots,X_d),
\end{split}
\end{equation}
for some nonnegative constants $C_i\geq 0$ with $2\leq i\leq d$.
\end{definition}

Proposition \ref{prop1} below states that JdCov completely
characterizes joint independence among \\$\{X_1,\dots,X_d\}$.
\begin{proposition}\label{prop1}
Suppose $C_i>0$ for $2\leq i\leq d.$ Then
$JdCov^2(X_1,\dots,X_d;C_2,\dots,C_d)=0$ if and only if
$\{X_1,\dots,X_d\}$ are mutually independent.
\end{proposition}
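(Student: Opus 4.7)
The plan is to reduce the claim to an induction on subset size, using continuity of characteristic functions to promote integral vanishing to pointwise vanishing. Since every coefficient $C_k$ is strictly positive and each summand $dCov^2(X_{i_1},\dots,X_{i_k})$ is nonnegative, $JdCov^2=0$ if and only if $dCov^2(X_{i_1},\dots,X_{i_k})=0$ for every $k$-subset $(i_1,\dots,i_k)$ of $\{1,\dots,d\}$ and every $2\leq k\leq d$. So the task reduces to showing that simultaneous vanishing of all $k$th order dCovs forces mutual independence; the converse is the easy direction, since mutual independence descends to every subset and makes each integrand zero.

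For a fixed $k$-subset, write
\[
\Delta_k f(t_{i_1},\dots,t_{i_k}) \;:=\; \E\Big[\prod_{j=1}^{k}\big(f_{i_j}(t_{i_j})-e^{\I \langle t_{i_j}, X_{i_j}\rangle}\big)\Big].
\]
The weighting $dw$ is positive Lebesgue-a.e., so $dCov^2=\int |\Delta_k f|^2\,dw=0$ forces $\Delta_k f=0$ a.e., and continuity (since $\Delta_k f$ is a finite linear combination of joint characteristic functions) upgrades this to $\Delta_k f\equiv 0$ pointwise. Expanding the product gives the signed sum
\[
\Delta_k f \;=\; \sum_{S\subseteq\{1,\dots,k\}}(-1)^{|S|}\,f_S(t_S)\prod_{j\notin S}f_{i_j}(t_{i_j}),
\]
where $f_S$ denotes the joint characteristic function of $\{X_{i_j}:j\in S\}$.

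Next I proceed by induction on $k$. The base case $k=2$ gives $f_{i_1 i_2}=f_{i_1}f_{i_2}$, i.e., pairwise independence of every pair. For the inductive step, assume that every $r$-subset of $\mathcal{X}$ is mutually independent for $r\leq k$, so that $f_S=\prod_{j\in S}f_{i_j}$ whenever $|S|\leq k$. Substituting this into the identity $\Delta_{k+1} f\equiv 0$, every term with $|S|\leq k$ collapses to $(-1)^{|S|}\prod_{j=1}^{k+1}f_{i_j}$, and the sum of these terms is $\bigl(\sum_{s=0}^{k}\binom{k+1}{s}(-1)^s\bigr)\prod_{j}f_{i_j}=(-1)^{k}\prod_{j}f_{i_j}$ by the binomial identity. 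Equating this with the remaining $|S|=k+1$ term $(-1)^{k+1}f_{i_1,\dots,i_{k+1}}$ yields $f_{i_1,\dots,i_{k+1}}=\prod_{j=1}^{k+1}f_{i_j}$. Running the induction up to $k+1=d$ produces the factorization of the full joint characteristic function, which is equivalent to mutual independence of $\{X_1,\dots,X_d\}$.

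The main obstacle is really just the combinatorial bookkeeping in the inductive step; the analytic content (positivity of $dw$ plus continuity of characteristic functions) is routine, and the converse direction follows by plugging $f_S=\prod_{j\in S}f_{i_j}$ into the expansion to get $\Delta_k f=\prod_{j}f_{i_j}\cdot\sum_S(-1)^{|S|}=0$, so that every squared dCov summand — and hence $JdCov^2$ itself — vanishes.
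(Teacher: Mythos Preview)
Your proof is correct and follows essentially the same route as the paper's: both arguments first reduce to the vanishing of every $k$th order $dCov$ via positivity, then run an induction on subset size, using the identity $\Delta_{k+1}f\equiv 0$ together with the inductive hypothesis (all $\leq k$-subsets factor) to force the $(k+1)$-fold joint characteristic function to factor. The paper states this tersely (``the higher order effect can be written as a linear combination of the lower order effects, which vanish''), whereas you spell out the binomial bookkeeping $\sum_{s=0}^{k}\binom{k+1}{s}(-1)^{s}=(-1)^{k}$ and the continuity upgrade from a.e.\ to pointwise; these are exactly the details the paper leaves implicit, so the two proofs are substantively the same.
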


Next we show that by properly
choosing $C_i$s, $JdCov^2(X_1,\dots,X_d;C_2,\dots,C_d)$ has a
relatively simple expression, which does not require the evaluation of $2^d-d-1$
dCov terms in its original definition (\ref{def1}). Specifically, let $C_i=c^{d-i}$
for $c\geq 0$ in the definition of JdCov and denote $JdCov^2(X_1,\dots,X_d;c)=JdCov^2(X_1,\dots,X_d;c^{d-2},c^{d-1},\dots,1)$.
Then, we have the following result.
\begin{proposition}\label{prop2}
For any $c\geq 0$,
\begin{align*}
JdCov^2(X_1,\dots,X_d;c)=\E\left[\prod^{d}_{i=1}\left(
U_i(X_i,X'_i)+c\right)\right]-c^d.
\end{align*}
In particular,
$JdCov^2(X_1,X_2;c)=E[U_1(X_1,X_1')U_2(X_2,X_2')]=dCov^2(X_1,X_2)$.
\end{proposition}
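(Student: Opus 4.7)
The plan is to expand the right-hand side by the distributive law and identify each resulting expectation with a lower-order dCov via the moment representation (\ref{eqn2}). Specifically, I would first write
\[
\prod_{i=1}^{d}\bigl(U_i(X_i,X_i')+c\bigr)=\sum_{S\subseteq\{1,\dots,d\}}c^{\,d-|S|}\prod_{i\in S}U_i(X_i,X_i'),
\]
take expectations on both sides, and group the terms by $k=|S|$:
\[
\E\!\left[\prod_{i=1}^{d}\bigl(U_i(X_i,X_i')+c\bigr)\right]=c^{d}+\sum_{k=1}^{d}c^{\,d-k}\sum_{\{i_1,\dots,i_k\}\subseteq\{1,\dots,d\}}\E\!\left[\prod_{j=1}^{k}U_{i_j}(X_{i_j},X_{i_j}')\right].
\]

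Next, I would dispose of the singleton contributions by showing $\E[U_i(X_i,X_i')]=0$. From the defining formula $U_i(x,x')=\E|x-X_i'|+\E|X_i-x'|-|x-x'|-\E|X_i-X_i'|$, taking outer expectation over an independent pair $(X_i,X_i')$ produces four copies of $\E|X_i-X_i'|$ with signs $+,+,-,-$ that cancel; equivalently, the integral representation in Lemma \ref{lem1} gives the same conclusion since $\E[f_i(t)-e^{\I\langle t,X_i\rangle}]=0$. For $k\geq 2$, identity (\ref{eqn2}) applied to the subset $\{X_{i_1},\dots,X_{i_k}\}$ yields $\E[\prod_{j=1}^{k}U_{i_j}(X_{i_j},X_{i_j}')]=dCov^2(X_{i_1},\dots,X_{i_k})$, so matching the surviving terms with the definition of $JdCov^2$ under the parametrization $C_k=c^{\,d-k}$ and moving $c^{d}$ to the other side produces the stated identity. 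The special case $d=2$ then follows directly: $\E[(U_1+c)(U_2+c)]-c^{2}=\E[U_1U_2]+c\,\E[U_1]+c\,\E[U_2]=\E[U_1U_2]=dCov^2(X_1,X_2)$. There is no serious obstacle here — the argument is essentially a combinatorial bookkeeping exercise, and the only substantive inputs are the vanishing of $\E[U_i(X_i,X_i')]$ and the already-established moment formula (\ref{eqn2}).
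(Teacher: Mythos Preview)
Your proposal is correct and essentially identical to the paper's own proof: both expand $\prod_{i=1}^d(U_i+c)$ over subsets, invoke $\E[U_i(X_i,X_i')]=0$ to kill the singleton terms, and then identify the remaining expectations with the $k$th-order dCov via equation~(\ref{eqn2}) before matching against the definition of $JdCov^2$ with $C_k=c^{d-k}$. You have in fact spelled out the vanishing of $\E[U_i(X_i,X_i')]$ in more detail than the paper does.
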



By (\ref{def1}), the dependence measured by JdCov can be decomposed
into the main effect term\\ $\sum_{(i_1,i_2)\in
I^d_2}dCov^2(X_{i_1},X_{i_2})$ quantifying the pairwise
dependence as well as the higher order effect terms
$\sum_{(i_1,i_2,\dots,i_k)\in
I^d_k}dCov^2(X_{i_1},X_{i_2},\dots,X_{i_k})$ quantifying the
multi-way interaction dependence among any $k$-tuples.
The choice of $c$ reflects the relative importance of the main
effect and the higher order effects. For $c\geq 1$, $C_i=c^{d-i}$ is
nonincreasing in $i$. Thus, the larger $c$ we select, the smaller
weights we put on the higher order terms. In particular, we have
$$\lim_{c\rightarrow+\infty}c^{2-d}JdCov^2(X_1,\dots,X_d;c)=\sum_{(i_1,i_2)\in
I^d_2}dCov^2(X_{i_1},X_{i_2}),$$ that is JdCov reduces to the
main effect term as $c\rightarrow+\infty$. We remark that the main
effect term fully characterizes joint dependence in the case of
elliptical distribution and it has been recently used in Yao \textit{et al.}
(2018) to test mutual independence for high-dimensional data. On the
other hand, JdCov becomes the $d$th order dCov as $c\rightarrow 0$, i.e.,
$$\displaystyle \lim_{c\rightarrow
0}JdCov^2(X_1,\dots,X_d;c)=dCov^2(X_1,\dots,X_d).$$
The choice of $c$ depends on the types of interaction dependence of
interest as well as the specific scientific problem, and thus is
left for the user to decide.

It is worth noting that $JdCov^2(X_1,\dots,X_d;c)$ depends on
the scale of $X_i$. To obtain a scale-invariant metric, one can
normalize $U_i$ by the corresponding distance variance. Specifically, when $dCov(X_i):=dCov(X_i,X_i)>0$,
the resulting quantity is given by,
\begin{align*}
JdCov_S^2(X_1,\dots,X_d;c)=\E\left[\prod^{d}_{i=1}\left(
\frac{U_i(X_i,X'_i)}{dCov(X_i)}+c\right)\right]-c^d,
\end{align*}
which is scale-invariant. Another way to obtain a scale-invariant
metric is presented in Section \ref{sec:rank} based on the idea of
rank transformation.

Below we present some basic properties of JdCov, which follow
directly from Proposition \ref{prop0}.
\begin{proposition}\label{prop3}
We have the following properties regarding JdCov:
\begin{enumerate}
\item[(1)] For any $a_i\in\mathbb{R}^{p_i}$, $c_0\in\mathbb{R}$, and orthogonal transformations $A_i\in\mathbb{R}^{p_i\times p_i}$, $JdCov^2(a_1+c_0A_1X_1,\dots,a_d+c_0A_dX_d;|c_0| c)=|c_0|^{d}
JdCov^2(X_1,\dots,X_d;c).$ Moreover, JdCov is invariant to any permutation of $\{X_1,X_2,\dots,X_d\}$.
\item[(2)] For any $a_i\in\mathbb{R}^{p_i}$, $c_i\neq 0$, and orthogonal transformations $A_i\in\mathbb{R}^{p_i\times p_i}$,
$JdCov_S^2(a_1+c_1A_1X_1,\dots,a_d+c_dA_dX_d;c)=JdCov_S^2(X_1,\dots,X_d;c)$.
\end{enumerate}
\end{proposition}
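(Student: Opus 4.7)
The plan is to derive both properties mechanically from the scaling/invariance rules for $dCov^2$ already established in Proposition~\ref{prop0}, expanding the definition of $JdCov^2$ term by term and then exploiting the particular algebraic form of the weights.

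For property (1), I would start from the defining expansion
\[
JdCov^2(Y_1,\dots,Y_d;|c_0|c)=\sum_{k=2}^{d}(|c_0|c)^{d-k}\sum_{(i_1,\dots,i_k)\in I^d_k} dCov^2(Y_{i_1},\dots,Y_{i_k}),
\]
where $Y_i=a_i+c_0A_iX_i$. Property~(1) of Proposition~\ref{prop0}, applied to each $k$-tuple with the common scalar $c_0$ and the orthogonal transformations $A_{i_j}$, gives $dCov^2(Y_{i_1},\dots,Y_{i_k})=|c_0|^{k}\,dCov^2(X_{i_1},\dots,X_{i_k})$. Substituting this in produces the factor $(|c_0|c)^{d-k}\cdot|c_0|^{k}=|c_0|^{d}\,c^{d-k}$ in every term, and the $|c_0|^{d}$ then factors out uniformly across $k$, leaving exactly $|c_0|^{d}\,JdCov^2(X_1,\dots,X_d;c)$. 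Permutation invariance is immediate because $I^d_k$ is itself symmetric in its indices and each $dCov^2$ summand is permutation invariant by the same Proposition~\ref{prop0}.

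For property (2), I would work with the compact representation of $JdCov^2_S$ in terms of the $U_i$ kernels. Set again $Y_i=a_i+c_iA_iX_i$ with $c_i\neq 0$ and orthogonal $A_i$. Since $|A_i(X_i-X_i')|=|X_i-X_i'|$, translation cancels in every difference appearing in $U_i$, so a direct substitution in
\[
U_{Y_i}(y,y')=\E|y-Y_i'|+\E|Y_i-y'|-|y-y'|-\E|Y_i-Y_i'|
\]
yields $U_{Y_i}(Y_i,Y_i')=|c_i|\,U_i(X_i,X_i')$. Taking $d=2$ with $X_i=X_i$ in Proposition~\ref{prop0}(1) similarly gives $dCov(Y_i)=|c_i|\,dCov(X_i)$, so the ratio $U_{Y_i}(Y_i,Y_i')/dCov(Y_i)$ equals $U_i(X_i,X_i')/dCov(X_i)$ almost surely. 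Substituting this equality into
\[
JdCov_S^2(Y_1,\dots,Y_d;c)=\E\Bigl[\prod_{i=1}^{d}\bigl(U_{Y_i}(Y_i,Y_i')/dCov(Y_i)+c\bigr)\Bigr]-c^{d}
\]
collapses it to $JdCov_S^2(X_1,\dots,X_d;c)$.

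There is no real obstacle to this proof: both parts are bookkeeping once one recognizes that the weights $C_k=c^{d-k}$ in (1) absorb exactly the $|c_0|^{d-k}$ needed to complement the $|c_0|^{k}$ from the $k$th order $dCov^2$, and that in (2) the $|c_i|$ factors from $U_i$ and $dCov(X_i)$ cancel in the normalized ratio. The only place that warrants care is verifying that the translations $a_i$ drop out of $U_{Y_i}$ — this uses only that $U_i$ depends on the $Y_i$ through pairwise differences — and that the orthogonality of $A_i$ (not merely its invertibility) is what preserves the Euclidean norms in those differences.
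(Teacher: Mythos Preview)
Your proposal is correct and matches the paper's approach: the paper gives no detailed proof of this proposition beyond remarking that both parts ``follow directly from Proposition~\ref{prop0},'' and what you have written is precisely the routine unpacking of that remark (via the term-by-term expansion for (1) and the kernel/ratio argument for (2)). There is nothing to add or correct.
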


\begin{remark}\label{choice_c}
\rm
A natural question to ask is what should be a data driven way to choose the tuning parameter $c$. Although we leave it for future research, here we present a heuristic idea of choosing $c$. In the discussion below Proposition \ref{prop2}, we pointed out that choosing $c > 1$ (or $< 1$) puts lesser (or higher) weightage on the higher order effects. Note that if the data is Gaussian, testing for the mutual independence of $\{X_1,\dots,X_d\}$ is equivalent to testing for their pairwise independences. In that case, intuitively one should choose a larger ($>1$) value of $c$. If, however, the data is non-Gaussian, it might be of interest to look into higher order dependencies and thus a smaller ($<1$) choice of $c$ makes sense.

To summarize, a heuristic way to choose the tuning parameter $c$ could be :
\begin{equation}
\text{Choose c} \;\;
\begin{cases}
>1 , & \text{if} \; \{X_1,\dots,X_d\}\; \text{are jointly  Gaussian}\\
<1 , & \text{if} \; \{X_1,\dots,X_d\}\; \text{are not jointly Gaussian}.
\end{cases}
\end{equation}

There is a huge literature on testing for joint normality of random vectors. It has been shown that the test based on energy distance is consistent against fixed alternatives (Sz\'{e}kely and Rizzo, 2004) and shows higher empirical power compared to several competing tests (Sz\'{e}kely and Rizzo, 2005; 2013). Suppose $p$ is the p-value of the energy distance based test for joint normality of $\{X_1,\dots,X_d\}$ at level $\alpha$. We expect $c$ to increase (or decrease) from $1$ as $p >$ (or $<$) $\alpha$, so one heuristic choice of $c$ can be
\begin{equation}
c = 1 + \text{sign}(p-\alpha) \times |p-\alpha|^{1/4} \; ,
\end{equation}
where $\text{sign}(x) = 1, 0 \;or -1$ depending on whether $x>0, x=0 \;or\; x<0$. For example, $p=(0.001, 0.03, 0.0499, 0.0501, 0.1, 0.3)$ and $\alpha = 0.05$ yields $c=(0.53, 0.62, 0.9, 1.1, 1.47, 1.71)$.
\end{remark}

\subsection{Distance cumulant and distance characteristic function}\label{sec:cum}
As noted in Streitberg (1990), for $d\geq 4$, the Lancaster interaction measure fails to capture
all possible factorizations of the joint distribution. For example,
it may not vanish if $(X_1,X_2)\independent (X_3,X_4).$
Streitberg (1990) corrected the definition of Lancaster interaction measure
using a more complicated construction, which essentially corresponds
to the cumulant version of dCov in our context. Specifically, Streitberg (1990) proposed a corrected version of Lancaster interaction as follows
\begin{align*}
\widetilde{\Delta}F \, =& \, \displaystyle \sum_{\pi} (-1)^{|\pi| - 1}(|\pi| - 1)!\prod_{D\in \pi}F_D,
\end{align*}
where $\pi$ is a partition of the set \{1,2,\dots,d\}, $|\pi|$ denotes the number of blocks of the partition $\pi$ and $F_D$ denotes the joint distribution of $\{X_i:i\in D\}$.
It has been shown in Streitberg (1990) that $\widetilde{\Delta}F=0$ whenever $F$ is decomposable.
Our definition of joint distance cumulant of $\{X_1,\dots,X_d\}$ below can be viewed as the dCov version of Streitberg's correction.


\begin{definition}
The joint distance cumulant among $\{X_1,\dots,X_d\}$ is defined as
\begin{align}\label{cumulant}
\text{cum}(X_1,\dots,X_d)=\sum_{\pi}(-1)^{|\pi|-1}(|\pi|-1)!\prod_{D\in\pi}\E\left(\prod_{i\in D}U_i(X_i,X_i')\right),
\end{align}
where $\pi$ runs through all partitions of $\{1,2,\dots,d\}$.
\end{definition}
It is not hard to verify that $\text{cum}(X_1,\dots,X_d)=0$
if $\{X_1,\dots,X_d\}$ can be decomposed into two mutually independent
groups say $(X_i)_{i\in \pi_1}$ and $(X_j)_{j\in \pi_2}$ with
$\pi_1$ and $\pi_2$ being a partition of $\{1,2,\dots,d\}$. We further
define the distance characteristic function.

\begin{definition}
The joint distance characteristic function among $\{X_1,\dots,X_d\}$ is
defined as
\begin{equation}
dcf(t_1,\dots,t_d)=\E\left[\exp\left(\I\sum^{d}_{i=1}t_iU_i(X_i,X'_i)\right)\right],
\end{equation}
for $t_1,\dots,t_d\in\mathbb{R}$.
\end{definition}

The following result shows that distance cumulant can be interpreted as the coefficient of the Taylor expansion of the log distance characteristic function.
\begin{proposition}
The joint distance cumulant $cum(X_{i_1},\dots,X_{i_s})$ is given by the
coefficient of $\I^{s}\prod^{s}_{k=1}t_{i_k}$ in the Taylor
expansion of $\log\left\{dcf(t_1,\dots,t_d)\right\}$, where
$\{i_1,\dots,i_s\}$ is any subset of $\{1,2,\dots,d\}$ with $s\leq d.$
\end{proposition}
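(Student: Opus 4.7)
The plan is to reduce the claim to the classical relationship between joint cumulants, joint moments, and the log characteristic function, applied to a particular random vector built from the $U_i$s.

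First I would observe that $dcf$ is nothing other than the joint characteristic function of the random vector
\[
Y := (Y_1,\dots,Y_d), \qquad Y_i := U_i(X_i, X_i'),
\]
where the expectation defining $dcf$ is taken jointly over $\{X_1,\dots,X_d\}$ and its independent copy $\{X_1',\dots,X_d'\}$. Indeed, since $U_i(X_i,X_i')$ is a bona fide $\mathbb{R}$-valued random variable for each $i$, we have $dcf(t_1,\dots,t_d) = \mathbb{E}\bigl[\exp\bigl(\I \sum_{i=1}^d t_i Y_i\bigr)\bigr] = \phi_Y(t_1,\dots,t_d)$. Note also that the joint moments of $Y$ that we will encounter are exactly the quantities appearing inside the partition product of the distance cumulant definition, namely $\mathbb{E}\bigl[\prod_{i\in D} Y_i\bigr] = \mathbb{E}\bigl[\prod_{i\in D} U_i(X_i,X_i')\bigr]$ for any $D\subseteq\{1,\dots,d\}$.

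Next I would invoke the classical Leonov--Shiryaev (moment--cumulant) identity, which states that for any random vector $Y$ whose entries possess sufficient moments, the multivariate joint cumulant of distinct entries $Y_{i_1},\dots,Y_{i_s}$ admits the representation
\[
\kappa(Y_{i_1},\dots,Y_{i_s}) \;=\; \sum_{\pi} (-1)^{|\pi|-1}(|\pi|-1)!\prod_{D\in\pi}\mathbb{E}\Bigl[\prod_{i\in D} Y_i\Bigr],
\]
where $\pi$ runs over partitions of $\{i_1,\dots,i_s\}$, and that this same joint cumulant is precisely the coefficient of $\I^{s}\prod_{k=1}^{s}t_{i_k}$ in the Taylor expansion of $\log \phi_Y(t_1,\dots,t_d)$ around the origin. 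Substituting $Y_i = U_i(X_i,X_i')$ on the right-hand side yields exactly the definition of $\mathrm{cum}(X_{i_1},\dots,X_{i_s})$ given in (\ref{cumulant}), while the left-hand side is, by the previous paragraph, the coefficient of $\I^{s}\prod_{k=1}^{s}t_{i_k}$ in the Taylor expansion of $\log\{dcf(t_1,\dots,t_d)\}$. Matching the two completes the identification.

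The only real technical obstacle is justifying the formal manipulation of the Taylor series: one needs the joint moments of $Y$ to exist up to order $d$ so that $\log\phi_Y$ admits a legitimate Taylor expansion near the origin up to that order. Under the moment condition (Assumption \ref{ass1}) already invoked in Proposition \ref{prop0}(2), the variables $U_i(X_i,X_i')$ possess the requisite moments, so the exp--log inversion of formal power series is valid as an identity of polynomials in $(t_1,\dots,t_d)$ up to the relevant order. Once this is noted, the proof reduces to the combinatorial identity (derivable by Möbius inversion on the partition lattice, or equivalently by matching coefficients in $\exp(\log \phi_Y) = \phi_Y$) that underlies the classical multivariate cumulant--moment relation, and nothing distance-specific is required beyond the identification in the first paragraph.
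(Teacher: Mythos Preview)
Your approach is correct, and in fact the paper does not supply a proof of this proposition either in the main text or in the supplementary material; it is stated without argument, presumably because it is a direct transcription of the classical multivariate cumulant identity. Your proof makes explicit precisely what the authors leave implicit: you recognize that $dcf$ is nothing but the ordinary characteristic function of the random vector $Y=(U_1(X_1,X_1'),\dots,U_d(X_d,X_d'))$, that the definition of $\mathrm{cum}(X_{i_1},\dots,X_{i_s})$ in (\ref{cumulant}) is exactly the Leonov--Shiryaev moment-to-cumulant formula applied to the distinct entries $Y_{i_1},\dots,Y_{i_s}$, and that the classical theory then identifies this cumulant with the coefficient of $\I^{s}\prod_{k=1}^{s}t_{i_k}$ in the Taylor expansion of $\log\phi_Y$. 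Your remark about needing Assumption~\ref{ass1} so that the relevant mixed moments of $Y$ exist and the Taylor expansion is legitimate is a correct and necessary caveat that the paper itself does not spell out.
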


Our next result indicates that the mutual independence among
$\{X_1,\dots,X_d\}$ is equivalent to the mutual independence among
$\{U_1(X_1,X_1'),\dots,U_d(X_d,X_d')\}$.
\begin{proposition}\label{prop-cha}
The random variables $\{X_1,\dots,X_d\}$
are mutually independent if and only if\\ $dcf(t_1,\dots,t_d)=\prod_{i=1}^{d}dcf(t_i)$ for $t_i$ almost everywhere, where $dcf(t_i)=\E[\exp\{\I t_iU_i(X_i,X_i')\}]$.
\end{proposition}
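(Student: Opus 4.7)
The plan is to prove the two implications of this biconditional separately, routing the harder direction through Proposition \ref{prop1}. For the forward direction, I would first argue that if $\{X_1,\dots,X_d\}$ are mutually independent, then the independent copy $\{X_1',\dots,X_d'\}$ is also mutually independent, and combined with $\mathcal{X}\independent\mathcal{X}'$ one checks that the $d$ pairs $(X_1,X_1'),\dots,(X_d,X_d')$ are mutually independent as two-component random vectors. Since $U_i(X_i,X_i')$ is a measurable function of the $i$-th pair alone, the transformed random variables $\{U_i(X_i,X_i')\}_{i=1}^d$ are then mutually independent, so their joint characteristic function factorizes: $dcf(t_1,\dots,t_d)=\prod_{i=1}^d dcf(t_i)$.

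For the reverse direction, suppose $dcf(t_1,\dots,t_d)=\prod_{i=1}^{d}dcf(t_i)$ almost everywhere; by continuity of characteristic functions this equality extends to all of $\mathbb{R}^d$, so $\{U_i(X_i,X_i')\}_{i=1}^d$ are mutually independent. The linchpin observation is that $\E[U_i(X_i,X_i')]=0$ for each $i$, which follows by direct computation from the definition $U_i(x,x')=\E|x-X_i'|+\E|X_i-x'|-|x-x'|-\E|X_i-X_i'|$, since all four resulting terms equal $\E|X_i-X_i'|$ with signs $(+,+,-,-)$. By mutual independence, for every subset $S=\{i_1,\dots,i_k\}\subseteq\{1,\dots,d\}$ with $|S|\geq 2$,
\begin{equation*}
\E\Bigl[\prod_{j\in S}U_j(X_j,X_j')\Bigr]=\prod_{j\in S}\E[U_j(X_j,X_j')]=0.
\end{equation*}
By the representation \eqref{eqn2} applied to the sub-collection indexed by $S$, this is precisely $dCov^2(X_{i_1},\dots,X_{i_k})=0$. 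Hence every pairwise and higher-order distance covariance appearing in the defining sum \eqref{def1} of $JdCov^2$ vanishes.

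Substituting into \eqref{def1} with any strictly positive weights $C_2,\dots,C_d$ then gives $JdCov^2(X_1,\dots,X_d;C_2,\dots,C_d)=0$, and Proposition \ref{prop1} immediately delivers mutual independence of $\{X_1,\dots,X_d\}$. The main conceptual point to highlight is that although the $d$th-order dCov alone does not characterize mutual independence (as the discussion following Definition 2.1 emphasizes), the joint characteristic-function factorization of $\{U_i(X_i,X_i')\}$ forces \emph{all} subset distance covariances to vanish simultaneously, and Proposition \ref{prop1} converts this aggregate vanishing into the desired conclusion. No substantial obstacle arises beyond this bookkeeping, provided we have the moment condition (implicit in the existence of $U_i$) needed to justify taking the relevant expectations.
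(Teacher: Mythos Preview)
Your proposal is correct and follows essentially the same route as the paper's own proof: for the harder ``if'' direction, both you and the paper deduce mutual independence of $\{U_i(X_i,X_i')\}$ from the factorization of $dcf$, use $\E[U_i(X_i,X_i')]=0$ to conclude $JdCov^2=0$, and then invoke Proposition~\ref{prop1}. The paper's version is terser (it simply asserts ``it is easy to see that $JdCov^2(X_1,\dots,X_d;c)=0$''), whereas you make explicit that every subset term in \eqref{def1} vanishes; the paper also omits the ``only if'' direction as trivial, which you spell out correctly.
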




\subsection{Rank-based metrics}\label{sec:rank} In this subsection, we briefly discuss
the concept of rank-based distance measures. For simplicity, we
assume that $X_i$s are all univariate and remark that our definition can be generalized to the case where $X_i$s are random vectors without essential difficulty.
The basic idea here is to apply the monotonic transformation based on the marginal
distribution functions to each $X_j$, and then use the
dCov or JdCov to quantify the interaction and joint
dependence of the coordinates after transformation. Therefore it can be
viewed as the counterpart of Spearman's rho to dCov or JdCov.

Let $F_j$ be the marginal distribution function for $X_j$. The
squared rank dCov and JdCov among $\{X_1,\dots,X_d\}$ are defined
respectively as
\begin{align*}
&
dCov^2_R(X_1,\dots,X_d)=dCov^2(F_1(X_1),\dots,F_d(X_d)),\\
&
JdCov_R^2(X_1,\dots,X_d;c)=JdCov^2(F_1(X_1),\dots,F_d(X_d);c).
\end{align*}
The rank-based dependence metrics enjoy a few appealing features:
(1) they are invariant to monotonic component wise transformations;
(2) they are more robust to outliers and heavy tail of the
distribution; (3) their existence require very weak moment
assumption on the components of $\mathcal{X}$. In Section \ref{sec:num}, we shall compare the finite sample performance of $\text{JdCov}^2_R$ with that of JdCov and $\text{JdCov}_S$.\\

\begin{table}[!h]
\centering
\caption*{{\bf Comparison of various distance metrics for measuring joint dependence of $d\geq 2$ random vectors of arbitrary dimensions :}}
\begin{tabular}{|c|c|c|c| }
\toprule
Distance metrics & Complete characterization   & Permutation  & Scale  \\
 & of joint independence & invariance & invariance \\
\hline
\multirow{1}{*}{dHSIC}& \checkmark & \checkmark & $\pmb{ \times}$ (for fixed bandwidth) \\
\hline
\multirow{1}{*}{$T_{MT}$} & \checkmark   & $\pmb{ \times}$ & $\pmb{ \times}$ \\
\hline
\multirow{1}{*}{High order $dCov$}& $\pmb{ \times}$ (Captures Lancaster interactions) & \checkmark & $\pmb{ \times}$ \\
\hline
\multirow{1}{*}{$JdCov$}& \checkmark  & \checkmark & $\pmb{ \times}$ \\
\hline
\multirow{1}{*}{$JdCov_S$}& \checkmark & \checkmark & \checkmark \\
\hline
\multirow{1}{*}{$JdCov_R$}& \checkmark & \checkmark & \checkmark \\

\toprule

\end{tabular}
\end{table}

\section{Estimation}\label{sec:est}
We now turn to the estimation of the joint dependence metrics.
Given $n$ samples $\{\X_j\}^{n}_{j=1}$ with $\X_j=(X_{j1},\dots,X_{jd})$, we consider
the plug-in estimators based on the V-statistics as well as their bias-corrected versions to be described below.
Denote by $\hat{f}_i(t_i)=n^{-1}\sum_{j=1}^ne^{\imath \langle t_i,X_{ji}\rangle}$ the empirical characteristic function for $X_i$.

\subsection{Plug-in estimators}\label{est}
For $1\leq k,l\leq n$, let
$\widehat{U}_{i}(k,l)=n^{-1}\sum_{v=1}^{n}|X_{ki}-X_{vi}|+n^{-1}\sum_{u=1}^{n}|X_{ui}-X_{li}|-|X_{ki}-X_{li}|-n^{-2}\sum_{u,v=1}^{n}|X_{ui}-X_{vi}|$ be the sample estimate of $U_i(X_{ki},X_{li})$.
The V-statistic type estimators for dCov, JdCov and its scale-invariant version are defined
respectively as,
\begin{align}
&\widehat{dCov^2}(X_1,\dots,X_d)=\frac{1}{n^2}\sum^{n}_{k,l=1}\prod_{i=1}^{d}\widehat{U}_{i}(k,l)^2,
\\&\widehat{JdCov^2}(X_1,\dots,X_d;c))=\frac{1}{n^2}\sum_{k,l=1}^{n}\prod_{i=1}^{d}\left(\widehat{U}_{i}(k,l)+c\right)-c^d,
\label{eq-bias1}
\\&\widehat{JdCov^2_S}(X_1,\dots,X_d;c)=\frac{1}{n^2}\sum_{k,l=1}^{n}\prod_{i=1}^{d}\left(\frac{\widehat{U}_{i}(k,l)}{\widehat{dCov}(X_i)}+c\right)-c^d, \label{eq-bias2}
\end{align}
where
$\widehat{dCov^2}(X_i)=n^{-2}\sum_{k,l=1}^{n}\widehat{U}_{i}(k,l)^2$
is the sample (squared) dCov. The following lemma shows that the V-statistic type estimators are equivalent to the plug-in estimators by replacing the characteristic functions and the expectation in the definitions of dCov and JdCov with their sample counterparts.
\begin{lemma}\label{lemma2}
The sample (squared) dCov can be rewritten as,
\begin{align}\label{eq-lemma2}
\widehat{dCov^2}(X_1,\dots,X_d)=\int_{\mathbb{R}^{p_0}}\left|\frac{1}{n}\sum^{n}_{k=1}\left[\prod_{i=1}^{d}(\hat{f}_i(t_i)-e^{\I\langle
t_i,X_{ki}\rangle})\right]\right|^2dw.
\end{align}
Moreover, we have
\begin{equation}\label{eq2-lemma2}
\begin{split}
&\widehat{JdCov^2}(X_1,\dots,X_d;c)
\\=&c^{d-2}\sum_{(i_1,i_2)\in
I^d_2}\widehat{dCov^2}(X_{i_1},X_{i_2})+c^{d-3}\sum_{(i_1,i_2,i_3)\in
I^d_3}\widehat{dCov^2}(X_{i_1},X_{i_2},X_{i_3})
\\&+\cdots+\widehat{dCov^2}(X_1,\dots,X_d).
\end{split}
\end{equation}
\end{lemma}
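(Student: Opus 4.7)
\textbf{Proof proposal for Lemma \ref{lemma2}.}

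For the first identity, the plan is to re-run the population argument that led to (\ref{eqn2}), but with the empirical measure replacing the true law. Concretely, Lemma \ref{lem1} is purely distributional: if one applies it to the discrete probability measure that places mass $1/n$ on each of $X_{1i},\dots,X_{ni}$, then the characteristic function becomes $\hat f_i$ and the function $U_i(x,x')$ becomes precisely $\widehat U_i(k,l)$ when $x=X_{ki}$ and $x'=X_{li}$. Hence
\[
\widehat U_i(k,l)=\int_{\mathbb{R}^{p_i}}(\hat f_i(t)-e^{\I\langle t,X_{ki}\rangle})\overline{(\hat f_i(t)-e^{\I\langle t,X_{li}\rangle})}\,w_{p_i}(t)\,dt.
\]
Starting from the RHS of (\ref{eq-lemma2}), I would expand $|z|^2=z\bar z$, multiply out the two sums over $k$ and (an independent dummy) $\ell$, and use Fubini/Tonelli to push the sums outside the integrals; integrability is inherited from the population case under Assumption \ref{ass1}. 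The inner integrals decouple across $i$ by the product structure of $dw$, so each factor equals $\widehat U_i(k,l)$ by the display above. Collecting terms yields $n^{-2}\sum_{k,l}\prod_{i=1}^d\widehat U_i(k,l)$, which by the definition in Section \ref{est} is $\widehat{dCov^2}(X_1,\dots,X_d)$.

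For the second identity, the plan is to do a multinomial expansion of $\prod_{i=1}^d(\widehat U_i(k,l)+c)$ and then sum over $k,l$. Specifically,
\[
\prod_{i=1}^d\bigl(\widehat U_i(k,l)+c\bigr)=\sum_{S\subseteq\{1,\dots,d\}}c^{\,d-|S|}\prod_{i\in S}\widehat U_i(k,l),
\]
so that
\[
\frac{1}{n^2}\sum_{k,l=1}^n\prod_{i=1}^d\bigl(\widehat U_i(k,l)+c\bigr)-c^d=\sum_{|S|\ge 1}c^{\,d-|S|}\cdot\frac{1}{n^2}\sum_{k,l=1}^n\prod_{i\in S}\widehat U_i(k,l).
\]
The key algebraic lemma is that $\widehat U_i(\cdot,\cdot)$ is doubly centered: a direct computation from the definition gives $\sum_{l=1}^n\widehat U_i(k,l)=\sum_{k=1}^n\widehat U_i(k,l)=0$, so the $|S|=1$ contributions vanish. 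For $|S|\ge 2$, each inner double sum is, by the V-statistic definition in Section \ref{est}, exactly $\widehat{dCov^2}(\{X_i\}_{i\in S})$. Regrouping by $|S|=k$ and matching with the indexing convention for $I^d_k$ recovers (\ref{eq2-lemma2}).

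The only nonroutine step is the first one: legitimizing the interchange of $\sum_{k}$ and the integral, and then recognizing the inner integral as $\widehat U_i(k,l)$ via an empirical-measure application of Lemma \ref{lem1}. Once that observation is made, both claims reduce to bookkeeping — part (i) through expanding the squared modulus and separating variables across the product integrand, and part (ii) through multinomial expansion combined with the centering identity $\sum_l\widehat U_i(k,l)=0$. I do not foresee a genuine technical obstacle beyond verifying the Fubini hypothesis, which follows from the finite-sample finiteness of all quantities involved.
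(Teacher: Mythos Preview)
Your proposal is correct and follows essentially the same route as the paper. For the first identity the paper expands the squared modulus, splits the product integral, and then computes each factor $\int(\hat f_i(t_i)-e^{\I\langle t_i,X_{ki}\rangle})(\hat f_i(-t_i)-e^{-\I\langle t_i,X_{li}\rangle})w_{p_i}(t_i)\,dt_i=\widehat U_i(k,l)$ by a direct calculation using (2.11) of Sz\'ekely et al.\ (2007); your framing of this as ``Lemma~\ref{lem1} applied to the empirical distribution'' is just a cleaner way to state the same computation. For the second identity the paper simply says it follows from (\ref{eq-lemma2}) and the definition, so your explicit multinomial expansion together with the double-centering identity $\sum_{l}\widehat U_i(k,l)=0$ is exactly the unpacking that is implicit there. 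One small correction: the Fubini step does not rely on Assumption~\ref{ass1}; for the sample version the integrand in each cross term behaves like $O(|t_i|^2)$ near $t_i=0$ (since $\hat f_i(t_i)-e^{\I\langle t_i,X_{ki}\rangle}=O(|t_i|)$) and is bounded at infinity, so absolute integrability against $w_{p_i}$ is automatic regardless of moment assumptions.
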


\begin{remark}
{\rm Consider the univariate case where $p_i=1$ for all $1\leq i\leq d.$
Let $\widehat{F}_i$ be the empirical distribution based on
$\{X_{ji}\}^{n}_{j=1}$ and define $Z_{ji}=\widehat{F}_i(X_{ji})$.
Then, the rank-based metrics defined in Section \ref{sec:rank} can be estimated in a similar way by
replacing $X_{ji}$ with $Z_{ji}$ in the definitions of the above
estimators. }
\end{remark}

\begin{remark}
{\rm
The distance cumulant can be estimated by
$$\widehat{\text{cum}}(X_1,\dots,X_d)=\sum_{\pi}(-1)^{|\pi|-1}(|\pi|-1)!\prod_{D\in\pi}\left\{\frac{1}{n^2}\sum_{k,l=1}^{n}\left(\prod_{i\in D}\widehat{U}_i(k,l)\right)\right\}.$$
However, the combinatorial nature of distance cumulant implies that detecting interactions of higher order requires significantly more costly
computation.
}
\end{remark}


We study the asymptotic properties of the V-statistic type
estimators under suitable moment assumptions.
\begin{assumption}\label{ass1}
Suppose for any subset $S$ of $\{1,2,\dots,d\}$ with $|S|\geq 2$, there exists a partition $S=S_1\cup S_2$ such that $\E \prod_{i\in S_{1}}|X_i|<\infty$ and $\E \prod_{i\in S_{2}}|X_i|<\infty$.
\end{assumption}

\begin{proposition}\label{prop5}
Under Assumption \ref{ass1} 
, we have as $n \to \infty$,
\begin{align*}
&\widehat{dCov^2}(X_1, \cdots , X_d) \overset{a.s}{\longrightarrow} dCov^2(X_1, \cdots , X_d),\\
&\widehat{JdCov^2}(X_1, \cdots , X_d; c) \overset{a.s}{\longrightarrow} JdCov^2(X_1,\dots,X_d ;c),\\
&\widehat{JdCov^2_S}(X_1, \cdots , X_d; c)
\overset{a.s}{\longrightarrow} JdCov^2_S(X_1,\dots,X_d ;c),
\end{align*}
where $``\overset{a.s}{\longrightarrow}"$ denotes the almost sure convergence.
\end{proposition}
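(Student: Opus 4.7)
The plan is to rewrite each estimator as a finite signed linear combination of ordinary V-statistics, verify integrability of their kernels from Assumption \ref{ass1}, and invoke Hoeffding's strong law for V-statistics.

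First I would handle $\widehat{dCov^2}(X_1,\ldots,X_d)$. Each $\widehat{U}_i(k,l)$ is a linear combination of four empirical averages of Euclidean distances. Distributing the product $\prod_{i=1}^d \widehat{U}_i(k,l)$ and merging with the outer $n^{-2}\sum_{k,l}$ rewrites the estimator as a finite, signed linear combination of V-statistics of the form $n^{-m}\sum_{(j_1,\ldots,j_m)} h(\X_{j_1},\ldots,\X_{j_m})$ with $m\leq 2d+2$, each kernel $h$ being a product of $d$ Euclidean distances $\prod_{i=1}^d |X_{a_i,i}-X_{b_i,i}|$ for some pattern of sample labels $(a_i,b_i)$. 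The same telescoping algebra that produces (\ref{eqn2}) shows that the signed sum of the population expectations of these kernels is exactly $\E[\prod_i U_i(X_i,X_i')] = dCov^2(X_1,\ldots,X_d)$.

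Second I would verify $\E|h(\X_1,\ldots,\X_m)|<\infty$ for every kernel in the expansion. Bounding $|X_{a,i}-X_{b,i}|\leq |X_{a,i}|+|X_{b,i}|$ and expanding reduces the task to controlling $\E\prod_{i\in S}|X_{\alpha(i),i}|$ for arbitrary $S\subseteq\{1,\ldots,d\}$ and an arbitrary index map $\alpha(\cdot)$. Assumption \ref{ass1} furnishes a partition $S=S_1\cup S_2$ with $\E\prod_{i\in S_j}|X_i|<\infty$ for $j=1,2$. Applying Cauchy-Schwarz to the two groups, and using independence of samples carrying distinct sample labels, controls every such mixed moment. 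Hoeffding's SLLN for V-statistics (e.g.\ Serfling, 1980, Section 6.4) then gives almost sure convergence of each summand to its population mean, and assembling the linear combination yields the first claim.

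Third, the convergences for $\widehat{JdCov^2}$ and $\widehat{JdCov^2_S}$ follow by continuity. By Lemma \ref{lemma2}, $\widehat{JdCov^2}(X_1,\ldots,X_d;c)$ is a fixed polynomial in $c$ whose coefficients are sample $dCov^2$-terms of every order $k\in\{2,\ldots,d\}$, each handled by the preceding steps, so the second claim follows. For $\widehat{JdCov^2_S}$, the $d=2$ special case applied to $(X_i,X_i)$ gives $\widehat{dCov^2}(X_i)\overset{a.s.}{\longrightarrow}dCov^2(X_i)>0$, so the denominators are eventually bounded away from zero almost surely, and the continuous mapping theorem applied to the polynomial in the $\widehat{U}_i$-averages and the $\widehat{dCov}(X_i)$'s closes the argument. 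The principal obstacle I foresee is the bookkeeping in the second step: after full expansion the kernels $h$ involve, for each coordinate $i$, a distance between two samples whose labels may coincide or differ across coordinates in intricate ways, and establishing integrability uniformly across this zoo of kernels requires a careful accounting of how the sample indices are shared, driven by the partition in Assumption \ref{ass1}. After that, the rest is a routine application of Hoeffding's SLLN and continuity.
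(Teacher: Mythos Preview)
Your strategy differs from the paper's: the paper works with the characteristic--function representation $\widehat{dCov^2}=\int|\xi_n|^2\,dw$ from Lemma \ref{lemma2}, truncates to $D(\delta)$, applies DCT on the bounded region, and controls the tails via Cauchy--Schwarz applied to the empirical sums (not to expectations). Your route via a V-statistic expansion is natural, but there is a genuine gap in the integrability step that makes it fail under Assumption \ref{ass1} as stated.

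The problem is that Assumption \ref{ass1} guarantees, for each $S$, only the existence of \emph{one} partition $S=S_1\cup S_2$ with $\E\prod_{i\in S_j}|X_i|<\infty$; it does \emph{not} give $\E\prod_{i\in T}|X_i|<\infty$ for arbitrary $T$. After you expand $\prod_i\widehat{U}_i(k,l)$, a typical kernel is $h(\X_a,\X_b)=\prod_{i=1}^d |X_{a,i}-X_{b,i}|$, and bounding by triangle inequality forces you to control $\E\prod_{i\in T}|X_i|$ for \emph{every} $T$, which Assumption \ref{ass1} does not deliver. Your suggested ``Cauchy--Schwarz on the two groups'' at the level of expectations would require second moments of $\prod_{i\in S_j}|X_i|$, which is strictly stronger than what is assumed. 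Concretely, take $d=2$ and $X_1=X_2=Y$ with $\E|Y|<\infty$ but $\E Y^2=\infty$: Assumption \ref{ass1} holds, yet the single term
\[
\frac{1}{n^2}\sum_{k,l}|X_{k1}-X_{l1}||X_{k2}-X_{l2}|=\frac{1}{n^2}\sum_{k,l}|Y_k-Y_l|^2=\frac{2}{n}\sum_k Y_k^2-2\Bigl(\frac{1}{n}\sum_k Y_k\Bigr)^2\to\infty\quad\text{a.s.},
\]
so Hoeffding's SLLN cannot be applied term by term; the convergence of $\widehat{dCov^2}$ relies on cancellation among divergent pieces. The paper sidesteps this by applying Cauchy--Schwarz to the sums $\bigl|\frac{1}{n}\sum_k u_k^{i_1}\cdots u_k^{i_q}\bigr|^2\le\bigl(\frac{1}{n}\sum_k\prod_{S_1}|u_k^i|^2\bigr)\bigl(\frac{1}{n}\sum_k\prod_{S_2}|u_k^i|^2\bigr)$, choosing $S_1,S_2$ to be exactly the partition supplied by Assumption \ref{ass1}; the resulting factors then involve only $\E\prod_{i\in S_j}|X_i|$ and are handled by the ordinary SLLN. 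Your expansion would work under the stronger hypothesis $\E\prod_{i\in S}|X_i|<\infty$ for all $S$, but not under Assumption \ref{ass1} alone.
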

When $d=2$, Assumption \ref{prop5} reduces to the condition that $\E|X_1|<\infty$ and $\E|X_2|<\infty$ in Theorem 2 of Sz\'{e}kely et al. (2007).
Suppose $X_i$s are mutually independent. Then Assumption \ref{prop5} is fulfilled provided that $\E|X_i|<\infty$ for all $i.$
More generally, if $E|X_i|^{\lfloor (d+1)/2 \rfloor}<\infty$ for $1\leq i\leq d$, then Assumption \ref{prop5} is satisfied.

Let $\Gamma(\cdot)$ denote a complex-valued zero mean Gaussian
random process with the covariance function $R(t,t')=\prod_{i=1}^d
\big(f_i(t_i - t_{i}') - f_i(t_i) f_i (-t_{i}')\big)$, where
$t=(t_1, t_2,\dots,t_d), t'=(t_{1}',t_{2}',\dots,t_{d}') \in
\mathbb{R}^{p_1} \times \mathbb{R}^{p_2} \times \cdots \times
\mathbb{R}^{p_d}$.

\begin{proposition}\label{prop6}
Suppose $X_1, X_2,\dots,X_d$ are mutually independent, and $\E|X_i|<\infty$
for $1\leq i\leq d.$ Then we have
\[n\widehat{dcov^2}(X_1, X_2, \cdots
, X_d) \overset{d}{\longrightarrow} \Arrowvert
\Gamma\Arrowvert^2=\sum^{+\infty}_{j=1}\lambda_jZ_j^2,\] where
$||\Gamma||^2=\int \Gamma(t_1,t_2,\dots,t_d)^2 dw$, $Z_j \overset{i.i.d}{\sim}
N(0,1)$ and $\lambda_j>0$ depends on the distribution of
$\mathcal{X}$. As a consequence, we have
$$n\widehat{Jdcov^2}(X_1, X_2,\cdots
, X_d ;c) \overset{d}{\longrightarrow}
\sum^{+\infty}_{j=1}\lambda_j'Z_j^2,$$ with $\lambda_j'>0$ and $Z_j\overset{i.i.d}{\sim}
N(0,1)$.
\end{proposition}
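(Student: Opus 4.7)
The plan is to combine the integral representation in Lemma~\ref{lemma2} with a central limit theorem for random elements of the Hilbert space $L^2(dw)$, and then apply the continuous mapping theorem to the quadratic functional $u\mapsto\|u\|^2_{L^2(dw)}$. By Lemma~\ref{lemma2},
$$n\,\widehat{dCov^2}(X_1,\ldots,X_d)=\int_{\mathbb{R}^{p_0}}|\sqrt{n}\,\xi_n(t)|^2\,dw,\qquad \xi_n(t)=\frac{1}{n}\sum_{k=1}^n\prod_{i=1}^d\bigl(\hat f_i(t_i)-e^{\I\langle t_i,X_{ki}\rangle}\bigr).$$
Setting $\tilde Y_{ki}(t_i)=e^{\I\langle t_i,X_{ki}\rangle}-f_i(t_i)$ and $\bar{\tilde Y}_i(t_i)=n^{-1}\sum_k\tilde Y_{ki}(t_i)$, the identity $\hat f_i(t_i)-e^{\I\langle t_i,X_{ki}\rangle}=\bar{\tilde Y}_i(t_i)-\tilde Y_{ki}(t_i)$ expands the product in $\xi_n$ into $2^d$ subset terms. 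Since mutual independence implies $\E[\tilde Y_{ki}(t_i)]=0$ and hence $\bar{\tilde Y}_i=O_p(n^{-1/2})$, for any nonempty $S\subseteq\{1,\ldots,d\}$ one has $\prod_{i\in S}\bar{\tilde Y}_i=O_p(n^{-|S|/2})$ while $n^{-1}\sum_k\prod_{i\notin S}\tilde Y_{ki}=O_p(n^{-1/2})$ (the summand is mean-zero by independence across $i$), so after multiplying by $\sqrt n$ every such term is $o_p(1)$ in $L^2(dw)$. The leading term is therefore
$$\zeta_n(t):=\frac{(-1)^d}{\sqrt n}\sum_{k=1}^n\prod_{i=1}^d\tilde Y_{ki}(t_i).$$

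I would next establish $\zeta_n\Rightarrow\Gamma$ weakly in $L^2(dw)$. The summands are i.i.d.\ centered random elements; the mutual independence of $X_{k1},\ldots,X_{kd}$ factorizes their covariance kernel as
$$\E\Bigl[\prod_i\tilde Y_{ki}(t_i)\,\overline{\prod_i\tilde Y_{ki}(t_i')}\Bigr]=\prod_{i=1}^d\bigl(f_i(t_i-t_i')-f_i(t_i)f_i(-t_i')\bigr)=R(t,t'),$$
matching the covariance of $\Gamma$. The second-moment bound $\E\|\zeta_n\|^2_{L^2(dw)}=\int R(t,t)\,dw=\prod_{i=1}^d\E|X_i-X_i'|<\infty$ follows from the classical identity $\int(1-\cos\langle t,y\rangle)/(c_p|t|^{1+p})\,dt=|y|$ together with $\E|X_i|<\infty$; this delivers both tightness and the Hilbert-space CLT, while finite-dimensional convergence is the standard multivariate CLT applied to real and imaginary parts. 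The continuous mapping theorem then yields $n\widehat{dCov^2}\Rightarrow\|\Gamma\|^2$, and a Karhunen--Lo\`eve expansion of $\Gamma$ in $L^2(dw)$ produces $\|\Gamma\|^2=\sum_{j\geq 1}\lambda_j Z_j^2$ with $\lambda_j\geq 0$ the eigenvalues of the covariance operator and $Z_j$ i.i.d.\ $N(0,1)$.

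For the JdCov consequence I would invoke the linear expansion in Lemma~\ref{lemma2},
$$n\,\widehat{JdCov^2}(X_1,\ldots,X_d;c)=\sum_{k=2}^d c^{d-k}\sum_{(i_1,\ldots,i_k)\in I^d_k} n\,\widehat{dCov^2}(X_{i_1},\ldots,X_{i_k}),$$
and note that under mutual independence every sub-collection is mutually independent as well. The Hilbert-space CLT applies \emph{jointly} across all subsets because the samples are shared, so the summands converge jointly to non-negative quadratic forms in a single underlying Gaussian field; their weighted sum with coefficients $c^{d-k}\geq 0$ is again a non-negative quadratic form, whose spectral decomposition delivers $\sum_{j\geq 1}\lambda_j' Z_j^2$ with $\lambda_j'\geq 0$ and $Z_j$ i.i.d.\ $N(0,1)$. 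The principal technical obstacle is the Hilbert-space CLT under the singular weight $dw$, which blows up at each $t_i=0$: the key device is the truncation bound $|e^{\I\langle t,x\rangle}-1|\leq\min(2,|t||x|)$ combined with the identity above, which both verifies that the summands lie in $L^2(dw)$ and secures tightness using only $\E|X_i|<\infty$, mirroring the argument in Sz\'ekely et al.\ (2007) for the bivariate case.
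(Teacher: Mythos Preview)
Your argument is correct and proceeds along a genuinely different route from the paper's. Both start from the integral representation of Lemma~\ref{lemma2} and both linearize $\sqrt n\,\xi_n$ to the leading term $\zeta_n(t)=(-1)^d n^{-1/2}\sum_k\prod_i\tilde Y_{ki}(t_i)$ plus lower-order pieces. The paper, however, never invokes a Hilbert-space CLT: it truncates the integral to the compact box $D(\delta)=\{\delta\le|t_i|\le 1/\delta\}$, proves weak convergence of the truncated functional by a second-level partition of $D(\delta)$ together with the ordinary multivariate CLT for $(\Gamma_n(t^1),\ldots,\Gamma_n(t^N))$, and finally lets $\delta\to 0$ via an approximation lemma (Resnick, Theorem~8.6.2), controlling the tails $\int_{|t_i|<\delta}$ and $\int_{|t_i|>1/\delta}$ by the explicit bounds $\int_{|t_i|<\delta}|u_i|^2w_{p_i}\,dt_i\le 2\E[|X_i-X_i'|G(|X_i-X_i'|\delta)]$ and $\int_{|t_i|>1/\delta}|u_i|^2w_{p_i}\,dt_i\le 4\delta$. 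Your approach is more streamlined: the single verification $\E\|\prod_i\tilde Y_{ki}\|_{L^2(dw)}^2=\prod_i\E|X_i-X_i'|<\infty$ simultaneously places the summands in $L^2(dw)$, delivers tightness, and absorbs the singularity of $dw$ at the origin, so the conclusion follows at once from the CLT in a separable Hilbert space and the continuous mapping theorem. The paper's route is more elementary in that it only calls the finite-dimensional CLT, at the price of a two-scale approximation; yours is shorter but relies on the (standard) Hilbert-space CLT. For the JdCov consequence the paper gives no detail beyond ``as a consequence,'' so your joint-CLT remark across all subcollections is in fact more explicit than what the paper records.
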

Proposition \ref{prop6} shows that both $\widehat{dcov^2}$ and $\widehat{Jdcov^2}$ converge to weighted sum of chi-squared random variables, where the weights depend on the marginal characteristic functions
in a complicated way. Since the limiting distribution is non-pivotal, we will introduce a bootstrap procedure to approximate their sampling distributions in the next section.

It has been pointed out in the literature that the computational complexity of dCov is $O(n^2)$ if it is implemented directly according to its definition. The computational cost of the V-statistic type estimators and the bias-corrected estimators for JdCov are both of the order $O(n^2p_0)$.

\subsection{Bias-corrected estimators}\label{bias corr}
It is well known that V-statistic leads to biased estimation. To remove the bias, one can construct an estimator for the $d$th order dCov based on a
$d$th order U-statistic. However, the computational complexity for the $d$th order U-statistic is of the order $O(dn^d)$, which is computationally prohibitive when $n$
and $d$ are both large. Adopting the $\mathcal{U}$-centering idea in Sz\'{e}kely and Rizzo (2014), we propose bias-corrected estimators which do not bring extra computational cost as compared to the
plug-in estimators. Specifically, for $1\leq
i\leq d$, we define the $\mathcal{U}$-centered version of $|X_{ki}-X_{li}|$ as
\begin{align*}
\widetilde{U}_{i}(k,l)=&\frac{1}{n-2}\sum^{n}_{u=1}|X_{ui}-X_{li}|+\frac{1}{n-2}\sum^{n}_{v=1}|X_{ki}-X_{vi}|-|X_{ki}-X_{li}|
\\&-\frac{1}{(n-1)(n-2)}\sum_{u,v=1}^{n}|X_{ui}-X_{vi}|
\end{align*}
when $k\neq l$, and $\widetilde{U}_{i}(k,l)=0$ when $k=l$. One can verify
that $\sum_{v\neq k}\widetilde{U}_{i}(k,v)=\sum_{u\neq
l}\widetilde{U}_{i}(u,l)=0$, which mimics the double-centered property $\E[U_i(X_i,X'_i)|X_i]=\E[U_i(X_i,X'_i)|X_i']=0$ for its population counterpart.
Let $\widetilde{dCov^2}(X_i,X_j)=\sum_{k\neq
l}\widetilde{U}_{i}(k,l)\widetilde{U}_{j}(k,l)/\{n(n-3)\}$ and write $\widetilde{dCov}(X_i)=\widetilde{dCov}(X_i,X_i).$
We define the bias-corrected estimators as,
\begin{align*}
&\widetilde{JdCov^2}(X_1,\dots,X_d;c)=\frac{1}{n(n-3)}\sum_{k,l=1}^n\prod_{i=1}^{d}\left(\widetilde{U}_{i}(k,l)+c\right)-\frac{n}{n-3}c^d,
\\&\widetilde{JdCov_S^2}(X_1,\dots,X_d;c)=\frac{1}{n(n-3)}\sum_{k,l=1}^n\prod_{i=1}^{d}\left(\frac{\widetilde{U}_{i}(k,l)}{\widetilde{dCov}(X_i)}+c\right)-\frac{n}{n-3}c^d.
\end{align*}
Direct calculation yields that
\begin{align}\label{jdov1}
\widetilde{JdCov^2}(X_1,\dots,X_n;c)=c^{d-2}\sum_{(i,j)\in
I^d_2}\widetilde{dCov}^2(X_i,X_j)+\text{higher order terms}.
\end{align}
It has been shown in Proposition 1 of Sz\'{e}kely and Rizzo (2014) that $\widetilde{dCov}^2(X_i,X_j)$ is an unbiased estimator for $dCov^2(X_i,X_j)$.
In the supplementary material, we provide an alternative proof which simplifies the arguments in Sz\'{e}kely and Rizzo (2014).
Our argument relies on a new decomposition of $\widetilde{U}_{i}(k,l)$, which provides some insights on the
$\mathcal{U}$-centering idea. See Lemma \ref{lemma5} and Proposition \ref{prop9} in the supplementary material. In view of (\ref{jdov1}) and Proposition \ref{prop9}, the main effect in $JdCov^2(X_1,\dots,X_n;c)$ can be unbiasedly estimated
by the main effect of $\widetilde{JdCov^2}(X_1,\dots,X_n;c)$. However, it seems very challenging to study the impact of
$\mathcal{U}$-centering on the bias of the high order effect terms. We shall leave this problem to our future research.

\section{Testing for joint independence}\label{sec:infer}
In this section, we consider the problem of testing the null hypothesis
\begin{equation}
H_0: X_1,\dots,X_d \text{ are mutually independent}
\end{equation}
against the alternative $H_A:$ negation of $H_0$. For the purpose of illustration, we use $n\widehat{JdCov^2}$ as our test statistic and set
\begin{equation}
\phi_n(\X_1,\dots,\X_n):= \begin{cases}
1 & if \quad n\widehat{JdCov^2}(X_1, \dots, X_d) > c_n\, ,\\
0 & if \quad n\widehat{JdCov^2}(X_1, \dots, X_d) \leq c_n \, ,
\end{cases}
\end{equation}
where the threshold $c_n$ remains to be chosen. Consequently, we define a decision rule as follows: reject $H_0$ if $\phi_n=1$ and fail to reject $H_0$ if $\phi_n=0$.

Below we introduce a bootstrap procedure to approximate the sampling
distribution of $n\widehat{JdCov}$ under $H_0$. Let $\widehat{F}_i$ be the empirical distribution function based on
the data points $\{X_{ji}\}^{n}_{j=1}$. Conditional on the original sample,
we define $\X_j^*=(X_{j1}^*,\dots,X_{jd}^*)$, where $X_{ji}^*$ are
generated independently from $\widehat{F}_i$ for $1\leq i\leq d$.
Let $\{\X_j^*\}^{n}_{j=1}$ be $n$ bootstrap samples. Then we can compute the bootstrap statistics
$\widehat{dCov^2}^*$ and $\widehat{JdCov^2}^*$ in the same way as $\widehat{dCov^2}$ and $\widehat{JdCov^2}$ based on $\{\X_j^*\}^{n}_{j=1}$.
In particular, we note that the bootstrap version of the $d$th order dCov is given by
$$n\widehat{dCov^2}^*(X_1,\dots,X_d)=\Arrowvert \Gamma^*_n \Arrowvert^2=\int \Gamma^*_n(t_1,\dots,t_d)^2dw,$$
where
$$\Gamma^*_n (t) \, = \,n^{-1/2}\displaystyle \sum_{j=1}^n \prod_{i=1}^d (\hat{f^*_i} (t_i) - e^{\imath \langle t_i,X^*_{ji}\rangle}).$$
Denote by $``\overset{d^*}{\longrightarrow}"$ the weak convergence in the bootstrap world conditional on the original sample $\{\X_j\}^{n}_{j=1}$.
\begin{proposition}\label{prop10}
Suppose $\E|X_i|<\infty$ for $1\leq i\leq d$. Then 
\begin{align*}
&n\widehat{dCov^2}^*(X_1,\dots,X_d) \, \overset{d^*}{\longrightarrow}\sum^{+\infty}_{j=1}\lambda_jZ_j^2,
\\&n\widehat{JdCov^2}^*(X_1,\dots,X_d) \, \overset{d^*}{\longrightarrow} \,\sum^{+\infty}_{j=1}\lambda_j'Z_j^2,
\end{align*}
almost surely as $n \to \infty$.
\end{proposition}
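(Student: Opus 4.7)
The bootstrap scheme draws each $X_{ji}^*$ independently from the empirical marginal $\widehat{F}_i$, so conditional on the sample the vectors $\X_j^*$ are i.i.d.\ from $\widehat{F}_1\otimes\cdots\otimes\widehat{F}_d$ with coordinates mutually independent by construction. Thus the bootstrap world exactly realizes the null of Proposition~\ref{prop6}, but with the true marginals $F_i$ replaced by the empirical ones. The plan is to mimic the proof of Proposition~\ref{prop6} inside the bootstrap world and to exploit the a.s.\ convergence $\widehat{F}_i\Rightarrow F_i$ (via Glivenko-Cantelli) to recover the same limit.

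\textbf{Step 1 (Process representation).} The bootstrap analogue of Lemma~\ref{lemma2} gives
$$n\,\widehat{dCov^2}^*(X_1,\dots,X_d)=\int_{\mathbb{R}^{p_0}}\bigl|\Gamma_n^*(t)\bigr|^2\,dw,$$
with $\Gamma_n^*(t)=n^{-1/2}\sum_{j=1}^n\prod_{i=1}^d\bigl(\hat f_i^*(t_i)-e^{\I\langle t_i,X_{ji}^*\rangle}\bigr)$. I would show that $\Gamma_n^*$ converges weakly (conditional on the sample, a.s.) in the Hilbert space $L^2(dw)$ to the Gaussian process $\Gamma$ of Proposition~\ref{prop6}, then apply the continuous mapping theorem to the squared-norm functional.

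\textbf{Step 2 (Finite-dimensional CLT).} For any $t^{(1)},\dots,t^{(m)}\in\mathbb{R}^{p_0}$, the summands $\xi_j(t)=\prod_i\bigl(\hat f_i^*(t_i)-e^{\I\langle t_i,X_{ji}^*\rangle}\bigr)$ are i.i.d.\ (conditional on the sample) and centered, so a bootstrap Lindeberg CLT yields joint normality of $(\Gamma_n^*(t^{(k)}))_k$. A direct calculation gives $\E^*[\xi_j(t)\overline{\xi_j(t')}]=\prod_i\bigl(\hat f_i(t_i-t_i')-\hat f_i(t_i)\hat f_i(-t_i')\bigr)$, which converges a.s.\ to $R(t,t')=\prod_i\bigl(f_i(t_i-t_i')-f_i(t_i)f_i(-t_i')\bigr)$ by the uniform-on-compacta convergence $\hat f_i\to f_i$. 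This identifies the finite-dimensional limit as $(\Gamma(t^{(k)}))_k$.

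\textbf{Step 3 (Tightness in $L^2(dw)$).} To promote the finite-dimensional limit to convergence of $\int|\Gamma_n^*|^2\,dw$, I would bound $\E^*\!\int|\Gamma_n^*(t)|^2\,dw$ uniformly in $n$, almost surely. Using the $U_i$-representation~(\ref{eqn2}) applied in the bootstrap world, this expectation reduces to products of bootstrap $\widehat{dCov^2}^*(X_i)$ terms and cross-terms controlled by $n^{-1}\sum_j|X_{ji}|\to\E|X_i|<\infty$ a.s. Tightness in $L^2(dw)$ then yields, by continuous mapping,
$$n\,\widehat{dCov^2}^*(X_1,\dots,X_d)\overset{d^*}{\longrightarrow}\|\Gamma\|^2=\sum_{j=1}^\infty\lambda_j Z_j^2,$$
with $\lambda_j$ the eigenvalues of the covariance operator induced by $R$, exactly as in Proposition~\ref{prop6}.

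\textbf{Step 4 (JdCov).} By the decomposition in~(\ref{eq2-lemma2}), $n\,\widehat{JdCov^2}^*$ is a fixed linear combination of bootstrap higher-order $\widehat{dCov^2}^*$ statistics over all subsets of $\{X_1,\dots,X_d\}$ of cardinality $\geq 2$. Each summand converges conditionally in distribution to its own weighted sum of chi-squares by Step~3; joint convergence follows because all summands are continuous functionals of a single process $\Gamma_n^*$ restricted to the various coordinate subsets, which converges jointly in $L^2(dw)$. Summing gives $n\,\widehat{JdCov^2}^*\overset{d^*}{\longrightarrow}\sum_j\lambda_j'Z_j^2$ for nonnegative weights $\lambda_j'$ reflecting the pooled spectrum.

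The main obstacle will be Step~3: the weight $dw$ has integrable singularities at each coordinate origin and polynomial decay at infinity, so uniform tightness of $\Gamma_n^*$ in $L^2(dw)$ demands a careful truncation argument handling both regions, and the usual envelope bounds must be carried out with the empirical marginals in place of the population ones. Once the envelope is established using only a.s.\ convergence of bootstrap empirical first moments, the remaining argument parallels that of Sz\'ekely et al.\ (2007), with continuity of $R$ in $F_i$ delivering the identification of the limit.
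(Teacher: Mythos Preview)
Your overall strategy matches the paper's: truncate to a compact region $D(\delta)$, establish a finite-dimensional CLT inside, control the tails outside via the Sz\'ekely et al.\ integral bounds, and identify the limit through $R_n\to R$ a.s.

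One point in Step~2 needs repair. The summands $\xi_j(t)=\prod_i\bigl(\hat f_i^*(t_i)-e^{\I\langle t_i,X_{ji}^*\rangle}\bigr)$ are \emph{not} i.i.d.\ conditional on the sample, because $\hat f_i^*=n^{-1}\sum_k e^{\I\langle t_i,X_{ki}^*\rangle}$ depends on all bootstrap draws, not just the $j$th. The paper resolves this by centering instead at $\hat f_i$ (the conditional mean of $e^{\I\langle t_i,X_{ji}^*\rangle}$ given the sample), expanding the product, and using a Borel--Cantelli argument to show $\hat f_i^*-\hat f_i\to 0$ a.s.\ in the bootstrap randomness, a.s.\ in the original sample. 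This yields the linearization $\Gamma_n^*(t)=n^{-1/2}\sum_j\mathcal Z_j + U$ with $\mathcal Z_j=\prod_i\bigl(\hat f_i(t_i)-e^{\I\langle t_i,X_{ji}^*\rangle}\bigr)$ genuinely i.i.d.\ conditional on the sample and $U\to 0$ a.s. Once this step is inserted your finite-dimensional CLT goes through as written.

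A minor remark on Step~3: a uniform bound on $\E^*\!\int|\Gamma_n^*|^2\,dw$ alone does not give tightness in $L^2(dw)$. What is needed, as you correctly anticipate in your closing paragraph, is $\sup_n\E^*\!\int_{D(\delta)^c}|\Gamma_n^*|^2\,dw\to 0$ a.s.\ as $\delta\to 0$. The paper obtains this not via a $U_i$-representation but directly from the factorized form $\E^*|\Gamma_n^*(t)|^2=c_n\prod_i\bigl(1-|\hat f_i(t_i)|^2\bigr)$ together with the bounds (\ref{as-2})--(\ref{as-3}) applied with the empirical marginals, followed by the SLLN to pass to the population limit.
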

Proposition \ref{prop10} shows that the bootstrap statistic is able to imitate the limiting distribution of the test statistic.
Thus, we shall choose $c_n$ to be the $1-\alpha$ quantile of the distribution of $n \widehat{JdCov^2}^*$ conditional on the sample $\{\X_j\}^{n}_{j=1}$. The validity of the bootstrap-assisted test can be justified as follows.



\begin{proposition}\label{prop11}
For all $\alpha \in (0,1)$, the $\alpha$-level bootstrap-assisted test has asymptotic level $\alpha$ when testing $H_0$ against $H_A$. In other words, under $H_0$, $\displaystyle \limsup_{n \to \infty} P\left(\, \phi_n(\X_1,\dots,\X_n)=1 \,\right)=\alpha \;$.
\end{proposition}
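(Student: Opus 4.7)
\medskip

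\noindent\textbf{Proof plan for Proposition \ref{prop11}.}
The plan is to combine the weak convergence of the test statistic under $H_0$ (Proposition \ref{prop6}) with the conditional weak convergence of the bootstrap statistic to the \emph{same} limiting law (Proposition \ref{prop10}), and then to invoke the standard quantile-convergence argument. Let $T_n=n\widehat{JdCov^2}(X_1,\dots,X_d)$ and $T_n^*=n\widehat{JdCov^2}^*(X_1,\dots,X_d)$, and let $T=\sum_{j=1}^\infty \lambda_j' Z_j^2$ with $\{Z_j\}$ i.i.d.\ $N(0,1)$ and $\lambda_j'>0$. Denote by $F$ the CDF of $T$ and by $F_n^*$ the (random) conditional CDF of $T_n^*$ given the data.

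First, I would establish that $F$ is continuous and strictly increasing on its support. Since $T$ is an infinite weighted sum of independent $\chi^2_1$ variables with strictly positive weights, $T$ has an absolutely continuous distribution on $(0,\infty)$; in particular, the $(1-\alpha)$ quantile $c_\alpha:=F^{-1}(1-\alpha)$ is a continuity point of $F$ and $F$ is strictly increasing there. This regularity is what will allow us to pass from weak convergence of CDFs to convergence of quantiles.

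Next, Proposition \ref{prop10} yields $F_n^*(t)\to F(t)$ a.s.\ at every continuity point $t$ of $F$. A standard quantile-inversion lemma (e.g.\ Lemma 21.2 of van der Vaart) then gives $c_n=(F_n^*)^{-1}(1-\alpha)\to c_\alpha$ almost surely. Combined with Proposition \ref{prop6}, which asserts $T_n\cod T$ under $H_0$, Slutsky's theorem yields $T_n - c_n \cod T-c_\alpha$, and consequently
\begin{equation*}
\limsup_{n\to\infty} P(T_n>c_n) \;=\; P(T>c_\alpha) \;=\; \alpha,
\end{equation*}
which is exactly the conclusion, and in fact the $\limsup$ can be upgraded to a limit.

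The main obstacle I anticipate is the quantile-convergence step: pointwise a.s.\ convergence of $F_n^*$ at continuity points of $F$ is not quite enough on its own to conclude that $P(T_n>c_n)\to\alpha$, because $c_n$ is itself random and is fed into a separate distributional limit $T_n\cod T$. The clean way to handle this is to argue on the joint probability space: show $c_n\to c_\alpha$ a.s., then note that for any $\varepsilon>0$ the events $\{T_n>c_\alpha+\varepsilon\}\subseteq\{T_n>c_n\}\subseteq\{T_n>c_\alpha-\varepsilon\}$ eventually hold with overwhelming probability, so $P(T_n>c_n)$ is sandwiched between $P(T>c_\alpha\pm\varepsilon)+o(1)$; letting $\varepsilon\downarrow 0$ and using continuity of $F$ at $c_\alpha$ closes the argument. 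The remaining subtlety, checking that $F$ is indeed continuous at $c_\alpha\in(0,\infty)$, is handled by the absolute continuity noted above (and, if needed, by ruling out the degenerate case where all $\lambda_j'$ vanish, which cannot happen when at least one pairwise or higher-order dCov term is non-trivial in the weighted sum).
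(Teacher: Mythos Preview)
Your proposal is correct and follows essentially the same route as the paper: combine Proposition~\ref{prop6} with the almost-sure bootstrap consistency of Proposition~\ref{prop10}, use continuity of the limit law $G$ to obtain $c_n\to G^{-1}(1-\alpha)$ a.s., and conclude via a standard argument (the paper cites Corollary~11.2.3 of Lehmann and Romano; your Slutsky/sandwich argument is an equivalent way to finish). Your explicit verification that the limiting weighted $\chi^2$ sum is absolutely continuous is slightly more careful than the paper's bare assertion that $G$ is continuous, but otherwise the arguments coincide.
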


\begin{proposition}\label{prop12}
For all $\alpha \in (0,1)$, the $\alpha$-level bootstrap-assisted test is consistent when testing $H_0$ against $H_A$. In other words, under $H_A$, $\displaystyle \lim_{n \to \infty} P\left(\, \phi_n(\X_1,\dots,\X_n)=1\,\right)=1  \;$.
\end{proposition}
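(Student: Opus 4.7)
The plan is to decompose the analysis into two parts: (a) show that under $H_A$, the test statistic $T_n = n\widehat{JdCov^2}(X_1,\dots,X_d;c)$ diverges to $+\infty$; (b) show that the bootstrap critical value $c_n$ converges almost surely to a finite limit. Combining these two facts will immediately yield $P(T_n > c_n) \to 1$.

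For part (a), under $H_A$ the variables $X_1,\dots,X_d$ are not mutually independent, so Proposition \ref{prop1} (with all $C_i>0$, i.e.\ $c>0$) gives $JdCov^2(X_1,\dots,X_d;c) > 0$. Proposition \ref{prop5} then yields $\widehat{JdCov^2}(X_1,\dots,X_d;c) \overset{a.s.}{\to} JdCov^2(X_1,\dots,X_d;c) > 0$, so multiplying by $n$ we obtain $T_n \to +\infty$ almost surely.

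For part (b), observe that the bootstrap scheme draws each coordinate independently from its marginal empirical distribution $\widehat{F}_i$, so the conditional law of $\X_j^*$ is a product of marginals irrespective of the dependence structure in the original data. Consequently Proposition \ref{prop10} applies, since it requires only $\E|X_i|<\infty$ and not the null hypothesis. It gives that, conditional on the sample, $n\widehat{JdCov^2}^*(X_1,\dots,X_d;c) \overset{d^*}{\to} W := \sum_{j=1}^{\infty} \lambda_j' Z_j^2$ almost surely, where $\lambda_j' > 0$ and $Z_j$ are i.i.d.\ $N(0,1)$. The law of $W$ has a continuous and strictly increasing distribution function on its support, which is the standard fact for nondegenerate weighted sums of independent chi-squares. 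Applying the usual quantile-convergence lemma (Portmanteau on the conditional distributions plus continuity of the inverse CDF at $1-\alpha$) then gives $c_n \to q_{1-\alpha}$ almost surely, where $q_{1-\alpha} < \infty$ denotes the $1-\alpha$ quantile of $W$.

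Combining (a) and (b): on the almost sure event where both $T_n \to +\infty$ and $c_n \to q_{1-\alpha}$, we have $\mathbf{1}\{T_n > c_n\} = 1$ for all sufficiently large $n$, and bounded convergence then delivers $P(\phi_n = 1) = P(T_n > c_n) \to 1$. The main obstacle is step (b): pinning down the almost sure convergence of the bootstrap quantile $c_n$ to $q_{1-\alpha}$ requires carefully combining the a.s.\ weak convergence of Proposition \ref{prop10} with the continuity of the limit CDF at $1-\alpha$, while making sure the exceptional null set on which the conditional weak convergence may fail does not disrupt the almost sure argument. Once that is handled, the divergence in (a) is straightforward from Propositions \ref{prop1} and \ref{prop5}.
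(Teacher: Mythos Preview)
Your proposal is correct and follows essentially the same route as the paper: the paper likewise shows (i) that the bootstrap quantile converges almost surely to the finite $(1-\alpha)$-quantile of the limiting chi-square mixture (borrowing this from the proof of Proposition~\ref{prop11}, which rests on Proposition~\ref{prop10}), and (ii) that under $H_A$ the test statistic diverges almost surely, then combines the two via dominated convergence of the indicator. Your write-up is in fact a bit more explicit than the paper's in justifying the divergence step by citing Propositions~\ref{prop1} and~\ref{prop5}.
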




\section{Numerical studies}\label{sec:num}
We investigate the finite sample performance of the proposed methods. Our first goal is to test the joint independence among the variables $\{X_1,\dots,X_d\}$ using the new dependence metrics, and compare the performance with some existing alternatives in the literature in terms of size and power.
Throughout the simulation, we set $c=0.5,1,2$ in JdCov and implement the bootstrap-assisted test based on the bias-corrected estimators. We compare our tests with the dHSIC-based test in Pfister et al. (2018) and the
mutual independence test proposed in Matteson and Tsay (2017), which is defined as
\begin{align}\label{eq-MT16}
T_{MT}:=\sum_{i=1}^{d-1}dCov^2(X_i,X_{(i+1):d}),
\end{align}
where $X_{(i+1):d}=\{X_{i+1},X_{i+2},\dots,X_d\}$. We consider both Gaussian and non-Gaussian distributions and study the following models, motivated from Sejdinovic et al. (2013) and Yao et al. (2018).


\begin{example}\label{eg:size}[Gaussian copula model]
{\rm The data $\mathbf{X}=(X_1,\dots,X_d)$ are generated as follows:
\begin{enumerate}
\item $\mathbf{X} \sim N(0, I_d)$;
\item $\mathbf{X}=Z^{1/3}$ and $Z \sim N(0, I_d)$;
\item $\mathbf{X}=Z^3$ and $Z \sim N(0, I_d)$.
\end{enumerate}
}

\end{example}


\begin{example}\label{eg:power_normal}[Multivariate Gaussian model]
{\rm
The data $\mathbf{X}=(X_1,\dots,X_d)$ are generated from the multivariate normal distribution with the following three covariance matrices $\Sigma=(\sigma_{ij}(\rho))_{i,j=1}^d$ with $\rho=0.25$:
\begin{enumerate}
\item AR(1): $\sigma_{ij}=\rho^{|i-j|}$ for all $i,j \in \{1,\dots,d\}$;
\item Banded: $\sigma_{ii}=1$ for $i=1,\dots,d$; $\sigma_{ij}=\rho$ if $1 \leq  |i-j| \leq  2$ and $\sigma_{ij}=0$ otherwise;
\item Block: Define $\Sigma_{\text{block}}=(\sigma_{ij})^{5}_{i,j=1}$ with $\sigma_{ii}=1$ and $\sigma_{ij}=\rho$ if $i \ne j$. Let $\Sigma=I_{\lfloor d/5 \rfloor} \otimes \Sigma_{\text{block}}$,
where $\otimes$ denotes the Kronecker product.
\end{enumerate}
}
\end{example}


\begin{example}\label{eg:mutua1-2}
{\rm
The data $\mathbf{X}=(X, Y, Z)$ are generated as follows:
\begin{enumerate}
\item $X, Y \overset{i.i.d}{\sim} N(0,1)$, $Z=$\,sign$(X Y)\, W$, where $W$ follows an exponential distribution with mean $\sqrt{2}$;
\item $X, Y$ are independent Bernoulli random variables with the success probability $0.5$, and $Z = \mathbf{1}\{X=Y\}$.
\end{enumerate}

}
\end{example}


\begin{example}\label{eg:mutua1-3}
{\rm
In this example, we consider a triplet of random vectors $(X, Y, Z)$ on $\mathbb{R}^p \times \mathbb{R}^p \times \mathbb{R}^p$, with $X, Y \overset{i.i.d}{\sim} N(0,I_p)$. We focus on the following cases :
\begin{enumerate}
\item $Z_1 = \text{sign}(X_1 Y_1)\, W$ and $Z_{2:p} \sim N(0,I_{p-1})$, where $W$ follows an exponential distribution with mean $\sqrt{2}$;

\item $Z_{2:p} \sim N(0,I_{p-1})$ and
\[Z_1 = \begin{cases}
X_1^2 + \epsilon, & \text{with probability} \, 1/3, \\
Y_1^2 + \epsilon, & \text{with probability} \, 1/3, \\
X_1 Y_1 + \epsilon, & \text{with probability} \, 1/3, \\
\end{cases}\]  where $\epsilon \sim U(-1,1)$.


\end{enumerate}
}
\end{example}

We conduct tests for joint independence among the random variables described in the above examples. For each example, we draw 1000 simulated datasets and perform tests of joint independence with 500 bootstrap resamples. We try small and moderate sample sizes, i.e., $n=50, 100$ or $200$. Figure \ref{fig:5.1,5.2} and Figure \ref{fig:5.3,5.4} display the proportion of rejections (out of 1000 simulation runs) for the five different tests, based on the statistics $\widetilde{JdCov^2}$, $\widetilde{JdCov^2_S}$, $\widetilde{JdCov^2_R}$, dHSIC and $T_{MT}$. The detailed figures are reported in Tables \ref{table:tb} and $\ref{table:tb1}$ in the supplementary materials.

In Example \ref{eg:size}, the data generating scheme suggests that the variables are jointly independent. The plots in Figure \ref{fig:5.1,5.2}\, show that all the five tests perform more or less equally well in examples \ref{eg:size}.1 and \ref{eg:size}.2, and the rejection probabilities are quite close to the $10\%$ or $5\%$ nominal level. In Example \ref{eg:size}.3, the tests based on our proposed statistics show greater conformation of the empirical size to the actual size of the test than $T_{MT}$. In Example $\ref{eg:power_normal}$, the tests based on $\widetilde{JdCov^2}$, $\widetilde{JdCov^2_S}$ and $\widetilde{JdCov^2_R}$ as well as $T_{MT}$ significantly outperform the dHSIC-based test. Note that the empirical power becomes higher when $c$ increases to 2. From Figure \ref{fig:5.3,5.4}, we observe that in Example $\ref{eg:mutua1-2}$ all the tests perform very well in the second case. However, in the first case, our tests and the dHSIC-based test deliver higher power as compared to $T_{MT}$. Finally, in Example \ref{eg:mutua1-3}, we allow $X,Y,Z$ to be random vectors with dimension $p=5,10$. The rejection probabilities for each of the five tests increase with $n$, and the proposed tests provide better performances in comparison with the other two competitors. In particular, the test based on $\widetilde{JdCov^2_S}$ outperforms all the others in a majority of the cases. In Examples $\ref{eg:mutua1-2}$ and $\ref{eg:mutua1-3}$, the power becomes higher when $c$ decreases to 0.5. These results are consistent with our statistical intuition and the discussions in Section \ref{sec:jdov}. For the Gaussian copula model, only the main effect term matters, so a larger $c$ is preferable. For non-Gaussian models, the high order terms kick in and hence a smaller $c$ may lead to higher power.

\begin{remark}\label{bias corr_emp}
\rm
We have considered U-statistic type estimators of $JdCov^2$, $JdCov_S^2$ and $JdCov_R^2$ so far in all the above computations, as they remove the bias due to the main effects (see Section \ref{bias corr}). However it might be interesting to see if the bias correction has any empirical impact. We conduct tests for joint independence of the random variables in some of the above examples, this time using the V-statistic type estimators (described in Section \ref{est}). Table \ref{table:tb_bias} (in the supplementary materials) shows the proportion of rejections (out of 1000 simulation runs) for the tests based on $\widehat{JdCov^2}$, $\widehat{JdCov^2_S}$ and $\widehat{JdCov^2_R}$, setting $c=1$. The results indicate that use of the bias corrected estimators lead to greater conformation of the empirical size to the actual size of the test (in Example \ref{eg:size}), and slightly better power in Example \ref{eg:mutua1-2}.

\end{remark}

\begin{remark}\label{sim_c}
\rm
In connection to the heuristic idea discussed in Remark \ref{choice_c} about choosing the tuning parameter $c$, we conduct tests for joint independence of the random variables in all the above examples, choosing $c$ in that way. Table \ref{table:tb3} (in the supplementary materials) presents the proportion of rejections for the proposed tests and the values of $c$ for each example, averaged over the 1000 simulated datasets. The plots in Figure \ref{fig:5.1,5.2} and Figure \ref{fig:5.3,5.4} reveal some interesting features. In Example \ref{eg:power_normal} we have Gaussian data, so a larger $c$ is preferable. Clearly the proportion of rejections are a little higher (or lower) in most of the cases when we choose $c$ in the data-driven way ($c$ turns out to be around 1.6 or 1.7), than when $c$ is subjectively chosen to be  0.5 (or 2). On the contrary, in Example \ref{eg:mutua1-2}, the data is non-Gaussian and a smaller $c$ is preferable. Evidently choosing $c$ in the data-driven way leads to nearly equally good power compared to when $c=0.5$, and higher power compared to when $c=2$.

\end{remark}

\section{Application to causal inference}\label{sec:dag}
\subsection{Model diagnostic checking for Directed Acyclic Graph (DAG)}
We employ the proposed metrics to perform model selection in causal inference
which is based on the joint independence testing of the residuals from the fitted structural
equation models. Specifically, given
a candidate DAG $\mathcal{G}$, we let $\text{Par}(j)$ denote the
index associated with the parents of the $j$th node. Following Peters et al. (2014) and B\"uhlmann et al. (2014), we consider the structural equation models with additive components
\begin{align}\label{additive}
X_j = \displaystyle \sum_{k \in \text{Par}(j)} f_{j,k}(X_k) + \epsilon_j \, , \; j=1,2,\dots,d,
\end{align}
where the noise variables $\epsilon_1, \dots \, ,\epsilon_d$ are jointly independent variables. Given $n$
observations $\{\X_i\}^{n}_{i=1}$ with
$\X_i=(X_{i1},\dots,X_{id})$, we use generalized additive regression (Wood and Augustin, 2002) to regress $X_j$ on all its parents
$\{X_k,k\in\text{Par}(j)\}$ and denote the resulting residuals by
\begin{align*}
\hat{\epsilon}_{ij}=X_{ij}\, - \displaystyle \sum_{k \in \text{Par}(j)}\hat{f}_{j,k}(X_{ik}),\quad 1\leq j\leq d,\quad 1\leq i\leq n,
\end{align*}

\begin{figure}[H]
    \centering
    \subfloat[]{\includegraphics[width=0.44\textwidth]{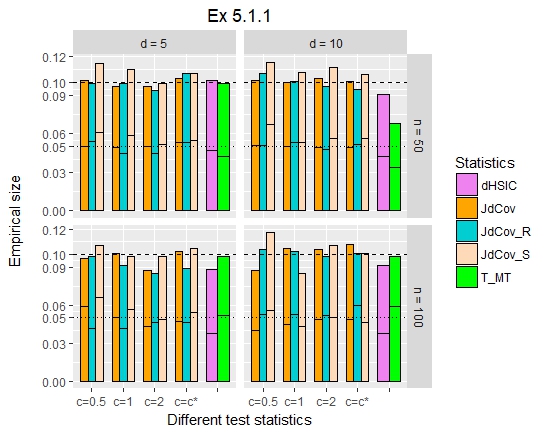}}
    \hfill
    \subfloat[]{\includegraphics[width=0.44\textwidth]{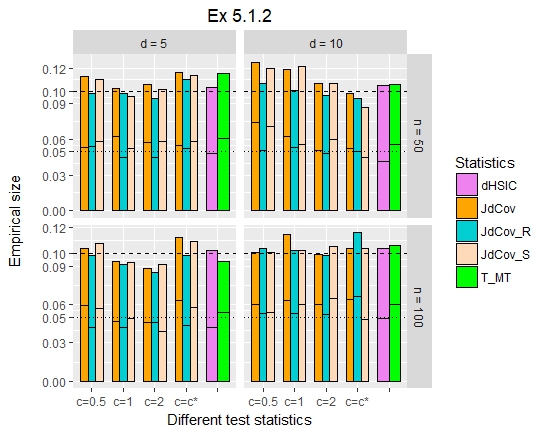}}
    \vspace{0.5cm}
    \subfloat[]{\includegraphics[width=0.44\textwidth]{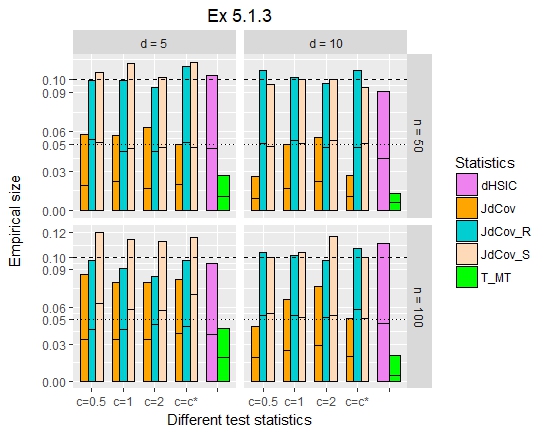}}
    \hfill
    \subfloat[]{\includegraphics[width=0.44\textwidth]{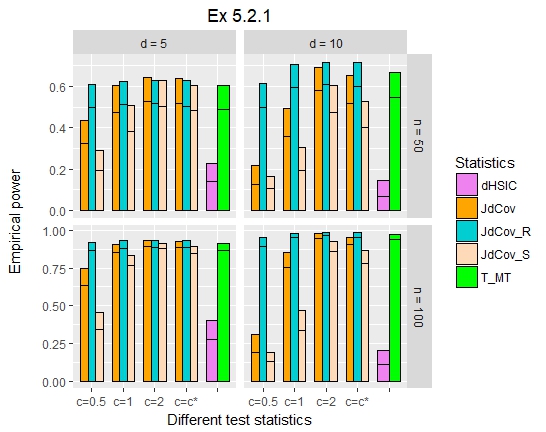}}
    \vspace{0.5cm}
    \subfloat[]{\includegraphics[width=0.44\textwidth]{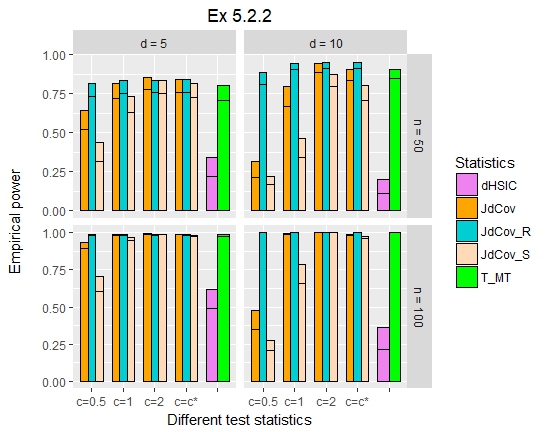}}
    \hfill
    \subfloat[]{\includegraphics[width=0.44\textwidth]{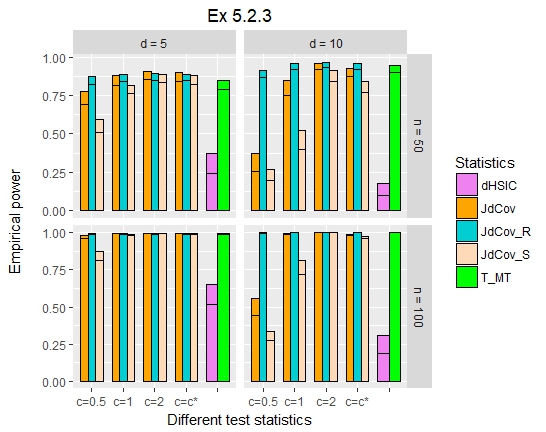}}

    \caption{Figures showing the empirical size and power for the different tests statistics in Examples \ref{eg:size} and \ref{eg:power_normal}. $c^*$ denotes the data-driven choice of $c$. The vertical height of a bar and a line on a bar stand for the empirical size or power at levels $\alpha=0.1$ or $\alpha=0.05$, respectively.}\label{fig:5.1,5.2}
\end{figure}

\begin{figure}[H]
    \centering
    \subfloat[]{\includegraphics[width=0.44\textwidth]{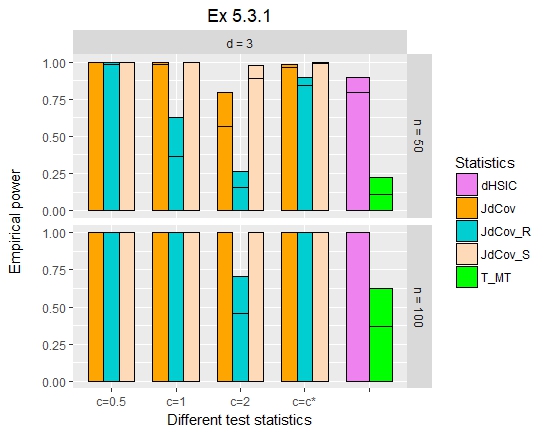}}
    \hfill
    \subfloat[]{\includegraphics[width=0.44\textwidth]{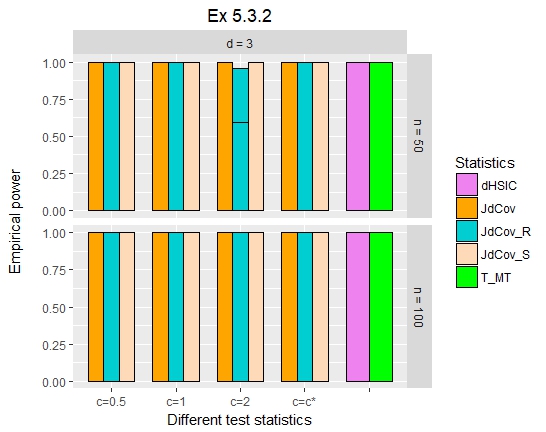}}
    \vspace{0.5cm}
    \subfloat[]{\includegraphics[width=0.44\textwidth]{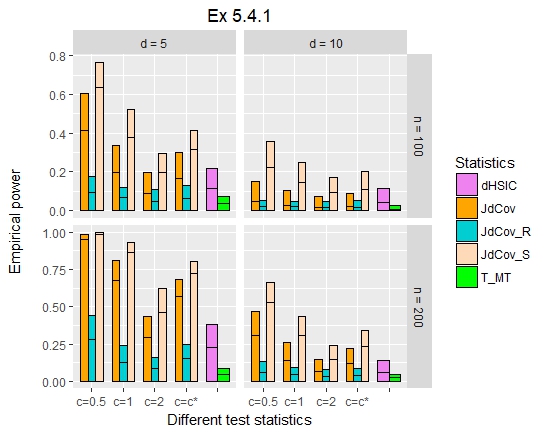}}
    \hfill
    \subfloat[]{\includegraphics[width=0.44\textwidth]{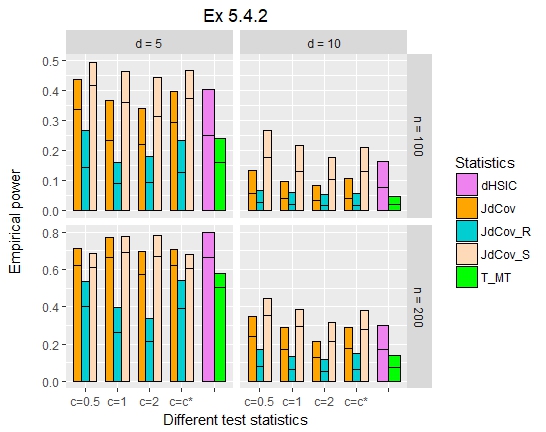}}

    \caption{Figures showing the empirical power for the different tests statistics in Examples \ref{eg:mutua1-2} and \ref{eg:mutua1-3}. $c^*$ denotes the data-driven choice of $c$. The vertical height of a bar and a line on a bar stand for the empirical power at levels $\alpha=0.1$ or $\alpha=0.05$, respectively.}\label{fig:5.3,5.4}
\end{figure}

where $\hat{f}_{j,k}$ is the B-spline estimator for $f_{j,k}$. To check the goodness of fit of $\mathcal{G}$, we test
the joint independence of the residuals. Let
$T_n$ be the statistic (e.g. $\widetilde{JdCov^2}$, $\widetilde{JdCov^2_S}$ or $\widetilde{JdCov^2_R}$) to test the joint dependence of $(\epsilon_1,\dots,\epsilon_d)$
constructed based on the fitted residuals $\hat{\epsilon}_i=(\hat{\epsilon}_{i1},\dots,\hat{\epsilon}_{id})$ for $1\leq i\leq n.$ Following the idea presented in Sen and Sen (2014), it seems that $T_n$ might have a limiting distribution different from the one mentioned in Proposition $\ref{prop6}$. So to approximate the sampling distribution of $T_n$, we introduce the
following residual bootstrap procedure.
\begin{enumerate}
  \item Randomly sample $\epsilon^*_j=(\epsilon^*_{1j},\dots,\epsilon^*_{n j})$ with replacement from the residuals $\{\hat{\epsilon}_{1j},\dots,\hat{\epsilon}_{n j}\}$, $1 \leq j \leq d$. Construct the bootstrap sample $X_{ij}^*=\sum_{k\in \text{Par}(j)}\hat{f}_{j,k}(X_{ik})+\epsilon_{ij}^*$.
  \item Based on the bootstrap sample $\{\X_i^*\}^{n}_{i=1}$ with $\X_i^*=(X_{i1}^*,\dots,X_{id}^*)$, estimate $f_{j,k}$ for $k\in \text{Par}(j)$, and denote the corresponding residuals by $\hat{\epsilon}^*_{ij}.$
  \item Calculate the bootstrap statistic $T_n^*$ based on $\{\hat{\epsilon}^*_{ij}\}$.
  \item Repeat the above steps $B$ times and let $\{T_{b,n}^*\}^{B}_{b=1}$ be the corresponding values of the bootstrap statistics. The $p$-value is given by $B^{-1}\sum_{b=1}^{B}\{T_{b,n}^* > T_n\}$.
\end{enumerate}
Pfister et al. (2018) proposed to bootstrap the residuals directly and used the bootstrapped residuals to construct the test statistic. In contrast, we suggest the use of the above residual bootstrap
to capture the estimation effect caused by replacing $f_{j,k}$ with the estimate $\hat{f}_{j,k}$.

\subsection{Real data example}\label{6.2}
We now apply the model diagnostic checking procedure for DAG to one real world dataset. A population of women who were at least 21 years old, of Pima Indian heritage and living near Phoenix, Arizona, was tested for diabetes according to World Health Organization criteria. The data were collected by the US National Institute of Diabetes and Digestive and Kidney Diseases. We downloaded the data from \url{https://archive.ics.uci.edu/ml/datasets/Pima+Indians+Diabetes}. We focus only on the following five variables : Age, Body Mass Index (BMI), 2-Hour Serum Insulin (SI), Plasma Glucose Concentration (glu) and Diastolic Blood Pressure (DBP). Further, we only selected the instances with non-zero values, as it seems that zero values encode missing data. This yields $n=392$ samples.

Now, age is likely to affect all the other variables (but of course not the other way round). Moreover, serum insulin also has plausible causal effects on BMI and plasma glucose concentration. We try to determine the correct causal structure out of $48$ candidate DAG models and perform model diagnostic checking for each of the $48$ models, as illustrated in Section 6.1. We first center each of the variables and scale them so that $l_2$ norm of each of the variables is $\sqrt{n}$. We perform the mutual independence test of residuals based on the statistics $\widetilde{JdCov^2}$, $\widetilde{JdCov^2_S}$ and $\widetilde{JdCov^2_R}$ with $c =1$, and compare with the bootstrap-assisted version of the dHSIC-based test proposed in Pfister et al. (2018) and $T_{MT}$. For each of the tests, we implement the residual bootstrap to obtain the p-value with $B=1000.$
Figure \ref{fig1} shows the selected DAG models corresponding to the largest p-values from each of the five tests.

\begin{figure}[h]
  \centering
  \subfloat[$\widetilde{JdCov^2}$, $\widetilde{JdCov^2_S}$, $\widetilde{JdCov^2_R}$ and $T_{MT}$]{\includegraphics[width=0.4\textwidth]{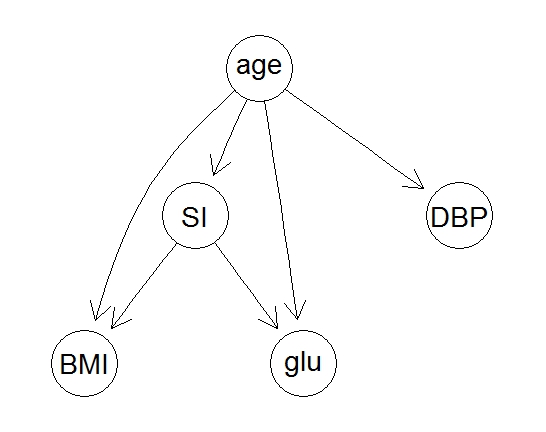}\label{fig:f1}}
  \qquad
  \subfloat[dHSIC]{\includegraphics[width=0.4\textwidth]{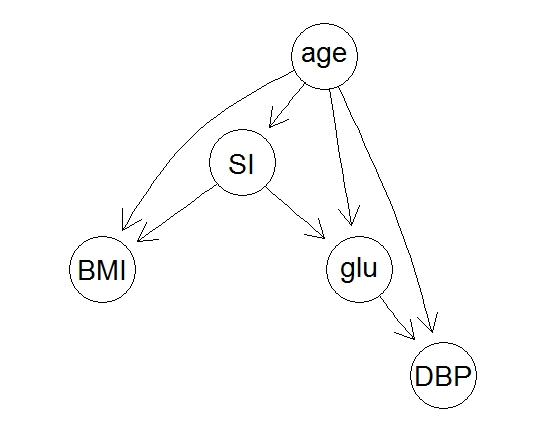}\label{fig:f2}}
  \caption{The DAG models corresponding to the largest p-values from the five tests.}\label{fig1}
\end{figure}

Figure \ref{fig:f1} shows the model with the maximum p-value among all the 48 candidate DAG models, when the test for joint independence of the residuals is conducted based on $\widetilde{JdCov^2}$, $\widetilde{JdCov^2_S}$ and $\widetilde{JdCov^2_R}$ and $T_{MT}$. This graphical structure goes in tune with the biological evidences of causal relationships among these five variables. Figure \ref{fig:f2} stands for the model with the maximum p-value when the test is based on dHSIC. Its only difference with Figure \ref{fig:f1} is that, it has an additional edge from glu to DBP, indicating a causal effect of Plasma Glucose Concentration on Diastolic Blood Pressure. We are unsure of any biological evidence that supports such a causal relationship in reality.

\begin{remark}\label{DAG_real_choose_c}
\rm
In view of Remark \ref{choice_c}, it might be intriguing to take into account the heuristic data-driven way of determining $c$ in the above example, instead of setting $c$ at a default value of $1$. Our findings indicate that choosing $c$ in the data-driven way leads to a slightly different result. The tests based on dHSIC and $\widetilde{JdCov^2_S}$ select the DAG model shown in Figure \ref{fig:f2} (considering the maximum p-value among all the $48$ candidate DAG models), whereas Figure \ref{fig:f1} is the DAG model selected when the test is based on $\widetilde{JdCov^2}$, $\widetilde{JdCov^2_R}$ and $T_{MT}$. The proposed tests (based on $\widetilde{JdCov^2}$ and $\widetilde{JdCov^2_R}$) still perform well.

\end{remark}

\subsection{A simulation study}\label{DAG sim}
We conduct a simulation study based on our findings in the previous real data example. To save the computational cost, we focus our attention on three of the five variables, viz. Age, glu and DBP. In the correct causal structure among these three variables, there are directed edges from Age to glu and
Age to DBP. We consider the additive structural equation models
\begin{align}\label{additive2}
X_j = \displaystyle \sum_{k \in \text{Par}(j)} \hat{f}_{j,k}(X_k) + e_j \, , \; j=1,2,3,
\end{align}
where $X_1,X_2,X_3$ correspond to Age, glu and DBP (after centering and scaling) respectively, and $\hat{f}_{j,k}$ denotes the estimated function from the real data. Note that $X_1$ is the only variable without any parent. In Section \ref{6.2}, we get from our numerical studies that the standard deviation of $X_1$ is 1.001, and the standard deviations of the residuals when $X_2$ and $X_3$ are regressed on $X_1$ (according to the structural equation models in ($\ref{additive}$), are 0.918 and 0.95, respectively. In this simulation study, we simulate $X_1$ from a zero mean Gaussian distribution with standard deviation 1. For $X_2$ and $X_3$, we simulate the noise variables from zero mean Gaussian distributions with standard deviations 0.918 and 0.95, respectively. The same $n=392$ is considered for the number of generated observations, and based on this simulated dataset we perform the model diagnostic checking for $27$ candidate DAG models. The number of bootstrap replications is set to be $B=100$ (to save the computational cost). This procedure is repeated 100 times to note how many times out of 100 that the five tests select the correct model, based on the largest p-value. The results in Table \ref{tb2} indicate that the proposed tests with $c=1$ and the dHSIC-based test outperform $T_{MT}$. \\

\begin{table}[!h]\footnotesize
\centering
\caption{The number of times (out of 100) that the true model is being selected.}
\label{tb2}
\begin{tabular}{ccccc}
\toprule
$\widetilde{JdCov^2}$ & $\widetilde{JdCov^2_S}$ & $\widetilde{JdCov^2_R}$ & dHSIC & $T_{MT}$ \\
\hline
45 & 61 & 54 & 52 & 32\\
\toprule
\end{tabular}
\end{table}

\vspace{0.3cm}

\begin{remark}\label{res boot}
\rm
A natural question to raise is why do we bootstrap the residuals and not test for the joint independence of the estimated residuals directly, to check for the goodness of fit of the DAG model. From the idea in Sen and Sen (2014), it appears that the joint distance covariance of the estimated residuals might have a limiting distribution different from the one stated in Proposition \ref{prop6}. We leave the formulation of a rigorous theory in support of that for future research. We present below the models selected most frequently (out of 100 times) by the different test statistics if we repeat the simulation study done above in Section \ref{DAG sim} without using residual bootstrap to re-estimate $f_{j,k}$. We immediately see that joint independence tests of the estimated residuals based on all of the five statistics we consider, select a DAG model that is meaningless and far away from the correct one.

\begin{figure}[h]
  \centering
  \subfloat[$\widetilde{JdCov^2}$, $\widetilde{JdCov^2_S}$, $\widetilde{JdCov^2_R}$, dHSIC]{\includegraphics[width=0.32\textwidth]{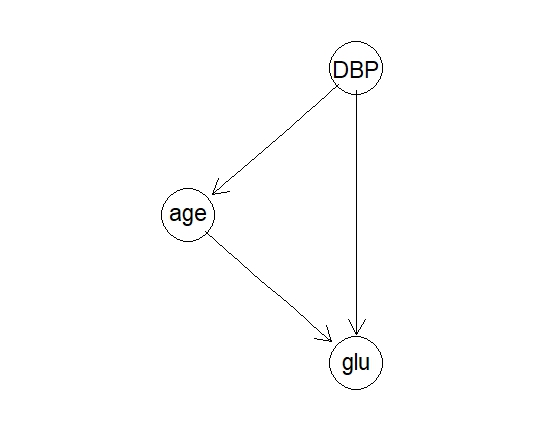}\label{fig:f3}}
  \;
  \subfloat[$T_{MT}$]{\includegraphics[width=0.32\textwidth]{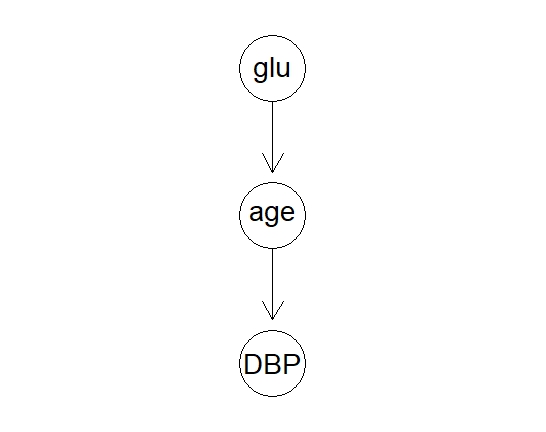}\label{fig:f4}}
  \;
  \subfloat[Correct model]{\includegraphics[width=0.32\textwidth]{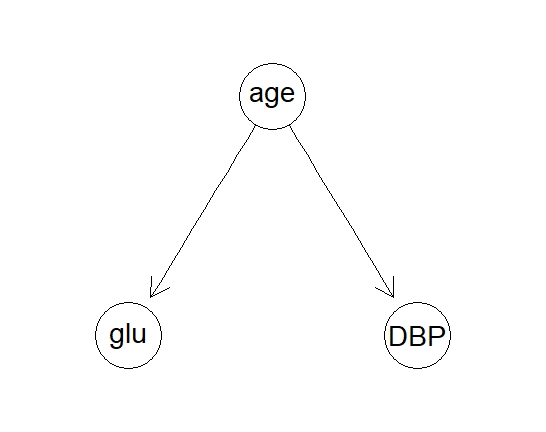}\label{fig:f5}}
  \caption{The DAG models selected (most frequently out of 100 times) by the five tests, without doing residual bootstrap to re-estimate $f_{j,k}$.}
  \label{fig2}
\end{figure}

\end{remark}

\begin{remark}\label{res boot_c}
\rm
In view of Remark \ref{choice_c}, it might be intriguing to take into account the heuristic data-driven way of choosing $c$ in the simulation study in Section \ref{DAG sim}, instead of setting $c$ at a default value of $1$. Our findings indicate that our proposed tests and the dHSIC-based test still outperform $T_{MT}$. In the context of Remark \ref{res boot}, if we repeat the simulation study done in Section \ref{DAG sim} (choosing $c$ in the heuristic way), we still reach the same conclusion presented in Remark \ref{res boot}.

\end{remark}

\section{Discussions}\label{sec:comp}

Huo and Sz\'ekely (2016) proposed an $O(n \log n)$ algorithm to compute dCov of univariate random variables. In a more recent work, Huang and Huo (2017) introduced a fast method for multivariate cases which is based on random projection and has computational complexity $O(n K \log n)$, where $K$ is the number of random projections. One of the possible directions for future research is to come up with a fast algorithm to compute JdCov. When $p_i=1$, we can indeed use the method in Huo and Sz\'ekely (2016) to compute JdCov. But their method may be inefficient when $d$ is large and it is not applicable to the case where $p_i>1$. Another direction is, to introduce the notion of Conditional JdCov in light of Wang et al. (2015), to test if the variables $(X_1,\dots,X_d)$ are jointly independent given another variable $Z$.

\vspace{0.3cm}

\section{Acknowledgments}

We acknowledge partial funding from the NSF grant DMS-1607320. We are grateful to the Editor and two very careful reviewers for their immensely helpful comments and suggestions which greatly improved the paper. It would also be our pleasure to thank Niklas Pfister for sharing with us the R codes used in Pfister et al. (2018).

\section{Supplementary materials}
The supplementary materials contain some additional proofs of the main results in the paper and tabulated numerical results from Section \ref{sec:num}.

\begin{proof}[Proof of Lemma \ref{lem1}]
By Lemma 1 of Sz\'{e}kely et al. (2007), we have
\begin{align*}
RHS=&\int_{\mathbb{R}^p}\bigg\{\E e^{\I\langle
t,X_i-X'_i\rangle}+e^{\I \langle t,x-x'\rangle}-\E e^{\langle
t,x-X'_i\rangle}-\E e^{\I \langle
t,X_i-x'\rangle}\bigg\}w_{p_i}(t)dt
\\=&\E\int_{\mathbb{R}^p}\bigg\{\cos(\langle t,X_i-X'_i\rangle)-1+\cos(\langle t,x-x'\rangle)-1+1-\cos(\langle t,x-X'_i\rangle)
\\&+1-\cos(\langle t,X_i-x'\rangle)\bigg\}w_{p_i}(t)dt+\I\int_{\mathbb{R}}\E\bigg\{\sin(\langle t,X_i-X'_i\rangle)+\sin(\langle t,x-x'\rangle)
\\&-\sin(\langle t,x-X'_i\rangle)-\sin(\langle t,X_i-x'\rangle)\bigg\}w_{p_i}(t)dt
\\=&\E|x-X'_i|+\E|X_i-x'|-|x-x'|-\E|X_i-X'_i|\, =\, U_i(x,x').
\end{align*}
Here we have used the fact that $\int_{\mathbb{R}}\{\sin(\langle
t,X-X'\rangle)+\sin(\langle t,x-x'\rangle)-\sin(\langle
t,x-X'\rangle)-\sin(\langle t,X-x'\rangle)\}w_{p_i}(t)dt=0.$
\end{proof}

\begin{proof}[Proof of Proposition \ref{prop0}]
To show (1), notice that for $a_i$, $c_i$ and orthogonal
transformations $A_i\in\mathbb{R}^{p_i\times p_i}$,
$$\E\prod_{i\in S}U_i(a_i+c_iA_iX_i,a_i+c_iA_iX_i')=\prod_{i\in S}|c_i|\,\E\prod_{i\in S}U_i(X_i,X_i'),$$
where $S\subset \{1,2,\dots,d\}$. The conclusion follows directly.
\end{proof}

\begin{proof}[Proof of Proposition \ref{prop:normality}]
The proof is essentially similar to the proof of Lemma 1.2 in the supplementary material of Yao et al. (2018). It is easy to verify that $$ E \prod_{i=1}^d U_i(X_i, X_i') = C \int_{\mathbb{R}^d} |A|^2 \frac{dt_1}{t_1^2}\dots \frac{dt_d}{t_d^2} \; ,$$ where $C$ is some constant and
\begin{align}\label{splitup}
\begin{split}
A = e^{-\frac{t_1^2 + \dots + t_d^2}{2}}\; &\Big[ e^{-\rho t_1 t_2} +e^{-\rho t_1 t_3} + \dots \;\; \binom{d}{2}\; \textrm{similar terms} \\
& - e^{-\rho t_1 t_2 - -\rho t_1 t_3 -\rho t_2 t_3} - e^{-\rho t_1 t_2 -\rho t_1 t_4 -\rho t_2 t_4} - \dots \;\; \binom{d}{3}\; \textrm{similar terms}\\
& + e^{-\rho t_1 t_2 - -\rho t_1 t_3 -\rho t_1 t_4 -\rho t_2 t_3 -\rho t_2 t_4 -\rho t_3 t_4} + \dots \;\; \binom{d}{4}\; \textrm{similar terms}\\
& + \;\;\dots \;\; - (d-1) \;\; \Big] \; .
\end{split}
\end{align}
For example, if $d \geq 4$ and we use the Taylor's expansion $e^x = 1 + x + \frac{x^2}{2!} + \displaystyle \sum_{l=3}^{\infty}\frac{x^l}{l!}$, then keeping in mind the multinomial theorem $$(a_1 + \dots + a_q)^2 = \displaystyle \sum_{\substack{l_1, \dots, l_q=0 \\ l_1 + \dots + l_q =2}}^2 \frac{2!}{l_1! \dots l_q!}\;\; ,\;\; q\geq 2 \; ,$$ it is easy to check that the leading terms and their coefficients (upto some constants) are

\begin{table}[h!]
\centering
\begin{tabular}{c | c}
 Leading terms & Coefficients (upto some constants)  \\ [0.5ex]
 \hline
 $t_i t_j$ & $1 - \binom{d-2}{1} + \binom{d-2}{2} - \dots$ \; \; ($=0$ for $d>2$)  \\
 $t_i^2 t_j^2$ & $1 - \binom{d-2}{1} + \binom{d-2}{2} - \dots$ \; \; ($=0$ for $d>2$) \\
 $t_i^2 t_j t_k$ & $-1 + \binom{d-3}{1} - \binom{d-3}{2} + \dots$ \; \; ($=0$ for $d>3$)  \\
 $t_i t_j t_k t_l$ & \qquad \qquad $1 - \binom{d-4}{1} + \binom{d-4}{2} - \dots$ \;\; ($=1$ if $d=4$, $=0$ for $d>4$)  \\

 \hline
\end{tabular}\; .
\end{table}

To get a non-trivial upper bound for $ E \displaystyle\prod_{i=1}^d U_i(X_i, X_i')$, we need to consider the Taylor's expansion $e^x = 1 + x + \frac{x^2}{2!} + \dots + \frac{x^k}{k!}  + \displaystyle \sum_{l=k+1}^{\infty} \frac{x^l}{l!}$, when $d=2k-1$ or $d=2k$, $k\geq2$, and the only leading term in the Taylor's expansion of (\ref{splitup}) that would lead to a term with non-vanishing coefficient, is $\frac{x^k}{k!}$. To see this, note that when $d=4$, i.e., $k=2$, it is shown in Lemma 1.2 in the supplementary material of Yao et al. (2018) that the only non-vanishing term is $t_1 t_2 t_3 t_4$ (upto some constants). Likewise for $d=5$ and $6$, the only non-vanishing leading terms (upto some constants) are :

\begin{table}[!h]
\centering
\begin{tabular}{c | c | c}
 $d$ & $k$ & The only non-vanishing term (upto some constants)  \\ [0.5ex]
 \hline
 5 & 3 & $(t_i t_j)^1 (t_i t_a)^1 (t_l t_m)^1 = t_i^2 t_j t_a t_l t_m$  \\
 6 & 3 & $(t_i t_j)^1 (t_a t_l)^1 (t_m t_n)^1 = t_i t_j t_a t_l t_m t_n$ \\
 \hline
\end{tabular}\; ,
$i \neq j \neq a \neq l \neq m \neq n$ .
\end{table}

\newpage

In general when $d=2k-1$, for $k\geq 2$, the only non-vanishing term (upto some constants) is $t_{i_1}^2 t_{i_2}\dots t_{i_d}$, where $(i_1, \dots, i_d)$ is any permutation of $(1,2,\dots,d)$. Suppose $P_d$ denotes the set of all possible permutations of $(1,2,\dots,d)$. Then
\begin{align*}
E \displaystyle\prod_{i=1}^d U_i(X_i, X_i')\; &=\; c_0 \int_{\mathbb{R}^d}|e^{-\frac{t_1^2 + \dots + t_d^2}{2}} (\,c_1 \rho^k \displaystyle \sum_{\substack{(i_1, \dots, i_d) \in P_d}} t^2_{i_1} t_{i_2} \dots t_{i_d} + R\,)\,|^2 \, \frac{dt_1}{t_1^2}\dots \frac{dt_d}{t_d^2} \\
&= A_0 + A_1 + A_2 + A_3 \; ,
\end{align*}

where $$A_0\; = \;\tilde{c}_0 \,\rho^{2k}\, \displaystyle \sum_{\substack{(i_1, \dots, i_d) \in P_d}} \, \int_{\mathbb{R}^d} e^{-(t_1^2 + \dots + t_d^2)}\; t_{i_1}^4 t_{i_2}^2 \dots t_{i_d}^2 \;\frac{dt_1}{t_1^2}\dots \frac{dt_d}{t_d^2} \;,$$ $$A_1\; = \;\tilde{c}_1 \,|\rho|^{k}\, \displaystyle \sum_{\substack{(i_1, \dots, i_d) \in P_d}} \, \int_{\mathbb{R}^d} e^{-(t_1^2 + \dots + t_d^2)}\; t_{i_1}^2 t_{i_2} \dots t_{i_d} \times R \;\frac{dt_1}{t_1^2}\dots \frac{dt_d}{t_d^2}\; ,$$ $$A_2\; = \;\tilde{c}_2 \,\rho^{2k}\, \displaystyle \sum_{\substack{(i_1, \dots, i_d) \in P_d}} \, \int_{\mathbb{R}^d} e^{-(t_1^2 + \dots + t_d^2)}\; t_{i_1}^3 t_{i_2}^3 t_{i_3}^2 \dots t_{i_d}^2 \;\frac{dt_1}{t_1^2}\dots \frac{dt_d}{t_d^2}\; ,$$ and $$ A_3\; = \;\tilde{c}_3  \int_{\mathbb{R}^d} e^{-(t_1^2 + \dots + t_d^2)}\; \times R^2 \;\frac{dt_1}{t_1^2}\dots \frac{dt_d}{t_d^2}\; ,$$ $c_0, c_1, \tilde{c}_0, \tilde{c}_1$, $\tilde{c}_2$ and $\tilde{c}_3$ being some constants and $R$ being the remainder term from the Taylor's expansion. Following the similar arguments of Yao et al. (2018), it can be shown that $$ A_0 \;= \;O(|\rho|^{2k})\; , \; A_1\;= \;O(|\rho|^{2k+1})\;, \; A_2 \;= \;O(|\rho|^{2k})\; \textrm{and} \; A_3\;= \;O(|\rho|^{2k+2})\;.$$ Thus for $d=2k-1, k\geq 2$, $$E \displaystyle\prod_{i=1}^d U_i(X_i, X_i')\;= \; O(|\rho|^{2k})\;.$$

And when $d=2k$, for $k\geq 2$, the only non-vanishing term (upto some constants) is $t_1 t_2 \dots t_d$. Consequently
\begin{align*}
E \displaystyle\prod_{i=1}^d U_i(X_i, X_i')\; &=\; c_0' \int_{\mathbb{R}^d}|e^{-\frac{t_1^2 + \dots + t_d^2}{2}} (\,c_1' \rho^k t_1 t_2 \dots t_d + R'\,)\,|^2 \, \frac{dt_1}{t_1^2}\dots \frac{dt_d}{t_d^2} \\
&= A_0' + A_1' + A_2' \; ,
\end{align*}

where $$A_0'\; = \;\tilde{c}_0' \,\rho^{2k} \int_{\mathbb{R}^d} e^{-(t_1^2 + \dots + t_d^2)}\; t_1^2 t_2^2 \dots t_d^2 \;\frac{dt_1}{t_1^2}\dots \frac{dt_d}{t_d^2} \;,$$ $$A_1'\; = \;\tilde{c}_1' \,|\rho|^{k} \int_{\mathbb{R}^d} e^{-(t_1^2 + \dots + t_d^2)}\; t_1 t_2 \dots t_d \times R' \;\frac{dt_1}{t_1^2}\dots \frac{dt_d}{t_d^2}\; ,$$ and $$ A_2'\; = \;\tilde{c}_2'  \int_{\mathbb{R}^d} e^{-(t_1^2 + \dots + t_d^2)}\; \times R'^2 \;\frac{dt_1}{t_1^2}\dots \frac{dt_d}{t_d^2}\; ,$$ $c_0', c_1', \tilde{c}_0', \tilde{c}_1'$ and $\tilde{c}_2'$ being some constants and $R'$ being the remainder term from the Taylor's expansion. Again following the similar arguments of Yao et al. (2018), it can be shown that $$ A_0' \;= \;O(|\rho|^{2k})\; , \; A_1'\;= \;O(|\rho|^{2k+1})\; \textrm{and} \; A_2'\;= \;O(|\rho|^{2k+2})\;.$$ Thus for $d=2k, k\geq 2$, $$E \displaystyle\prod_{i=1}^d U_i(X_i, X_i')\;= \; O(|\rho|^{2k})\;,$$ which completes the proof.

\end{proof}

\begin{proposition}\label{prop01}

\begin{enumerate}[(1)]

\item $dCov^2(X_1,\dots,X_d)\leq \E[\prod^{d}_{j=1}\min\{a_j(X_j),a_j(X_j')\}]$ with $a_j(x)=\max\{\E|X_j-X_j'|,\,|\E|X_j-X_j'|-2\E|x-X_j||\}.$
For any partition $S_1\cup S_2=\{1,2,\dots,d\}$ and $S_1\cap S_2=\emptyset$, we have
$dCov^2(X_1,\dots,X_d)\leq \E[\prod_{i\in
S_1}a_j(X_j)]\, \E[\prod_{i\in S_2}a_j(X_j)].$
\item $dCov^2(X_1,\dots,X_d) \leq
\prod^{d}_{i=1}\left\{\E[|U_i(X_i,X_i')|^d]\right\}^{1/d}.$
In particular, when $d$ is even, $dCov^2(X_1,\dots,X_d)\leq
\prod_{i=1}^{d}dCov^2(\underbrace{X_i,\dots,X_i}_{d})^{1/d}$.
\item Denote by $\mu_j$ the uniform probability measure on the unit sphere $\mathcal{S}^{p_j-1}$. Then
\begin{align*}
&dCov^2(X_1,\dots,X_d)=C\int_{\prod_{j=1}^{d}\mathcal{S}^{p_j-1}}dCov^2(\langle u_1,X_1\rangle,\dots,\langle u_d,X_d\rangle)d\mu_1(u_1)\cdots d\mu_d(u_d),
\end{align*}
and \begin{align*}
&JdCov^2(X_1,\dots,X_d;c)
\\=&C'\int_{\prod_{j=1}^{d}\mathcal{S}^{p_j-1}}JdCov^2(\langle u_1,X_1\rangle,\dots,\langle u_d,X_d\rangle;c)d\mu_1(u_1)\cdots d\mu_d(u_d),
\end{align*}
for some positive constants $C$ and $C'$.
\end{enumerate}
\end{proposition}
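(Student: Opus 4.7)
The plan is to handle the three parts sequentially, driven by a single pointwise bound on $U_i$ (for (1) and (2)) and a one-dimensional projection identity (for (3)). First I would establish
\[
|U_i(x,x')|\leq \min\{a_i(x),a_i(x')\}.
\]
The triangle inequality $\bigl||x-X_i'|-|x-x'|\bigr|\leq |x'-X_i'|$ gives, after taking expectation, $-\E|X_i-x'|\leq \E|x-X_i'|-|x-x'|\leq \E|X_i-x'|$; combining with the remaining two terms in the definition of $U_i$ yields
\[
-\E|X_i-X_i'|\leq U_i(x,x')\leq 2\E|X_i-x'|-\E|X_i-X_i'|,
\]
so $|U_i(x,x')|\leq a_i(x')$. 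The symmetry $U_i(x,x')=U_i(x',x)$, which is immediate since $X_i,X_i'$ are i.i.d., then produces the minimum on the right. Part (1) now follows easily: $dCov^2(X_1,\dots,X_d)=\E\prod_i U_i(X_i,X_i')\leq \E\prod_i|U_i(X_i,X_i')|\leq \E\prod_i\min\{a_i(X_i),a_i(X_i')\}$. For the partition refinement I would further bound $\min\{a_i(X_i),a_i(X_i')\}$ above by $a_i(X_i)$ for $i\in S_1$ and by $a_i(X_i')$ for $i\in S_2$, then use the independence of $\{X_i\}_{i=1}^d$ and $\{X_i'\}_{i=1}^d$ together with their equality in distribution to factorize the resulting expectation.

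Part (2) is a direct application of the generalized H\"older inequality to the representation (\ref{eqn2}):
\[
\E\prod_{i=1}^{d}U_i(X_i,X_i')\leq \prod_{i=1}^{d}\bigl(\E|U_i(X_i,X_i')|^d\bigr)^{1/d}.
\]
When $d$ is even, $U_i(X_i,X_i')^d$ is automatically non-negative, and specializing the definition of $d$th order dCov by setting every argument equal to $X_i$ yields $\E U_i(X_i,X_i')^d=dCov^2(X_i,\dots,X_i)$, from which the stated product bound follows.

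For part (3) the key tool is the classical projection identity $|x-y|_{p}=C_{p}\int_{\mathcal{S}^{p-1}}|\langle u,x-y\rangle|\,d\mu(u)$ for a constant $C_p>0$, of the type exploited in Lyons (2013). Substituting this into each of the four terms of $U_i$ and applying Fubini gives
\[
U_i(x,x')=C_{p_i}\int_{\mathcal{S}^{p_i-1}}U_i^{(u)}(\langle u,x\rangle,\langle u,x'\rangle)\,d\mu_i(u),
\]
where $U_i^{(u)}$ is the univariate $U$-function associated with the scalar random variable $\langle u,X_i\rangle$. Multiplying these representations over $i=1,\dots,d$, taking expectation, and interchanging it with the $d$-fold spherical integral produces the dCov identity; the JdCov identity then follows immediately from the linear-combination definition (\ref{def1}) applied subset by subset. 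I expect the main obstacle to lie in this last step: one must verify that the emerging constant $C'$ is the same for every $k$th order dCov contribution, so that it can be pulled out of the linear combination defining JdCov, and that the Fubini interchange is legitimate under the moment conditions implicit in Assumption \ref{ass1}. The arguments for (1) and (2) are essentially triangle-inequality and H\"older-type manipulations, so they should be mechanical once the pointwise bound on $U_i$ is in place.
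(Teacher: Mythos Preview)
Your treatment of parts (1) and (2) is essentially identical to the paper's: the same triangle-inequality bound $|U_j(x,x')|\le\min\{a_j(x),a_j(x')\}$ followed by independence of $\{X_j\}$ and $\{X_j'\}$ for (1), and a straight H\"older application to the representation $\E\prod_i U_i(X_i,X_i')$ for (2).

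For the $dCov^2$ identity in (3) your route genuinely differs from the paper's. The paper works in the frequency domain: it passes to polar coordinates $t_i=r_iu_i$ in the defining integral $\int|\E\prod_i(f_i(t_i)-e^{\I\langle t_i,X_i\rangle})|^2\,dw$, recognizes the radial integral as the univariate $dCov^2$ of the projections, and collects the surface integrals. You instead work in the spatial domain via the norm identity $|x|_{p}=C_{p}\int_{\mathcal{S}^{p-1}}|\langle u,x\rangle|\,d\mu(u)$ applied to each term of $U_i$, then take expectations and Fubini. Both arguments are valid and yield the same constant $C=\prod_{i=1}^d C_{p_i}$ with $C_{p}=\sqrt{\pi}\,\Gamma((p+1)/2)/\Gamma(p/2)$; your approach is slightly more direct once representation (\ref{eqn2}) is available, while the paper's approach avoids invoking the projection identity for $|\cdot|$.

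Your caution about the $JdCov^2$ identity is well placed and is in fact a real gap. Running either argument subset by subset gives
\[
dCov^2(X_{i_1},\dots,X_{i_k})=\Bigl(\textstyle\prod_{l=1}^{k}C_{p_{i_l}}\Bigr)\int_{\prod_{j=1}^{d}\mathcal{S}^{p_j-1}}dCov^2(\langle u_{i_1},X_{i_1}\rangle,\dots,\langle u_{i_k},X_{i_k}\rangle)\,d\mu_1\cdots d\mu_d,
\]
since the extra spherical integrations contribute $1$. The prefactor $\prod_{l}C_{p_{i_l}}$ depends on the subset whenever the dimensions $p_j$ are not all equal (for instance $C_1=1$, $C_2=\pi/2$, $C_3=2$), so a single constant $C'$ cannot in general be factored out of the linear combination defining $JdCov^2$. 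The paper's own proof does not address this: the displayed chain there establishes only the $dCov^2$ identity (the occurrence of ``$JdCov^2$'' on the right-hand side is evidently a typo for ``$dCov^2$''). So the $JdCov^2$ assertion in (3) holds as stated only when all $p_j$ coincide; you were right to flag it, and neither your argument nor the paper's closes that gap in the general case.
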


\begin{proof}[Proof of Proposition \ref{prop01}]
To prove (1),
we have by the triangle inequality
\begin{align*}
\left|\E[|X_j-x'|]-|x-x'|\right|\leq \E[|x-X_j'|].
\end{align*}
 Thus we have $|U_j(x,x')|\leq
\min\{a_j(x),a_j(x')\},$ which implies that
$$\E\left[\prod^{d}_{j=1}U_j(X_j,X'_j)\right]\leq \E\left[\prod^{d}_{j=1}\min\{a_j(X_j),a_j(X_j')\}\right].$$
For any partition $S_1\cup S_2=\{1,2,\dots,d\}$ and $S_1\cap S_2=\emptyset$, using the independence between $X_j$ and $X_j'$, we get
$$dCov^2(X_1,X_2,\dots,X_d)\leq \E\left[\prod_{i\in S_1}a_j(X_j)\right]\E\left[\prod_{i\in S_2}a_j(X_j)\right].$$
(2) follows from the H\"{o}lder's inequality directly. Finally, by the change of variables: $t_1=r_i u_i$ where $r_i \in (0,+\infty)$ and $u_i \in \mathcal{S}^{p_i-1}$, we have
\begin{align*}
&dCov^2(X_1,X_2,\dots,X_d)
\\=&\int_{\mathbb{R}^{p_0}}\left|\E\left[\prod_{i=1}^{d}(f_i(t_i)-e^{\I \langle t_i,X_i\rangle})\right]\right|^2dw
\\=&C_1\int_{\mathcal{S}_+^{p_1}}\cdots \int_{\mathcal{S}_+^{p_d}}\int_{-\infty}^{+\infty}\cdots \int_{-\infty}^{+\infty}\left|\E\left[\prod_{i=1}^{d}(\E e^{\I r_i\langle u_i,X_i\rangle}-e^{\I r_i\langle u_i,X_i\rangle})\right]\right|^2\prod^{d}_{i=1}d\mu_i(u_i)dr_i
\\=&C_2\int_{\mathcal{S}_+^{p_1}}\cdots \int_{\mathcal{S}_+^{p_d}}JdCov^2(\langle u_1,X_1\rangle,\dots,\langle u_d,X_d\rangle;c)d\mu_1(u_1)\cdots d\mu_d(u_d)
\\=&C_3\int_{\mathcal{S}^{p_1}}\cdots \int_{\mathcal{S}^{p_d}}JdCov^2(\langle u_1,X_1\rangle,\dots,\langle u_d,X_d\rangle;c)d\mu_1(u_1)\cdots d\mu_d(u_d),
\end{align*}
where $C_1,C_2,C_3$ are some positive constants.
\end{proof}

Property (1) gives an upper bound for
$dCov^2(X_1,X_2,\dots,X_d)$, which is motivated by Lemma 2.1
of Lyons (2013), whereas an alternative upper bound is given in
Property (2) which follows directly from the H\"{o}lder's
inequality. Property (3) allows us to represent $dCov$ of random
vectors of any dimensions as an integral of $dCov$ of univariate
random variables, which are the projections of the aforementioned
random vectors.

\begin{proof}[Proof of Proposition \ref{prop1}]
The ``if'' part is trivial. To prove the ``only if'' part, we
proceed using induction. Clearly this is true if $d=2$. Suppose the
result holds for $d=m$. Note that $dCov^2(X_1,X_2,\dots,X_{m+1})=0$
implies that $\E\left[\prod_{i=1}^{m+1}(f_i(t_i)-e^{\I \langle t_i
X_i\rangle})\right]=0$ almost everywhere. Thus we can write the higher
order effect $f_{12\cdots (m+1)}(t_1,\dots,t_{m+1})-\prod^{m+1}_{i=1} f_i(t_i)$ as
a linear combination of the lower order effects. By the assumption that
$(X_{i_1},\dots,X_{i_m})$ are mutually independent for any
$m$-tuples in $I^d_m$ with $m<d$, we know $f_{12\cdots
(m+1)}(t_1,\dots,t_{m+1})-\prod^{m+1}_{i=1} f_i(t_i)=0.$
\end{proof}

\begin{proof}[Proof of Proposition \ref{prop2}]
Notice that
\begin{align*}
&\prod^{d}_{i=1}\left( U_i(X_i,X'_i)+c\right)
\\=&c^d+c^{d-1}\sum_{i=1}^{d}U_i(X_i,X_i')+c^{d-2}\sum_{(i_1,i_2)\in
I^d_2}U_{i_1}(X_{i_1},X_{i_1}')U_{i_2}(X_{i_2},X_{i_2}')
\\&+\cdots+\prod^{d}_{i=1}U_i(X_i,X'_i).
\end{align*}
The conclusion follows from the fact that
$\E[U_i(X_i,X_i')]=0$, equation (\ref{eqn2}) and the definition of JdCov.
\end{proof}

\begin{proof}[Proof of Proposition \ref{prop-cha}]
We only prove the ``if'' part. If $dcf(t_1,\dots,t_d)$ can be
factored, $U(X_i,X_i')$ are independent. Therefore, it is
easy to see that $JdCov^2(X_1,\dots,X_d;c)=0$,
which implies that $\{X_1,\dots,X_d\}$ are mutually independent by
Proposition \ref{prop1}.
\end{proof}

\begin{proof}[Proof of Lemma \ref{lemma2}]
The RHS of
(\ref{eq-lemma2}) in the main paper is equal to
\begin{align*}
&\frac{1}{n^2}\sum^{n}_{k,l=1}\prod_{i=1}^{m}\int_{\mathbb{R}}(\hat{f}_i(t_i)-e^{\I
\langle t_i,X_{ki}\rangle})(\hat{f}_i(-t_i)-e^{-\I \langle t_i,
X_{li}\rangle})w_{p_i}(t_i)dt_i.
\end{align*}
Thus it is enough to prove that
\begin{align*}
\int (\hat{f}_i (t_i) - e^{\imath \langle t_i , X_{ki}\rangle})(\hat{f}_i (-t_i) - e^{-\imath \langle t_i , X_{li}\rangle}) \, w_{p_i}(t_i) dt_i \, &= \, \widehat{U}_{i}(k,l).
\end{align*}
To this end, we note that
\begin{align*}
&(\hat{f}_i (t_i) - e^{\imath \langle t_i, X_{ki}\rangle})\, (\hat{f}_i (-t_i) - e^{-\imath \langle t_i , X_{li}\rangle})
\\ = \; & \hat{f}_i (t_i)\hat{f}_i (-t_i) \,-\, e^{\imath \langle t_i , X_{ki}\rangle}\hat{f}_i(-t_i)\, -\, \hat{f}_i(t_i) e^{-\imath \langle t_i , X_{li}\rangle}\, +\, e^{\imath \langle t_i ,(X_{ki} - X_{li})\rangle}
\\ = \;& \frac{1}{n^2} \displaystyle \sum_{k,l=1}^n e^{\imath \langle t_i , (X_{ki} - X_{li})\rangle}\, -\, \frac{1}{n} \displaystyle \sum_{l=1}^n e^{\imath \langle t_i ,(X_{ki} - X_{li})\rangle}\, -\, \frac{1}{n} \displaystyle \sum_{k=1}^n e^{\imath \langle t_i ,(X_{ki} - X_{li})\rangle}\, + \,e^{\imath \langle t_i ,(X_{ki} - X_{li})\rangle}
\\ = \;& \frac{1}{n} \displaystyle \sum_{l=1}^n ( 1 - e^{\imath \langle t_i ,(X_{ki} - X_{li})\rangle} )\, + \, \frac{1}{n} \displaystyle \sum_{k=1}^n ( 1 - e^{\imath \langle t_i ,(X_{ki} - X_{li})\rangle} )- ( 1 - e^{\imath \langle t_i ,(X_{ki} - X_{li})\rangle} )
\\& -\frac{1}{n^2} \displaystyle \sum_{k,l=1}^n ( 1 - e^{\imath \langle t_i ,(X_{ki} - X_{li})\rangle} )  \; .
\end{align*}
Using (2.11) of Sz\'{e}kely et al. (2007), we obtain
\begin{align*}
&\int (\hat{f}_i (t_i) - e^{\imath \langle t_i , X_{ki}\rangle}) (\overline{\hat{f}_i (t_i)} - e^{-\imath \langle t_i , X_{li}\rangle}) \, w_{p_i}(t_i) dt_i
\\ = &\; \frac{1}{n} \displaystyle \sum_{l=1}^n |X_{ki} - X_{li}|_{p_i} \, + \, \frac{1}{n} \displaystyle \sum_{k=1}^n |X_{ki} - X_{li}|_{p_i} \, - \, |X_{ki} - X_{li}|_{p_i} \, - \, \frac{1}{n^2} \displaystyle \sum_{k,l=1}^n |X_{ki} - X_{li}|_{p_i}
\\ = &\; \widehat{U}_{i} (k,l) \; .
\end{align*}
Finally, (\ref{eq2-lemma2}) in the paper follows from (\ref{eq-lemma2}) in the paper and the
definition of $\widehat{JdCov^2}$.
\end{proof}


\begin{proof}[Proof of Proposition \ref{prop5}]
Define
\begin{align*}
\xi (t_1, t_2,\dots,t_d) = &\, \E\left[\prod_{i=1}^d (f_i (t_i) -
e^{\imath \langle t_i , X_i \rangle})\right],\quad \xi_n (t_1,
t_2,\dots,t_d)=\frac{1}{n} \displaystyle \sum_{j=1}^n \prod_{i=1}^d
(\hat{f_i} (t_i) - e^{\imath \langle t_i , X_{ji}\rangle}),
\end{align*}
and note that
\begin{align*}
&dCov^2(X_1, X_2,\cdots, X_d) = \int \left| \xi (t_1, t_2,\dots,t_d)\right|^2 dw,\\
&\widehat{dCov^2}(X_1, X_2,\cdots, X_d) = \int \left| \xi_n (t_1, t_2,\dots,t_d)\right|^2 dw.
\end{align*}
Direct calculation shows that
\begin{align*}
\xi (t_1, t_2,\dots,t_d) = &\prod^{d}_{j=1}f_j-d\prod^{d}_{j=1}f_j+\left(f_{12}\prod_{j\neq 1,2}f_j+f_{13}\prod_{j\neq 1,3}f_j+\cdots\right)
\\ &-\left(f_{123}\prod_{j\neq 1,2,3} f_j + f_{124}\prod_{j\neq 1,2,4}f_j+\cdots\right)+\cdots+(-1)^df_{12,\dots,d},
\end{align*}
and $ \xi_n (t_1, t_2,\dots,t_d)$ has the same expression by
replacing the characteristic functions by their empirical
counterparts in $\xi (t_1, t_2,\dots,t_d)$.
Then by the strong law of large numbers, we have for any fixed $(t_1,t_2,\dots,t_d)$,
\begin{align*}
\xi_n (t_1, t_2,\dots,t_d)\overset{a.s}{\longrightarrow}\xi (t_1,t_2,\dots,t_d).
\end{align*}
For complex numbers $x_1, x_2,\dots,x_n$ with $n\geq2$, the CR inequality says that for any $r>1$
\begin{equation}\label{eq8}
\left|\sum_{i=1}^n x_i\right|^r \leq n^{r-1}\sum_{i=1}^n |x_i|^r.
\end{equation}
Using (\ref{eq8}), we get
\begin{align*}
|\xi_n (t_1, t_2,\dots, t_d)|^2 = &\; \left|\,\frac{1}{n}
\displaystyle \sum_{j=1}^n \prod_{i=1}^d (\hat{f_i} (t_i) -
e^{\imath \langle t_i , X_{ji}\rangle})\,\right|^2 \leq
\frac{1}{n^2}\, n^{2-1} \displaystyle \sum_{j=1}^n \prod_{i=1}^d
\left|\hat{f_i} (t_i) - e^{\imath \langle t_i ,
X_{ji}\rangle}\right|^2
\\ = & \; \frac{1}{n} \displaystyle \sum_{j=1}^n \prod_{i=1}^d 4 =4^d.
\end{align*}
For any $\delta > 0$, define $D(\delta) \, = \, \{(t_1, t_2,\dots,t_d) \, : \, \delta \leq |t_i|_{p_i} \leq 1/\delta \, ,\; i=1,2,\dots,d\}.$ Notice that
\begin{equation*}
\begin{split}
\widehat{dCov^2}(X_1, X_2,\dots,X_d) = & \; \int_{D(\delta)} | \,\xi_n (t_1, t_2,\dots, t_d)\,|^2 \, dw \, + \, \int_{D^c(\delta)} | \,\xi_n (t_1, t_2,\dots,t_d)\,|^2 dw
\\ = & \;\; D_{n,\delta}^{(1)} \, + \, D_{n,\delta}^{(2)} \quad (\text{say}),
\end{split}
\end{equation*}
where $D_{n,\delta}^{(1)}\leq \int_{D(\delta)} 4^d < \infty$. Using
the Dominated Convergence Theorem (DCT), we have as $n \to \infty$ ,
\[ D_{n,\delta}^{(1)} \, \overset{a.s}{\longrightarrow}\,
\int_{D(\delta)} | \,\xi (t_1, t_2, ..\,, t_d)\,|^2 \, dw \, = \,
D_{\delta}^{(1)}\quad \text{(say)}.\] So, almost surely
\begin{equation*}
\begin{split}
\lim_{\delta \to 0} \lim_{n \to \infty} \, D_{n,\delta}^{(1)} \, = & \, \lim_{\delta \to 0}\, D_{\delta}^{(1)}=\int | \,\xi (t_1, t_2, ..\,, t_d)\,|^2 \, dw =dCov^2(X_1, X_2, \, .. \, , X_d).
\end{split}
\end{equation*}
The proof will be complete if we can show almost surely
\[ \lim_{\delta \to 0} \lim_{n \to \infty} \, D_{n,\delta}^{(2)} \, = \, 0. \]

To this end, write $D^c(\delta) = \bigcup_{i=1}^d (A_i^{1} \cup
A_i^{2})$, where $A_i^{1} = \{|t_i|_{p_i} < \delta\}$ and $A_i^{2} =
\{|t_i|_{p_i} > \frac{1}{\delta}\}$ for $i=1,2,\dots,d.$ Then we
have
$$ D_{n,\delta}^{(2)}= \int_{D^c(\delta)} | \,\xi_n (t_1, t_2,\dots,t_d)\,|^2 \, dw \, \leq \, \sum_{\substack{
   i=1,2,\dots,d \\
   k=1,2
  }} \int_{A_i^{k}}|\xi_n (t_1, t_2,\dots,t_d)|^2dw.$$
Define $u_j^i=e^{\imath \langle t_i , X_{ji}\rangle} - f_i (t_i)$
for $1\leq j\leq n$ and $1\leq i\leq d.$ Following the proof of
Theorem 2 of Sz\'{e}kely \textit{et al.} (2007), we have for $i=1,
2,\dots,d$,
\begin{align}
&\int_{\mathbb{R}^{p_i}} \frac{|u_j^i|^2}{c_{p_i} |t_i|_{p_i}^{1+p_i}}\, dt_i \; \leq \; 2 \,(\, |X_{ji}| + \E|X_i|\,) ,  \label{as-1}\\
&\int_{|t_i|_{p_i} < \delta} \frac{|u_j^i|^2}{c_{p_i} |t_i|_{p_i}^{1+p_i}}\, dt_i \; \leq \; 2 \,\E[|X_{ji} - X_i||X_{ji}] \, G(\,|X_{ji} - X_i|\delta \,) , \label{as-2}\\
&\int_{|t_i|_{p_i} > 1/\delta} \frac{|u_j^i|^2}{c_{p_i} |t_i|_{p_i}^{1+p_i}}\, dt_i \; \leq \; 4 \delta, \label{as-3}
\end{align}
where
\begin{equation*}
G(y) = \int_{|z| < y} \frac{1 - \cos z_1 }{|z|^{1+p}}dz,
\end{equation*}
which satisfies that $ G(y) \leq c_p$ and $\displaystyle \lim_{y \to 0} G(y) = 0$.
Notice that
\[\xi_n (t_1, t_2,\dots,t_d)=\frac{1}{n} \displaystyle \sum_{j=1}^n \prod_{i=1}^d \left(\frac{1}{n}\displaystyle \sum_{k=1}^n u_k^i - u_j^i\right).\]
Some algebra yields that
\begin{align*}
&\xi_n (t_1, t_2,\dots,t_d)
\\=& \prod^{d}_{i=1}\left(\frac{1}{n} \displaystyle\sum_{k=1}^n u_k^i\right)-d\prod^{d}_{i=1}\left(\frac{1}{n} \displaystyle\sum_{k=1}^n u_k^i\right)
\\ & + \left\{
\left(\frac{1}{n} \displaystyle\sum_{k=1}^n u_k^1u_k^2\right)\prod_{i\neq 1,2}\left(\frac{1}{n} \displaystyle\sum_{k=1}^n u_k^i\right)+
\left(\frac{1}{n} \displaystyle\sum_{k=1}^n u_k^1u_k^3\right)\prod_{i\neq 1,3}\left(\frac{1}{n} \displaystyle\sum_{k=1}^n u_k^i\right)+\cdots\right\}
\\ & + \left\{
\left(\frac{1}{n} \displaystyle\sum_{k=1}^n u_k^1u_k^2u_k^3\right)\prod_{i\neq 1,2,3}\left(\frac{1}{n} \displaystyle\sum_{k=1}^n u_k^i\right)+
\left(\frac{1}{n} \displaystyle\sum_{k=1}^n u_k^1u_k^2u_k^4\right)\prod_{i\neq 1,2,4}\left(\frac{1}{n} \displaystyle\sum_{k=1}^n u_k^i\right)+\cdots\right\}
\\ & +(-1)^d\left(\frac{1}{n} \displaystyle\sum_{k=1}^n u_k^1u_k^2\cdots u_k^d\right).
\end{align*}
By the CR-inequality, we get
\begin{align*}
&|\xi_n (t_1, t_2,\dots,t_d)|^2
\\=& C\Bigg[\prod^{d}_{i=1}\left(\frac{1}{n} \displaystyle\sum_{k=1}^n |u_k^i|^2\right)+d^2\prod^{d}_{i=1}\left(\frac{1}{n} \displaystyle\sum_{k=1}^n |u_k^i|^2\right)
\\ & + \left\{
\left(\left|\frac{1}{n} \displaystyle\sum_{k=1}^n u_k^1u_k^2\right|^2\right)\prod_{i\neq 1,2}\left(\frac{1}{n} \displaystyle\sum_{k=1}^n |u_k^i|^2\right)+
\left(\left|\frac{1}{n} \displaystyle\sum_{k=1}^n u_k^1u_k^3\right|^2\right)\prod_{i\neq 1,3}\left(\frac{1}{n} \displaystyle\sum_{k=1}^n u_k^i\right)+\cdots\right\}
\\ & + \left\{
\left(\left|\frac{1}{n} \displaystyle\sum_{k=1}^n u_k^1u_k^2u_k^3\right|^2\right)\prod_{i\neq 1,2,3}\left(\frac{1}{n} \displaystyle\sum_{k=1}^n |u_k^i|^2\right)+
\left(\left|\frac{1}{n} \displaystyle\sum_{k=1}^n u_k^1u_k^2u_k^4\right|^2\right)\prod_{i\neq 1,2,4}\left(\frac{1}{n} \displaystyle\sum_{k=1}^n |u_k^i|^2\right)+\cdots\right\}
\\ & +(-1)^d\left(\left|\frac{1}{n} \displaystyle\sum_{k=1}^n u_k^1u_k^2\cdots u_k^d\right|^2\right)\Bigg],
\end{align*}
for some positive constant $C>0$. By the Cauchy-Schwarz inequality, we have for any $2\leq q\leq d$,
\begin{equation}
\left|\frac{1}{n} \displaystyle\sum_{k=1}^n u_k^1 \, u_k^2\cdots u_k^q\right|^2 \; \leq \; \frac{1}{n} \displaystyle\sum_{k=1}^n \prod_{i \in S_{q_1}} |u_k^i|^2 \; . \; \frac{1}{n} \displaystyle\sum_{k=1}^n \prod_{i \in S_{q_2}} |u_k^i|^2,
\end{equation}
where $S_{q_1}\cup S_{q_2}=\{1,2,\dots,d\}$. By Assumption \ref{ass1} and (\ref{as-1})-(\ref{as-3}), we have
\begin{align*}
&\lim_{\delta \to 0}\, \lim_{n \to \infty} \, \int_{|t_i|_{p_i} < \delta} |\xi_n (t_1, t_2, ..\,, t_d)|^2 \, dw =  0 \quad a.s,\\
&\lim_{\delta \to 0}\, \lim_{n \to \infty} \, \int_{|t_i|_{p_i} > 1/\delta} |\xi_n (t_1, t_2, ..\,, t_d)|^2 \, dw  = 0 \quad a.s,
\end{align*}
for every $i\in\{1,2,\dots,d\}$. This implies that $\displaystyle\lim_{\delta \to 0} \lim_{n \to \infty} D_{n,\delta}^{(2)} \, = \, 0$ almost surely and thus completes the proof.
\end{proof}

\begin{proof}[Proof of Proposition \ref{prop6}]
Define the empirical process
\begin{align*}
\Gamma_n(t)\, = & \,\sqrt{n}\, \xi_n (t_1, t_2, .. , t_d) =
\frac{1}{\sqrt{n}} \displaystyle \sum_{j=1}^n \prod_{i=1}^d
(\hat{f_i} (t_i) - e^{\imath \langle t_i,X_{ji}\rangle}) .
\end{align*}
Then $n\widehat{dcov^2}(X_1, X_2, \dots , X_d)=\Arrowvert
\Gamma_n\Arrowvert^2:=\int \Gamma_n(t_1,t_2,\dots,t_d)^2 dw$. Under
the assumption of independence, we have $\E(\Gamma_n(t))=0$ and
\begin{align*}
\Gamma_n(t) \, \overline{\Gamma_n(t_0)} \,= & \, \frac{1}{n}
\displaystyle \sum_{k,l=1}^n \prod_{i=1}^d (\hat{f_i} (t_i) -
e^{\imath \langle t_i , X_{ki} \rangle})(\hat{f_i} (-t_{i0}) -
e^{-\imath \langle t_{i0} , X_{li} \rangle})
\\ =& \, \frac{1}{n}\,\Bigg\{ \displaystyle \sum_{k=1}^n \prod_{i=1}^d (\hat{f_i} (t_i) - e^{\imath \langle t_i , X_{ki}\rangle})(\hat{f_i} (-t_{i0}) - e^{-\imath \langle t_{i0} , X_{ki} \rangle})
\\ & +\displaystyle \sum_{k\neq l}^n \prod_{i=1}^d (\hat{f_i} (t_i) - e^{\imath \langle t_i , X_{ki} \rangle})(\hat{f_i} (-t_{i0}) - e^{-\imath \langle t_{i0} , X_{li}
\rangle})\Bigg\},
\end{align*}
which implies that
\begin{align*}
\E \big[\Gamma_n(t) \, \overline{\Gamma_n(t_0)}\big] \, =& \,
\frac{1}{n}\,\Big\{ n \, \prod_{i=1}^d \, \E (\hat{f_i} (t_i) -
e^{\imath \langle t_i , X_{ki}\rangle})(\hat{f_i} (-t_{i0}) -
e^{-\imath \langle t_{i0} , X_{ki}\rangle})
\\ & + \,n(n-1) \, \prod_{i=1}^d \E (\hat{f_i} (t_i) - e^{\imath \langle t_i , X_{ki}\rangle})(\hat{f_i} (-t_{i0}) - e^{-\imath \langle t_{i0} , X_{li}\rangle})\,
\Big\}
\\ =&\, \frac{1}{n}\,\big\{ n \, A \; + \; n(n-1) \, B \,\big\}
\quad (\text{say}).
\end{align*}
Direct calculation shows that
\begin{align*}
A \, =& \, \prod_{i=1}^d \, \E \Big\{\frac{1}{n^2} \displaystyle
\sum_{a,b=1}^n e^{\imath \langle t_i , X_{ai}\rangle - \imath
\langle  t_{i0} , X_{bi}\rangle} \, - \,\frac{1}{n} \displaystyle
\sum_{b=1}^n e^{\imath \langle t_i , X_{ki}\rangle-\imath \langle
t_{i0} , X_{bi}\rangle}
\\ & \qquad - \frac{1}{n} \displaystyle \sum_{a=1}^n e^{-\imath \langle t_{i0} , X_{ki}\rangle+\imath \langle t_{i} , X_{ai}\rangle} \, + \, e^{\imath \langle t_i - t_{i0} , X_{ki}\rangle} \,
\Big\}
\\ = & \, \prod_{i=1}^d \,\Big[\, \frac{1}{n^2} \big\{ n\, f_i(t_i - t_{i0}) \, + \, n(n-1) f_i(t_i) f_i(-t_{i0})
\big\}
\\ & \qquad - \, \frac{2}{n} \big\{ f_i(t_i - t_{i0}) \, + \, (n-1) f_i(t_i) f_i(-t_{i0}) \big\} \, + \, f_i(t_i - t_{i0})\,\Big]
\\ = & \, \left(\frac{n-1}{n}\right)^d \,\prod_{i=1}^d \,\big\{f_i(t_i - t_{i0}) \, - \, f_i(t_i) f_i(-t_{i0}) \big\},
\end{align*}
and
\begin{align*}
B \, =& \, \prod_{i=1}^d \, \E \Big[\frac{1}{n^2} \displaystyle
\sum_{a,b=1}^n e^{\imath \langle t_i , X_{ai}\rangle - \imath
\langle t_{i0} , X_{bi}\rangle} \, - \displaystyle \sum_{b=1}^n
e^{\imath \langle t_i , X_{ki}\rangle-\imath \langle t_{i0} ,
X_{bi}\rangle}
\\ & \qquad - \,\frac{1}{n} \displaystyle \sum_{a=1}^n e^{-\imath \langle t_{i0} , X_{li}\rangle+\imath \langle t_{i} , X_{ai}\rangle} \, + \, e^{\imath \langle t_i , X_{ki}\rangle - \imath \langle t_{i0} , X_{li}\rangle} \, \Big]
\\ = & \, \prod_{i=1}^d \,\Big[\, \frac{1}{n^2} \big\{ n\, f_i(t_i - t_{i0}) \, + \, n(n-1) f_i(t_i) f_i(-t_{i0})
\big\}
\\ & \qquad - \,  \frac{2}{n} \big\{ f_i(t_i - t_{i0}) \, + \, (n-1) f_i(t_i) f_i(-t_{i0}) \big\} \, + \, f_i(t_i) f_i(-t_{i0})\, \Big]
\\ = & \, \left(-\frac{1}{n}\right)^d \,\prod_{i=1}^d \,\big\{f_i(t_i - t_{i0}) \, - \, f_i(t_i) f_i(-t_{i0}) \big\}.
\end{align*}
Hence we obtain
\begin{equation}\label{dis1}
\E\big[\Gamma_n(t)\overline{\Gamma_n(t_0)}\big]= c_n\,\prod_{i=1}^d
\,\big\{f_i(t_i - t_{i0}) - f_i(t_i) f_i(-t_{i0}) \big\},
\end{equation}
where
$c_n=\left(\frac{n-1}{n}\right)^d+(n-1)\left(-\frac{1}{n}\right)^d
$. To prove $\Arrowvert \Gamma_n\Arrowvert^2 \,
\overset{d}{\longrightarrow} \, \Arrowvert \Gamma \Arrowvert^2$ , we
construct a sequence of random variables $\{Q_n(\delta)\}$ such that
\begin{enumerate}
\item $Q_n(\delta) \overset{d}{\longrightarrow} Q(\delta)$ as $n \to \infty$, for any fixed $\delta >
0$;
\item $\displaystyle \limsup_{n \to \infty} \, \E |\, Q_n(\delta) - \Arrowvert \Gamma_n \Arrowvert^2 \, | \, \to 0$ as $\delta \to
0$;
\item $Q(\delta)  \overset{d}{\longrightarrow} \Arrowvert \Gamma \Arrowvert^2 $ as $\delta \to 0$.
\end{enumerate}
Then $\Arrowvert \Gamma_n \Arrowvert^2 \,
\overset{d}{\longrightarrow} \, \Arrowvert \Gamma\Arrowvert^2$
follows from Theorem 8.6.2 of Resnick (1999).

We first show (1). Define  \[ Q_n(\delta) \, = \,
\displaystyle\int_{D(\delta)} |\Gamma_n(t)|^2 \, dw,\quad Q(\delta)
\, = \, \displaystyle\int_{D(\delta)} |\Gamma(t)|^2 \, dw.\] Given
$\epsilon > 0$, choose a partition $\{D_k\}_{k=1}^N$ of $D(\delta)$
into $N$ measurable sets with diameter at most $\epsilon$. Then
\[Q_n(\delta) \, = \, \displaystyle \sum_{k=1}^N \int_{D_k}
|\Gamma_n(t)|^2 \, dw,\quad Q(\delta) \, = \, \displaystyle
\sum_{k=1}^N \int_{D_k} |\Gamma(t)|^2 \, dw.\] Define
\[Q_n^{\epsilon}(\delta) \, = \, \displaystyle \sum_{k=1}^N \int_{D_k} |\Gamma_n(t^k)|^2 \, dw,\quad Q^{\epsilon}(\delta) \, = \, \displaystyle \sum_{k=1}^N \int_{D_k} |\Gamma(t^k)|^2 \, dw,\]
where $\{t^k\}_{k=1}^N$ are a set of distinct points such that $t^k
\in D_k$. In view of Theorem $8.6.2$ of Resnick (1999), it suffices
to show that
\begin{enumerate}[ i)]
\item $\displaystyle \limsup_{\epsilon \to 0}\, \limsup_{n \to \infty} \, \E |\, Q_n^{\epsilon}(\delta) - Q_n(\delta)\,| \, = \,
0$;
\item $\displaystyle \limsup_{\epsilon \to 0} \, \E |\, Q^{\epsilon}(\delta) - Q(\delta) \, | \, =\,
0$;
\item $Q_n^{\epsilon}(\delta) \overset{d}{\longrightarrow} Q^{\epsilon}(\delta)$ as $n \to \infty$, for any fixed $\delta > 0$.
\end{enumerate}
To this end, define $\beta_n(\epsilon) \, = \, \sup_{t , t_0} \,\E
\,\big|\, |\Gamma_n(t)|^2 - |\Gamma_n(t_0)|^2\, \big|$ and
$\beta(\epsilon) \, =\, \sup_{t , t_0} \, \E \,\big|\, |\Gamma(t)|^2
- |\Gamma(t_0)|^2\, \big|$, where the supremum is taken over all all
$t=(t_1,\, ..\, , t_d)$ and $t_0=(t_{10},\, ..\, , t_{d0})$ such
that $\delta < |t_i| , |t_{i0}| < \, 1/\delta$ for $i=1,2,\dots,d$,
and $\sum_{i=1}^d |t_i - t_{i0}|_{p_i}^2 \, < \, \epsilon^2$. Since the function inside the supremum is continuous in $t$ and $t_0$, and using the fact that a continuous function on a compact support is uniformly continuous, it follows that $\displaystyle \lim_{\epsilon \to 0} \, \beta(\epsilon) \,
= \, 0$ and $\displaystyle \lim_{\epsilon \to 0} \, \beta_n(\epsilon) \, = \, 0$
for fixed $\delta > 0$ and fixed $n$. 
Thus (i) and (ii) hold. To show (iii), it is enough to show
\[\begin{pmatrix}
  \Gamma_n(t^1)  \\
  \Gamma_n(t^2)  \\
  \vdots    \\
  \Gamma_n(t^N)
 \end{pmatrix} \; \overset{d}{\longrightarrow} \; \begin{pmatrix}
  \Gamma(t^1)  \\
  \Gamma(t^2)  \\
  \vdots    \\
  \Gamma(t^N)
 \end{pmatrix}\, ,\] where $(t^1,\dots, t^N)\in \mathbb{R}^{p_1} \times \mathbb{R}^{p_2} \times \cdots \times \mathbb{R}^{p_d}$ is fixed.
 The rest follows from the Continuous Mapping Theorem and the Cramer-Wold Device.
 Notice that $ \Gamma_n(t)=\frac{1}{\sqrt{n}}\sum_{j=1}^n \prod_{i=1}^d \, \Big[\, \big(\hat{f}_i(t_i) - f_i(t_i)\big) \, - \, \big(e^{\imath \langle t_i , X_{ji}\rangle} -
 f_i(t_i)\big)\,\Big]$. By some algebra and the weak law of large number, we have
 \begin{align*}
 \begin{pmatrix}
  \Gamma_n(t^1)  \\
  \Gamma_n(t^2)  \\
  \vdots    \\
  \Gamma_n(t^N)
 \end{pmatrix}
 =\;\, \frac{1}{\sqrt{n}}\, \displaystyle \sum_{j=1}^n\mathcal{Z}_j\,\; + \;
 o_p(1),
 \end{align*}
 where $\mathcal{Z}_j=(\mathcal{Z}_{j1},\dots,\mathcal{Z}_{jN})'$ with $\mathcal{Z}_{jk}=\prod_{i=1}^d \big(
 f_i(t_i^k)-e^{\imath \langle t_i^k , X_{ji}\rangle}\big)$ for $1\leq k\leq N.$ By the independence assumption, $\E[X_j]=0$ and for $1 \leq l , m \leq N$,
 \begin{align*}
 \E[\mathcal{Z}_{jl}\overline{\mathcal{Z}}_{jm}]\, =& \, \displaystyle \prod_{i=1}^d \, \E \,\big\{e^{\imath \langle t_i^l , X_{ji}\rangle} - f_i(t_i^l)\big\}\big\{e^{-\imath \langle t_i^m , X_{ji}\rangle} - f_i(t_i^m)\big\} = \, R(t^l ,
 t^m).
 \end{align*}
 By the Central Limit Theorem (CLT) and Stutsky's theorem, as $n \to \infty$,
 \[\begin{pmatrix}
  \Gamma_n(t^1)  \\
  \Gamma_n(t^2)  \\
  \vdots    \\
  \Gamma_n(t^N)
 \end{pmatrix} \; \overset{d}{\longrightarrow} \; \begin{pmatrix}
  \Gamma(t^1)  \\
  \Gamma(t^2)  \\
  \vdots    \\
  \Gamma(t^N)
 \end{pmatrix},\]
which completes the proof of (1).

To prove $(2)$, define $u_i \, = \, e^{\imath \langle t_i ,
X_{i}\rangle} - f_i (t_i)$. Then $|u_i|^2 \, = \, 1 + |f_i(t_i)|^2 -
e^{\imath \langle t_i , X_i \rangle} \, \overline{f_i(t_i)} -
e^{-\imath \langle t_i , X_i \rangle}\,f_i(t_i)$, and hence
\begin{equation}\label{dis2}
\E |u_i|^2 \, = \, 1 - |f_i(t_i)|^2  .
\end{equation}
Following the similar steps as in the proof of Theorem 5 in Sz\'{e}kely
\textit{et al.}(2007) and using the Fubini's Theorem,
\begin{equation}\label{th3.2}
\begin{split}
\E \,|\, Q_n(\delta) - \Arrowvert \Gamma_n(t) \Arrowvert^2 \, | \;
=& \; \E \,\big|\int_{D(\delta)} |\Gamma_n(t)|^2 \, dw \, - \, \int
|\Gamma_n(t)|^2 \, dw \,\big|
\\ \leq & \, \int_{|t_1|_{p_1} < \delta} \E|\Gamma_n(t)|^2 \, dw \, + \int_{|t_1|_{p_1} > 1/\delta} \E|\Gamma_n(t)|^2 \, dw \,
\\ & \, + \, \cdots\, +  \int_{|t_d|_{p_d} < \delta} \E\,|\Gamma_n(t)|^2 \, dw \, + \int_{|t_d|_{p_d} > 1/\delta} \E\,|\Gamma_n(t)|^2 \, dw  .
\end{split}
\end{equation}
Using (\ref{dis1}) and (\ref{dis2}), we have $\E\, |\Gamma_n(t)|^2
\,=\,c_n\prod_{i=1}^d \, \E|u_i|^2.$ Along with the independence
assumption, we have
\begin{align*}
\int_{|t_1|_{p_1} < \delta} \E\,|\Gamma_n(t)|^2 \, dw \, =& \,c_n
\int_{|t_1|_{p_1} < \delta} \frac{\E
|u_1|^2}{c_{p_1}\,|t_1|_{p_1}^{1+p_1}} \, dt_1 \prod_{i=2}^d \,
\,\int \frac{\E|u_i|^2}{c_{p_i}\,|t_i|_{p_i}^{1+p_i}} \, dt_i
\\ \leq &\, 2c_n \E |X_1 - X_1^{'}|_{_{p_1}}\,G(|X_1 - X_1^{'}|_{_{p_1}}\delta) \, \prod_{i=2}^d 4\E |X_i|_{_{p_i}}.
\end{align*}
Therefore \[\displaystyle \lim_{\delta \to 0}\, \lim_{n \to
\infty}\, \int_{|t_1|_{p_1} < \delta} \E\,|\Gamma_n(t)|^2 \, dw \, =
\, 0 \; .\] Similarly
\begin{align*}
\int_{|t_1|_{p_1} > 1/\delta} \E\,|\Gamma_n(t)|^2 \, dw \, =& \,c_n
\int_{|t_1|_{p_1} > 1/\delta} \frac{\E
|u_1|^2}{c_{p_1}\,|t_1|_{p_1}^{1+p_1}} \, dt_1 \,. \, \prod_{i=2}^d
\,\int \frac{\E|u_i|^2}{c_{p_i}\,|t_i|_{p_i}^{1+p_i}} \, dt_i
\\ \leq &\, 4\delta c_n\prod_{i=2}^d 4\E |X_i|_{_{p_i}}.
\end{align*}
Therefore \[\displaystyle \lim_{\delta \to 0}\, \lim_{n \to
\infty}\, \int_{|t_1|_{p_1} > 1/\delta} \E\,|\Gamma_n(t)|^2 \, dw \,
= \, 0 \; .\] Applying similar argument to the remaining summands in
($\ref{th3.2}$), we get \[\displaystyle \lim_{\delta \to 0}\, \lim_{n \to
\infty}\, \E |\, Q_n(\delta) - \Arrowvert \Gamma_n(t) \Arrowvert^2
\, | \, = \, 0 \; .\]

To prove $(3)$, we note that \[ \Gamma(t) \,\Huge
\mathbf{1}\normalsize \big(t \in D(\delta)\big) \,\,
\overset{a.s}{\longrightarrow} \,\, \Gamma(t) \,
\Huge\mathbf{1}\normalsize\big(t \in \mathbb{R}^{p_1} \times
\mathbb{R}^{p_2} \times \cdots \times \mathbb{R}^{p_d}\big),\] as
$\delta\rightarrow0.$ Again by the
Fubini's Theorem and equation $(2.5)$ of Sz\'{e}kely \textit{et al.}
(2007),
\begin{align*}
\E \Arrowvert \Gamma \Arrowvert^2 \, =& \, \int \, \prod_{i=1}^d
\big( 1 - |f_i(t_i)|^2\big) \, dw \, = \, \prod_{i=1}^d\, \int
\frac{\big( 1 - |f_i(t_i)|^2 \big)}{c_{p_i}\,|t_i|_{p_i}^{1+p_i}} \,
dt_i
\\ =& \, \prod_{i=1}^d \, \E \int \frac{1 - \cos \langle t_i , X_i-X_i^{'} \rangle}{c_{p_i}\,|t_i|_{p_i}^{1+p_i}} \, dt_i
\\ =& \,\,\prod_{i=1}^d \, \E\, |X_i - X_i^{'}|_{p_i} \, < \, \infty \; .
\end{align*}
Hence $\Arrowvert \Gamma \Arrowvert^2 < \infty$ almost surely. By
DCT, $Q(\delta)  \overset{a.s}{\longrightarrow} \Arrowvert \Gamma
\Arrowvert^2$ as $\delta \to 0$, which completes the proof.
\end{proof}

\begin{lemma}\label{lemma5}
$\widetilde{U}_{i}(k,l)$ can be composed as
\begin{align*}
\widetilde{U}_{i}(k,l)=&\frac{n-3}{(n-1)(n-2)}\sum_{u\notin \{k,l\}}U_i(X_{ui},X_{li})+\frac{n-3}{(n-1)(n-2)}\sum_{v\notin \{k,l\}}U_{i}(X_{ki},X_{vi})
\\&-\frac{n-3}{n-1}U_{i}(X_{ki},X_{li})+\frac{2}{(n-1)(n-2)}\sum_{u,v\notin \{k,l\},u<v}U_i(X_{ui},X_{vi}),
\end{align*}
where the four terms are uncorrelated with each other.
\end{lemma}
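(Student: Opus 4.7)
The approach is to verify the identity by direct algebraic expansion and to deduce the uncorrelatedness from the population-level double-centering property $\E[U_i(X_i,X_i')\mid X_i]=\E[U_i(X_i,X_i')\mid X_i']=0$.

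\textbf{Verifying the identity.} I would expand $U_i(x,x')=m(x)+m(x')-|x-x'|-\mu$, where $m(x):=\E|x-X_i'|$ and $\mu:=\E|X_i-X_i'|$, and plug this into each of the four terms on the right-hand side. Since $\widetilde{U}_i(k,l)$ involves only the observed distances $a_{uv}:=|X_{ui}-X_{vi}|$, the identity is equivalent to two separate claims: (i) the aggregate coefficients of the population quantities $m(X_{ki})$, $m(X_{li})$, $m(X_{wi})$ (for $w\notin\{k,l\}$) and $\mu$ all vanish; and (ii) the coefficients of the observed distances $a_{kl}$, $a_{kw}$, $a_{wl}$ and $a_{uv}$ (with $u,v\notin\{k,l\}$) match those obtained by directly splitting $\widetilde{U}_i(k,l)$ according to whether an index pair coincides with $\{k,l\}$, shares one element with $\{k,l\}$, or is disjoint from $\{k,l\}$. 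Each of (i) and (ii) reduces to rational identities in $n$ involving the counts $|\{w:w\notin\{k,l\}\}|=n-2$ and $|\{(u,v):u<v,u,v\notin\{k,l\}\}|=\binom{n-2}{2}$, which can be checked termwise.

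\textbf{Uncorrelatedness.} The central observation is that for any fixed $x'$, $\E[U_i(X_i,x')]=m(x')+\E[m(X_i)]-\E|X_i-x'|-\mu=0$, and symmetrically in the first argument. Write the four components as sums of elementary summands $U_i(X_{ui},X_{vi})$. Any two summands from different components involve pairs $\{u_1,v_1\}$ and $\{u_2,v_2\}$ that share at most one common index, because Term 3 uses $\{k,l\}$, Terms 1 and 2 pair one of $k,l$ with an index outside $\{k,l\}$, and Term 4 uses pairs entirely outside $\{k,l\}$. If the two pairs are disjoint, the four random variables are mutually independent, so the expectation of the product factorizes and each factor is $\E[U_i]=0$. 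If they share exactly one variable $X_w$, conditioning on $X_w$ factorizes the product (by independence of the remaining two variables given $X_w$) into $\E[U_i(X_w,\cdot)\mid X_w]\cdot\E[U_i(\cdot,X_w)\mid X_w]=0\cdot 0$. Therefore every cross-covariance of summands from different components vanishes, and bilinearity gives that the four aggregated terms are pairwise uncorrelated.

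\textbf{Main obstacle.} The principal difficulty is the bookkeeping in part (i) of the algebraic verification: the three coefficients $\tfrac{n-3}{(n-1)(n-2)}$, $\tfrac{n-3}{n-1}$ and $\tfrac{2}{(n-1)(n-2)}$ must simultaneously produce cancellations in four different population quantities ($m(X_{ki})$, $m(X_{li})$, $m(X_{wi})$ for each $w\notin\{k,l\}$, and $\mu$), with the counts coming from $(n-2)$ single sums and a $\binom{n-2}{2}$ double sum. I would handle this by building a small coefficient table indexed by the type of term and verifying the column sums are zero. The uncorrelatedness argument, by contrast, is structurally clean once the conditioning-on-shared-index trick is set up.
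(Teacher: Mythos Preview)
Your proposal is correct and matches the paper's approach: the paper's proof is literally ``The result follows from direct calculation,'' and your plan of expanding $U_i(x,x')=m(x)+m(x')-|x-x'|-\mu$ and matching coefficients is exactly that calculation. Your uncorrelatedness argument via conditioning on the shared index and invoking the double-centering property $\E[U_i(X_i,X_i')\mid X_i]=\E[U_i(X_i,X_i')\mid X_i']=0$ is the clean way to carry out what the paper leaves implicit, and your observation that summands from distinct components share at most one index is precisely what makes it work.
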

\begin{proof}[Proof of Lemma \ref{lemma5}]
The result follows from direct calculation.
\end{proof}

\begin{proposition}\label{prop9}
$\E[\widetilde{dCov}^2(X_i,X_j)]=dCov^2(X_i,X_j)$.
\end{proposition}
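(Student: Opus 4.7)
The plan is to leverage the decomposition of $\widetilde{U}_i(k,l)$ provided by Lemma \ref{lemma5} together with the double-centering property $\E[U_i(X_i,X_i')\mid X_i]=\E[U_i(X_i,X_i')\mid X_i']=0$, which follows from the definition of $U_i$. For each fixed pair $k\neq l$, I would write $\widetilde{U}_i(k,l)=A_i+B_i+C_i+D_i$ with the four summands as in Lemma \ref{lemma5}, and analogously $\widetilde{U}_j(k,l)=A_j+B_j+C_j+D_j$, so that
\[
\E[\widetilde{U}_i(k,l)\widetilde{U}_j(k,l)] \;=\; \sum_{\alpha,\beta\in\{A,B,C,D\}} \E[\alpha_i\beta_j].
\]

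First I would argue that the sixteen terms reduce to only the four ``diagonal'' matches $\E[A_iA_j]$, $\E[B_iB_j]$, $\E[C_iC_j]$, $\E[D_iD_j]$. The key observation is that each of $A_i,B_i,C_i,D_i$ is a sum of $U_i(X_{ai},X_{bi})$ over pairs $(a,b)$ drawn from a prescribed index pattern, and for two such summands $U_i(X_{ai},X_{bi})$ and $U_j(X_{cj},X_{dj})$ with $\{a,b\}\neq\{c,d\}$, at least one sample index (say $a$) appears in exactly one factor. Conditioning on all other sample vectors and using independence of $X_a$ from the rest, the double-centering property gives $\E[U_i(X_{ai},X_{bi})\mid \text{others}]=0$, so the cross expectation vanishes. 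A careful case check on the index patterns of $A,B,C,D$ shows that every off-diagonal block $(\alpha,\beta)$ with $\alpha\neq\beta$ has this property (e.g.\ the pair index sets $\{u,l\}$ from $A_i$ and $\{k,v\}$ from $B_j$ can never coincide under the constraints $u,v\notin\{k,l\}$).

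Next I would evaluate each diagonal block. For matching pair indices $(a,b)=(c,d)$, independence across samples and the definition of $dCov^2$ give $\E[U_i(X_{ai},X_{bi})U_j(X_{aj},X_{bj})]=dCov^2(X_i,X_j)$, and for non-matching pairs within the same block the same vanishing argument as above applies. Counting the matched pairs and collecting the prefactors yields
\[
\E[A_iA_j]=\E[B_iB_j]=\frac{(n-3)^2}{(n-1)^2(n-2)}\,dCov^2(X_i,X_j),\quad \E[C_iC_j]=\frac{(n-3)^2}{(n-1)^2}\,dCov^2(X_i,X_j),
\]
\[
\E[D_iD_j]=\frac{4}{(n-1)^2(n-2)^2}\binom{n-2}{2}dCov^2(X_i,X_j)=\frac{2(n-3)}{(n-1)^2(n-2)}\,dCov^2(X_i,X_j).
\]
Summing these and using the identity $2(n-3)+(n-3)(n-2)+2=(n-1)(n-2)$ telescopes the coefficients down to $(n-3)/(n-1)$, giving $\E[\widetilde{U}_i(k,l)\widetilde{U}_j(k,l)]=\tfrac{n-3}{n-1}dCov^2(X_i,X_j)$ for every $k\neq l$.

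Finally, averaging over the $n(n-1)$ ordered pairs $k\neq l$ and dividing by $n(n-3)$ as in the definition of $\widetilde{dCov^2}(X_i,X_j)$ produces the factor $\tfrac{n(n-1)}{n(n-3)}\cdot\tfrac{n-3}{n-1}=1$, which delivers unbiasedness. The main obstacle is the bookkeeping in the second step: one must verify exhaustively that no off-diagonal block contributes, which requires tracking the index patterns $\{u,l\},\{k,v\},\{k,l\},\{u,v\}$ of the four summands and confirming that in every cross-pairing at least one sample index is ``unmatched'' so that double centering kills the term. Once this verification is done, the rest is elementary counting.
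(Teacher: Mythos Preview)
Your proposal is correct and follows essentially the same route as the paper: both invoke the four-term decomposition of Lemma~\ref{lemma5}, use double-centering to kill the off-diagonal contributions (the paper packages this as the ``uncorrelated'' clause of Lemma~\ref{lemma5}), and then sum the diagonal coefficients $\frac{(n-3)^2}{(n-1)^2}+\frac{2(n-3)^2}{(n-1)^2(n-2)}+\frac{2(n-3)}{(n-1)^2(n-2)}=\frac{n-3}{n-1}$ before normalizing by $\frac{n-1}{n-3}$. Your explicit index-pattern check for the cross terms is exactly what is needed to justify the uncorrelatedness claim in Lemma~\ref{lemma5}, so you are effectively filling in a detail the paper leaves implicit.
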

\begin{proof}[Proof of Proposition \ref{prop9}]
Using Lemma \ref{lemma5} and the fact that $dCov^2(X_i,X_j)=\E[U_{i}(X_{ki},X_{li})U_{j}(X_{kj},X_{lj})]$ for $k\neq l$, we have for $k\neq l$,
\begin{align*}
&\E[U_{i}(X_{ki},X_{li})U_{j}(X_{kj},X_{lj})]
\\=&\left\{\frac{(n-3)^2}{(n-1)^2}+\frac{2(n-3)^2}{(n-1)^2(n-2)}+\frac{2(n-3)}{(n-1)^2(n-2)}\right\}\E[U_{i}(X_{ki},X_{li})U_{j}(X_{kj},X_{lj})]
\\=&\frac{n-3}{n-1}dCov^2(X_i,X_j).
\end{align*}
It thus implies that
$$\E[\widetilde{dCov^2}(X_i,X_j)]=\frac{n-1}{n-3}\E[U_{i}(X_{ki},X_{li};\alpha)U_{j}(X_{kj},X_{lj})]=dCov^2(X_i,X_j),$$
which completes the proof.
\end{proof}

\begin{proof}[Proof of Proposition \ref{prop10}]
Denote by $\mathbf{X}=\{\mathbf{X}_1,\dots,\mathbf{X}_n\}$. By independence of the bootstrap samples, we have $\E \left[\Gamma^*_n(t)| \,\textbf{X} \right]=0$. Proceeding in the similar way as in the proof of \textsc{Proposition \ref{prop6}}, it can be shown that

\begin{equation}\label{dis11}
\E\big[\Gamma^*_n(t)\overline{\Gamma^*_n(t_0)} \,| \; \textbf{X}\big]= c_n\,\prod_{i=1}^d
\,\big\{\hat{f_i}(t_i - t_{i0}) - \hat{f_i}(t_i) \hat{f_i}(-t_{i0}) \big\},
\end{equation}
where
$c_n=\left(\frac{n-1}{n}\right)^d+(n-1)\left(-\frac{1}{n}\right)^d
$. \\

To prove $\Arrowvert \Gamma^*_n\Arrowvert^2 \,
\overset{d}{\longrightarrow} \, \Arrowvert \Gamma \Arrowvert^2$ almost surely, we
construct a sequence of random variables $\{Q^*_n(\delta)\}$ such that
\begin{enumerate}
\item $Q^*_n(\delta) \overset{d}{\longrightarrow} Q(\delta)$ almost surely as $n \to \infty$, for any fixed $\delta >
0$;
\item $\displaystyle \limsup_{n \to \infty} \, \E \,\left[ \, Q^*_n(\delta) - \Arrowvert \Gamma^*_n \Arrowvert^2 \, | \; \textbf{X} \right] \, \to 0$ almost surely as $\delta \to 0$; \normalsize
\item $Q(\delta)  \overset{d}{\longrightarrow} \Arrowvert \Gamma \Arrowvert^2 $ as $\delta \to 0$.
\end{enumerate}
Then $\Arrowvert \Gamma^*_n \Arrowvert^2 \,
\overset{d}{\longrightarrow} \, \Arrowvert \Gamma\Arrowvert^2$ almost surely follows from Theorem 8.6.2 of Resnick (1999).\\

We first show (1). Define  \[ Q^*_n(\delta) \, = \,
\displaystyle\int_{D(\delta)} |\Gamma^*_n(t)|^2 \, dw,\quad Q(\delta)
\, = \, \displaystyle\int_{D(\delta)} |\Gamma(t)|^2 \, dw.\] Given
$\epsilon > 0$, choose a partition $\{D_k\}_{k=1}^N$ of $D(\delta)$
into $N$ measurable sets with diameter at most $\epsilon$. Then
\[Q^*_n(\delta) \, = \, \displaystyle \sum_{k=1}^N \int_{D_k}
|\Gamma^*_n(t)|^2 \, dw,\quad Q(\delta) \, = \, \displaystyle
\sum_{k=1}^N \int_{D_k} |\Gamma(t)|^2 \, dw.\] Define
\[Q_n^{\epsilon *}(\delta) \, = \, \displaystyle \sum_{k=1}^N \int_{D_k} |\Gamma^*_n(t^k)|^2 \, dw,\quad Q^{\epsilon}(\delta) \, = \, \displaystyle \sum_{k=1}^N \int_{D_k} |\Gamma(t^k)|^2 \, dw,\]
where $\{t^k\}_{k=1}^N$ are a set of distinct points such that $t^k
\in D_k$. In view of Theorem $8.6.2$ of Resnick (1999), it suffices
to show that
\begin{enumerate}[ i)]
\item $\displaystyle \limsup_{\epsilon \to 0}\, \limsup_{n \to \infty} \, \E\,\left[\, |\,Q_n^{\epsilon *}(\delta) - Q^*_n(\delta)\,|\,\big| \, \textbf{X} \right] \, = \,
0$ almost surely ;
\item $\displaystyle \limsup_{\epsilon \to 0} \, \E \left[\, |\, Q^{\epsilon}(\delta) - Q(\delta) \, |\, \right] \, =\,
0$;
\item $Q_n^{\epsilon *}(\delta) \overset{d}{\longrightarrow} Q^{\epsilon}(\delta)$ almost surely as $n \to \infty$, for any fixed $\delta > 0$.
\end{enumerate}
To this end, define \[\beta^*_n(\epsilon) \, = \, \sup_{t , t_0} \,\E
\,\left[ \,\big|\, |\Gamma^*_n(t)|^2 - |\Gamma^*_n(t_0)|^2\,\big| \, \textbf{X}\,\right], \] and,\[ \beta(\epsilon) \, =\,\sup_{t , t_0} \, \E \left[\,\big|\, |\Gamma(t)|^2
- |\Gamma(t_0)|^2\, \big|\,\right],\] where the supremum is taken over all all
$t=(t_1,\, ..\, , t_d)$ and $t_0=(t_{10},\, ..\, , t_{d0})$ such
that $\delta < |t_i| , |t_{i0}| < \, 1/\delta$ for $i=1,2,\dots,d$,
and $\sum_{i=1}^d |t_i - t_{i0}|_{p_i}^2 \, < \, \epsilon^2$. Then
for fixed $\delta > 0$, $\displaystyle \lim_{\epsilon \to 0} \, \beta(\epsilon) \,
= \, 0$ and $\displaystyle \lim_{\epsilon \to 0} \, \beta^*_n(\epsilon) \, = \, 0$ almost surely
for fixed $n$. Thus (i) and (ii)
hold. To show (iii), it is enough to show
\[\begin{pmatrix}
  \Gamma^*_n(t^1)  \\
  \Gamma^*_n(t^2)  \\
  \vdots    \\
  \Gamma^*_n(t^N)
 \end{pmatrix} \; \overset{d}{\longrightarrow} \; \begin{pmatrix}
  \Gamma(t^1)  \\
  \Gamma(t^2)  \\
  \vdots    \\
  \Gamma(t^N)
 \end{pmatrix}\, \; almost \;surely ,\] where $(t^1,\dots, t^N)\in \mathbb{R}^{p_1} \times \mathbb{R}^{p_2} \times \cdots \times \mathbb{R}^{p_d}$ is fixed.
 The rest follows from the Continuous Mapping Theorem and the Cramer-Wold Device.
 Notice that $ \Gamma^*_n(t)=\frac{1}{\sqrt{n}}\sum_{j=1}^n \prod_{i=1}^d \, \Big[\, \big(\hat{f}_i^*(t_i) - \hat{f_i}(t_i)\big) \, - \, \big(e^{\imath \langle t_i , X^*_{ji}\rangle} -
 \hat{f_i}(t_i)\big)\,\Big]$. Using Markov's inequality and Triangle inequality, observe that
 \begin{align*}
 &\displaystyle \sum_{n=1}^\infty P\left(\, \left|\frac{1}{n} \displaystyle \sum_{k=1}^n (e^{\imath \langle t_i , X^*_{ki}\rangle} - \hat{f_i}(t_i) \right| > \epsilon \,\right)
 \\\, =& \, \displaystyle \sum_{n=1}^\infty P \left(\,\left|\displaystyle \sum_{k=1}^n Y_{ki}\right| > n\epsilon \, \right) \; = \; \displaystyle \sum_{n=1}^\infty P \left(\,\left|\displaystyle \sum_{k=1}^n Y_{ki}\right|^2 > n^2 \epsilon^2 \, \right)
\\ =& \, \displaystyle \sum_{n=1}^\infty P \left(\,\displaystyle \sum_{k,l=1}^n Y_{ki} \overline{Y_{li}} > n^2 \epsilon^2 \, \right) \; \leq \;  \displaystyle \sum_{n=1}^\infty \frac{1}{(n\epsilon)^4}\, E \,\left[(\displaystyle \sum_{k,l=1}^n Y_{ki} \overline{Y_{li}})^2 \,\big| \, \textbf{X} \right]
\\ =& \, \displaystyle \sum_{n=1}^\infty \frac{1}{(n\epsilon)^4}\; E \,\left[ \left(\displaystyle \sum_{\substack{k_1,l_1,\\k_2,l_2=1}}^n Y_{k_1 i} \overline{Y_{l_1 i}}\, Y_{k_2 i} \overline{Y_{l_2 i}} \right)\,\big| \, \textbf{X}\,\right]
\\ \leq & \displaystyle \sum_{n=1}^\infty \frac{1}{(n\epsilon)^4} \, . \, C n^2 \;\; < \;\; \infty,
 \end{align*}
where $C>0$, $Y_k = e^{\imath \langle t_i , X^*_{ki}\rangle} - \hat{f_i}(t_i)$, and $|Y_k| \leq 2$ for any $1 \leq k \leq n$.\\

By Borel-Cantelli Lemma, as $n \to \infty$,\, $\hat{f_i^*}(t_i) - \hat{f}_i(t_i) \overset{a.s}{\longrightarrow} 0$ almost surely. By some algebra and the weak law of large number, we have
 \begin{align*}
 \begin{pmatrix}
  \Gamma^*_n(t^1)  \\
  \Gamma^*_n(t^2)  \\
  \vdots    \\
  \Gamma^*_n(t^N)
 \end{pmatrix}
 =\;\, \frac{1}{\sqrt{n}}\, \displaystyle \sum_{j=1}^n\mathcal{Z}_j\,\; + \;
 U \; ,
 \end{align*}
 where $\mathcal{Z}_j=(\mathcal{Z}_{j1},\dots,\mathcal{Z}_{jN})'$ with $\mathcal{Z}_{jk}=\prod_{i=1}^d \big(
 \hat{f_i}(t_i^k)-e^{\imath \langle t_i^k , X^*_{ji}\rangle}\big)$ for $1\leq k\leq N $, and, $U \overset{a.s}{\longrightarrow} 0$, almost surely. By the independence of Bootstrap samples, $\E[\mathcal{Z}_j | \, \textbf{X}]=0$ and for $1 \leq l , m \leq N$,
 \begin{align*}
 \E[\mathcal{Z}_{jl}\overline{\mathcal{Z}}_{jm}]\, =& \, \displaystyle \prod_{i=1}^d \, \E \,\left[ (e^{\imath \langle t_i^l , X^*_{ji}\rangle} - \hat{f_i}(t_i^l))\, (e^{-\imath \langle t_i^m , X^*_{ji}\rangle} - \hat{f_i}(-t_i^m)) | \;\textbf{X} \right]\\
 =& \displaystyle \prod_{i=1}^d \,\big\{\hat{f_i}(t_i^l - t_i^m) - \hat{f_i}(t_i^l) \hat{f_i}(-t_i^m) \big\} .
 \end{align*}
Let $R_n$ and $R$ be $N \times N$ matrices with the $(l, m)^{th}$ element being \[R_n (l, m) \, = \, \displaystyle \prod_{i=1}^d \,\big\{\hat{f_i}(t_i^l - t_i^m) - \hat{f_i}(t_i^l) \hat{f_i}(-t_i^m) \big\} \; , \] and, \[ R (l, m) \, = \, \displaystyle \prod_{i=1}^d \,\big\{f_i(t_i^l - t_i^m) - f_i(t_i^l) f_i(-t_i^m) \big\} \; .\]
By Multivariate CLT,
\begin{align*}
R_n^{-\frac{1}{2}} \, \begin{pmatrix}
  \Gamma^*_n(t^1)  \\
  \Gamma^*_n(t^2)  \\
  \vdots    \\
  \Gamma^*_n(t^N)
 \end{pmatrix} \, \overset{d}{\longrightarrow} \, N(0, I_N) \; \; almost\; surely \; ,
\end{align*}
which, along with the fact $R_n \overset{a.s}{\longrightarrow} R$ and Slutsky's Theorem, implies
 \[\begin{pmatrix}
  \Gamma^*_n(t^1)  \\
  \Gamma^*_n(t^2)  \\
  \vdots    \\
  \Gamma^*_n(t^N)
 \end{pmatrix} \; \overset{d}{\longrightarrow} \; \begin{pmatrix}
  \Gamma(t^1)  \\
  \Gamma(t^2)  \\
  \vdots    \\
  \Gamma(t^N)
 \end{pmatrix} \; \; almost\; surely \;,\]
and thus completes the proof of (1).\\

To prove $(2)$, define $u_i^* \, = \, e^{\imath \langle t_i ,
X^*_{i}\rangle} - \hat{f_i}^* (t_i)$. Then $|u_i|^2 \, = \, 1 + |\hat{f_i}(t_i)|^2 -
e^{\imath \langle t_i , X^*_i \rangle} \, \overline{\hat{f_i}(t_i)} -
e^{-\imath \langle t_i , X^*_i \rangle}\,\hat{f_i}(t_i)$, and hence
\begin{equation}\label{dis22}
\E \,\left[\,|u_i^*|^2 \, | \textbf{X}\,\right] \, = \, 1 - |\hat{f_i}(t_i)|^2  .
\end{equation}

Following the similar steps as in the proof of Theorem 5 in Sz\'{e}kely
\textit{et al.}(2007) and using the Fubini's Theorem,
\begin{equation}\label{th4.1}
\begin{split}
&\E \,\left[\, | Q^*_n(\delta) - \Arrowvert \Gamma^*_n(t)
\Arrowvert^2 |\, \big| \, \textbf{X} \right] \; \\=& \; \E
\,\left[\big|\int_{D(\delta)} |\Gamma^*_n(t)|^2 \, dw \, - \, \int
|\Gamma^*_n(t)|^2 \, dw |\; \big| \, \textbf{X} \right]
\\ \leq & \, \int_{|t_1|_{p_1} < \delta} \E\, \left[\,|\Gamma^*_n(t)|^2\, \big| \, \textbf{X} \right] \, dw \, + \int_{|t_1|_{p_1} > 1/\delta} \E \, \left[\,|\Gamma^*_n(t)|^2\, \big|\, \textbf{X} \right] \, dw \,
\\ & \, + \, \cdots\, +  \int_{|t_d|_{p_d} < \delta} \E\,\left[\,|\Gamma^*_n(t)|^2\, \big| \, \textbf{X} \right] \, dw \, + \int_{|t_d|_{p_d} > 1/\delta} \E\,\left[\,|\Gamma^*_n(t)|^2\, \big| \, \textbf{X} \right] \, dw  .
\end{split}
\end{equation}
Using (\ref{dis11}) and (\ref{dis22}), we have $\E\,\left[ \,|\Gamma^*_n(t)|^2\, \big| \, \textbf{X} \right]
\,=\,c_n\prod_{i=1}^d \, \E\,\left[\,|u^*_i|^2\, \big| \, \textbf{X} \right].$ Along with the independence
assumption, we have
\begin{align*}
&\int_{|t_1|_{p_1} < \delta} \E\,\left[\,|\Gamma^*_n(t)|^2\, \big|
\, \textbf{X} \right] \, dw \, \\=& \;c_n \int_{|t_1|_{p_1} <
\delta} \frac{\E\,\left[\, |u_1^*|^2\, \big| \, \textbf{X}
\right]}{c_{p_1}\,|t_1|_{p_1}^{1+p_1}} \, dt_1 \prod_{i=2}^d \,
\,\int \frac{\E\,\left[\,|u_i^*|^2\, \big| \, \textbf{X}
\right]}{c_{p_i}\,|t_i|_{p_i}^{1+p_i}} \, dt_i
\\ \leq &\; 2 \,c_n\, \E\, \left[\,|X^*_1 - X_1^{*'}|_{_{p_1}}\,G(|X^*_1 - X_1^{*'}|_{_{p_1}}\delta)\,| \, \textbf{X} \right] \, \prod_{i=2}^d 4\,\E\, \left[\,|X^*_i|_{_{p_i}}\, | \, \textbf{X}\right]
\\ =&\; 2 \,c_n\, \frac{1}{n^2}\displaystyle \sum_{j,k=1}^n |X_{j1} - X_{k1}|_{p_1}\,G(|X_{j1} - X_{k1}|_{p_1} \delta)  \, \prod_{i=2}^d 4\,\frac{1}{n}\displaystyle \sum_{j=1}^n |X_{ji}|_{p_i}
\\ \overset{a.s}{\rightarrow} &\; 2 \, \E\, \left[\,|X_1 - X_1^{'}|_{_{p_1}}\,G(|X_1 - X_1^{'}|_{_{p_1}}\delta)\,\right] \, \prod_{i=2}^d 4\,\E\, \left[\,|X_i|_{_{p_i}}\,\right] \qquad as \;\; n \to \infty.
\end{align*}
Therefore \[\displaystyle \lim_{\delta \to 0}\, \lim_{n \to
\infty}\, \int_{|t_1|_{p_1} < \delta} \E\,\left[\,|\Gamma^*_n(t)|^2\, \big| \, \textbf{X} \right] \, dw \, =
\, 0 \;\;\; almost \;surely .\] Similarly
\begin{align*}
\int_{|t_1|_{p_1} > 1/\delta} \E\,\left[\,|\Gamma^*_n(t)|^2\, \big| \, \textbf{X} \right] \, dw \, =& \;c_n
\int_{|t_1|_{p_1} > 1/\delta} \frac{\E\,\left[\,
|u^*_1|^2\, \big| \, \textbf{X} \right]}{c_{p_1}\,|t_1|_{p_1}^{1+p_1}} \, dt_1 \,. \, \prod_{i=2}^d
\,\int \frac{\E\,\left[\,|u^*_i|^2\, \big| \, \textbf{X} \right]}{c_{p_i}\,|t_i|_{p_i}^{1+p_i}} \, dt_i
\\ \leq &\; 4\delta c_n \prod_{i=2}^d 4\,\E \left[\,|X^*_i|_{_{p_i}}|\, \textbf{X}\right].
\end{align*}
Therefore \[\displaystyle \lim_{\delta \to 0}\, \lim_{n \to
\infty}\, \int_{|t_1|_{p_1} > 1/\delta} \E\,\left[\,|\Gamma^*_n(t)|^2\, \big| \, \textbf{X} \right] \, dw \,
= \, 0 \;\;\; almost\; surely .\] Applying similar argument to the remaining summands in
$(\ref{th4.1})$, we get \[\displaystyle \lim_{\delta \to 0}\, \lim_{n \to
\infty}\, \E\,\left[\, |\, Q^*_n(\delta) - \Arrowvert \Gamma^*_n(t) \Arrowvert^2
\, |\, \big| \, \textbf{X} \right] \, = \, 0 \; \; \; almost \; surely .\]
The proof of part $(3)$ is exactly the same as its counterpart in the proof of Proposition \ref{prop6}, which completes the proof.

\end{proof}

Let $G_n$ be the set of all functions from $\{1,2,\dots,n\}$ to $\{1,2,\dots,n\}$. Define a map $g$: $\mathbb{R}^{n\times d}\rightarrow \mathbb{R}^{n\times d}$ as the following
\begin{align*}
g(\X_1,\dots,\X_n)=\begin{pmatrix}
    X_{g_1(1),1} & X_{g_2(1),2} & \dots  & X_{g_d(1),d} \\
    X_{g_1(2),1} & X_{g_2(2),2} & \dots  & X_{g_d(2),d} \\
    \vdots & \vdots & \ddots & \vdots \\
    X_{g_1(n),1} & X_{g_2(n),2} & \dots  & X_{g_d(n),d}
                   \end{pmatrix}
\end{align*}
where $g_i\in G_n$ for $1\leq i\leq d$. With some abuse of notation, we denote by $\widehat{JdCov^2}(g(\X_1,\dots,\X_n))$ the sample (squared) JdCov computed based on the sample $g(\X_1,\dots,\X_n)$. Conditional on the sample, the resampling distribution function $\widehat{F}_n: [0,+\infty) \to [0,1]$ of the bootstrap statistic is defined for all $t \in \mathbb{R}$ as
\begin{align*}
&\widehat{F}_n(\X_1,\dots,\X_n)(t) \,:= \,\frac{1}{n^{nd}}\sum_{g \in G_n^d}\mathds{1}_{\{\, n\widehat{JdCov^2}(g(\X_1,\dots,\X_n))\, \leq \, t\,\}}.
\end{align*}
For $\alpha \in (0,1)$, we define the $\alpha$-level bootstrap-assisted test for testing $H_0$ against $H_A$ as
\begin{equation}\label{one}
\phi_n(\textbf{X}_1, \dots, \textbf{X}_n) := \mathds{1}_{\{\, n\widehat{JdCov^2}\left(\psi(\textbf{X}_1,\, ..\, , \textbf{X}_n) \right) \; > \; \left(\widehat{F}_n(\X_1,\dots,\X_n)\right)^{-1}(1-\alpha)\,\}} \; .
\end{equation}

\begin{proof}[Proof of Proposition \ref{prop11}]
The proof is in similar lines of the proof of Theorem 3.7 in \textit{Pfister et al.} (2018). There exists a set $A_0$ with $P(A_0) = 1$ such that for all $\omega \in A_0$ and $\forall \, t \in \mathbb{R}$,
\begin{align*}
\lim_{n \to \infty} \, \widehat{F}_n (\textbf{X}_1(\omega),\, ..\, , \textbf{X}_n(\omega)) \,(t) \, &= \, \lim_{n \to \infty} \, \frac{1}{n^{nd}} \, \displaystyle \sum_{g \in G_n^d} \mathbbm{1}_{\{\, n\widehat{JdCov^2}\big( g(\textbf{X}_1(\omega),\, ..\, , \textbf{X}_n(\omega))\,\big)\, \leq \, t\,\}}
\\ &= \, \lim_{n \to \infty} \, E\left( \mathbbm{1}_{\{\, n\widehat{JdCov^2}( g (\textbf{X}_1(\omega),\, ..\, , \textbf{X}_n(\omega))\,)\, \leq \, t\, \}} \right)
\\ &= \, \lim_{n \to \infty} \, P \left( n\widehat{JdCov^2}\big( g \big(\textbf{X}_1(\omega),\, ..\, , \textbf{X}_n(\omega))\,\big)\, \leq \, t\, \right)
\\ &= \, G(t) \; ,
\end{align*}
where $G(\cdot)$ is the distribution function of $\displaystyle \sum_{j=1}^{+\infty} \lambda^{'}_j Z_j^2$ . \\

Since $G$ is continuous, for all $\omega \in A_0$ and $\forall \, t \in \mathbb{R}$, we have
\begin{align*}
\lim_{n \to \infty} \, \left( \widehat{F}_n (\textbf{X}_1(\omega),\, ..\, , \textbf{X}_n(\omega))  \,\right)^{-1} (t) \; = \; G^{-1} (t) \, .
\end{align*}
In particular, for all $\omega \in A_0$, we have
\begin{equation}\label{two}
\lim_{n \to \infty} \, \left( \widehat{F}_n (\textbf{X}_1(\omega),\, ..\, , \textbf{X}_n(\omega)) \,\right)^{-1} (1-\alpha) \; = \; G^{-1} (1-\alpha) \, .
\end{equation}
When $H_0$ is true, using Proposition \ref{prop6}, equation $(\ref{one})$ and Corollary $11.2.3$ in Lehmann and Romano (2005), we have
\begin{align*}
&\displaystyle \limsup_{n \to \infty}\, P\left(\, \phi_n(\textbf{X}_1, \dots, \textbf{X}_n) = 1\,\right)
\\  =& \, \displaystyle \limsup_{n \to \infty}\, P\left(\,n\,\widehat{dcov^2} (\textbf{X}_1,\, ..\, , \textbf{X}_n) \; > \; \left(\widehat{F}_n(\textbf{X}_1,\, ..\, , \textbf{X}_n)\,\right)^{-1} (1-\alpha)\,\right)\\
=& \, 1 - \displaystyle \liminf_{n \to \infty}\, P\left(\,n\,\widehat{dcov^2} (\textbf{X}_1,\, ..\, , \textbf{X}_n) \; \leq \; \left(\widehat{F}_n(\textbf{X}_1,\, ..\, , \textbf{X}_n)\,\right)^{-1} (1-\alpha)\,\right)\\
=& 1 - G\left(G^{-1}(1-\alpha)\right) \, = \, 1 - (1-\alpha) \, = \,
\alpha \, .
\end{align*}
This completes the proof of the proposition.
\end{proof}

\begin{proof}[Proof of Proposition \ref{prop12}]
The proof is in similar lines of the proof of Theorem 3.8 in \textit{Pfister et al.} (2018). In the proof of Proposition \ref{prop11}, we showed that there exists a set $A_0$ with $P(A_0) = 1$ such that for all $\omega \in A_0$,
\[\lim_{n \to \infty} \, \left(\widehat{F}_n (\textbf{X}_1(\omega),\, ..\, , \textbf{X}_n(\omega))\right)^{-1} \,(1-\alpha) \; = \; G^{-1} (1-\alpha) \; .\]

Define the set
\begin{equation}
 A_1 \, = \, \{ \, \omega \, : \, \forall \, t\, \in \mathbb{R}, \, \lim_{n \to \infty} \, \mathbbm{1}_{\{\, n\,\widehat{dcov^2}(\textbf{X}_1(\omega),\, ..\, , \textbf{X}_n(\omega))\, \leq t \,\}} \, = \, 0\,\}\; .
\end{equation}
Clearly, $P(A_1) = 1$ and hence $P(A_0 \cap A_1) = 1$. Fix $\omega \in
A_0 \cap A_1$. Then by $(\ref{one})$ and $(\ref{two})$, there exists a constant $t^* \in \mathbb{R}$ such that $\forall n \in \mathbb{N}$,
\begin{align*}
\lim_{n \to \infty} \, \left( \widehat{F}_n (\textbf{X}_1(\omega),\, ..\, , \textbf{X}_n(\omega)) \,\right)^{-1} (1-\alpha) \; &\leq \; t^* \; .
\end{align*}
Therefore,
\begin{align*}
\lim_{n \to \infty} \, \mathbbm{1}_{\{ \,n\,\widehat{dcov^2} (\textbf{X}_1(\omega),\, ..\, , \textbf{X}_n(\omega)) \; \leq \; \left(\widehat{F_n} (\textbf{X}_1(\omega),\, ..\, , \textbf{X}_n(\omega)) \,\right)^{-1} (1-\alpha)  \,\}}
\\ \leq \; \lim_{n \to \infty} \, \mathbbm{1}_{\{\, n\,\widehat{dcov^2} (\textbf{X}_1(\omega),\, ..\, , \textbf{X}_n(\omega)) \, \leq \, t^* \, \}} \;\; = \;\; 0 \; ,
\end{align*}
i.e., $\mathbbm{1}_{\{ \,n\,\widehat{dcov^2} (\textbf{X}_1,\, ..\, , \textbf{X}_n) \; \leq \; \left(\widehat{F_n}(\textbf{X}_1,\, ..\, , \textbf{X}_n)\,\right)^{-1} (1-\alpha)  \,\}} \; \overset{a.s}{\longrightarrow} \; 0$ \, as $n \to \infty$. It follows by dominated convergence theorem that
\begin{align*}
\lim_{n \to \infty} \, P\left(\,n\,\widehat{dcov^2} (\textbf{X}_1,\, ..\, , \textbf{X}_n) \; \leq \; \left(\widehat{F_n}(\textbf{X}_1,\, ..\, , \textbf{X}_n)\,\right)^{-1} (1-\alpha)\,\right) \;
\\= \; \lim_{n \to \infty} \, E\left(\,\mathbbm{1}_{\{ \,n\,\widehat{dcov^2} (\textbf{X}_1,\, ..\, , \textbf{X}_n) \; \leq \; \left(\widehat{F_n}(\textbf{X}_1,\, ..\, , \textbf{X}_n)\,\right)^{-1} (1-\alpha)  \,\}}\,\right) \; = \; 0 \, ,
\end{align*}
which completes the proof of the proposition.
\end{proof}

\begin{table}[!h]\footnotesize
\centering
\caption{Empirical size and power for the bootstrap-assisted joint independence tests (based on the U-statistics) for $c=1$. The results are obtained based on 1000 replications and the number of bootstrap resamples is taken to be 500.}
\label{table:tb}
\begin{tabular}{c ccc cc cc cc cc cc}
\toprule
&&&&\multicolumn{2}{c}{$\widetilde{JdCov^2}$}&\multicolumn{2}{c}{$\widetilde{JdCov^2_S}$}&\multicolumn{2}{c}{$\widetilde{JdCov^2_R}$}&\multicolumn{2}{c}{dHSIC}&\multicolumn{2}{c}{$T_{MT}$}
\\ \cmidrule(r){5-6}\cmidrule(r){7-8}\cmidrule(r){9-10}\cmidrule(r){11-12}\cmidrule(r){13-14}
& & $n$ & $\tilde{d}$ &
10\% & 5\% & 10\% & 5\% & 10\% & 5\% & 10\% & 5\% &  10\% & 5\%  \\
\hline
\multirow{12}{*}{Ex \ref{eg:size}} & (1) & 50 & 5 & 0.097 & 0.049 & 0.110 & 0.059 & 0.099 & 0.045 & 0.102 & 0.047 & 0.099 & 0.042 \\
    & (1) & 50 & 10 & 0.100 & 0.050 & 0.108 & 0.053 & 0.101 & 0.053 & 0.091 & 0.042 & 0.068 & 0.034 \\
    & (2) & 50 & 5 & 0.103 & 0.062 & 0.096 & 0.052 & 0.099 & 0.045 & 0.104 & 0.048 & 0.115 & 0.061 \\
     & (2) & 50 & 10 & 0.119 & 0.062 & 0.121 & 0.056 & 0.101 & 0.053 & 0.105 & 0.041 & 0.106 & 0.056 \\
     & (3) & 50 & 5 & 0.057 & 0.022 & 0.112 & 0.047 & 0.099 & 0.045 & 0.103 & 0.047 & 0.027 & 0.011 \\
     & (3) & 50 & 10 & 0.050 & 0.017 & 0.100 & 0.051 & 0.101 & 0.053 & 0.091 & 0.040 & 0.013 & 0.006 \\
\cline{2-14}
    & (1) & 100 & 5 & 0.101 & 0.05 & 0.098 & 0.057 & 0.091 & 0.042 & 0.088 & 0.038 & 0.098 & 0.052\\
     & (1) & 100 & 10 & 0.105 & 0.045 & 0.085 & 0.043 & 0.102 & 0.053 & 0.091 & 0.038 & 0.098 & 0.059 \\

     & (2)& 100 & 5 & 0.094 & 0.047 & 0.093 & 0.049 & 0.091 & 0.042 & 0.102 & 0.042 & 0.094 & 0.054 \\
     & (2)& 100 & 10 & 0.115 & 0.063 & 0.102 & 0.06 & 0.102 & 0.053 & 0.104 & 0.049 & 0.106 & 0.06  \\

     & (3)&  100 & 5 & 0.08 & 0.034 & 0.115 & 0.058 & 0.091 & 0.042 & 0.095 & 0.038 & 0.043 & 0.019\\
     & (3)& 100 & 10 & 0.066 & 0.025 & 0.104 & 0.052 & 0.102 & 0.053 & 0.111 & 0.047 & 0.021 & 0.005  \\
\hline
\multirow{12}{*}{Ex \ref{eg:power_normal}}& (1)& 50 & 5 & 0.606 & 0.474 & 0.510 & 0.381 & 0.626 & 0.513 & 0.229 & 0.142 & 0.607 & 0.490 \\
    & (1)& 50 & 10 & 0.495 & 0.359 & 0.306 & 0.192 & 0.705 & 0.596 & 0.145 & 0.070 & 0.669 & 0.545 \\
    & (2)& 50 & 5 & 0.813 & 0.720 & 0.732 & 0.632 & 0.835 & 0.751 & 0.342 & 0.219 & 0.805 & 0.706 \\
     & (2)& 50 & 10 & 0.797 & 0.668 & 0.466 & 0.339 & 0.941 & 0.904 & 0.201 & 0.113 & 0.906 & 0.846 \\
     & (3)& 50 & 5 & 0.877 & 0.817 & 0.815 & 0.764 & 0.886 & 0.840 & 0.374 & 0.242 & 0.849 & 0.787 \\
     & (3)& 50 & 10 & 0.848 & 0.749 & 0.521 & 0.396 & 0.960 & 0.917 & 0.174 & 0.096 & 0.942 & 0.897 \\
\cline{2-14}
	& (1)& 100 & 5 & 0.903 & 0.854 & 0.834 & 0.767 & 0.93 & 0.881 & 0.405 & 0.278 & 0.913 & 0.863 \\
     & (1)& 100 & 10 & 0.853 & 0.756 & 0.468 & 0.337 & 0.977 & 0.954 & 0.203 & 0.114 & 0.97 & 0.936\\

     & (2)& 100 & 5 & 0.989 & 0.981 & 0.968 & 0.946 & 0.99 & 0.983 & 0.618 & 0.491 & 0.987 & 0.975 \\
     & (2)& 100 & 10 & 0.998 & 0.988 & 0.79 & 0.657  &  1  &  1 & 0.36 & 0.215 & 1 & 0.999\\

     & (3)& 100 & 5 & 0.998 & 0.994 & 0.988 & 0.98 & 0.997 & 0.991 & 0.649 & 0.518 & 0.995 & 0.991\\
     & (3)& 100 & 10 & 0.998 & 0.991 & 0.816 & 0.721  &  1  &  1 & 0.307 & 0.189  &  1 & 0.999\\
\hline
\multirow{4}{*}{Ex \ref{eg:mutua1-2}}& (1) & 50 & 3 & 0.998 & 0.986 & 1.000 & 1.000 & 0.624 & 0.365 & 0.898 & 0.794 & 0.221 & 0.106 \\
   & (2) & 50 & 3 &   1 &  1 &  1  & 1 &  1 &  1 &  1  & 1 &  1 &  1 \\
   & (1) & 100 & 3 & 1  &  1  &  1  &  1 &   1& 0.999  &  1  &  1 & 0.622 & 0.368 \\
   & (2) & 100 & 3 & 1  &  1  &  1  &  1 &   1  &  1  &  1  &  1  &  1  &   1\\
\hline
\multirow{8}{*}{Ex \ref{eg:mutua1-3}}

& (1)& 100 & 5 & 0.339 & 0.195 & 0.523 & 0.379 & 0.122 & 0.07 & 0.219 & 0.114 & 0.073 & 0.038 \\
   & (1) & 100 & 10 & 0.105 & 0.027 & 0.248 & 0.147 & 0.049 & 0.019 & 0.117 & 0.043 & 0.025 & 0.008 \\
   & (2) & 100 & 5 & 0.369 & 0.235 & 0.466 & 0.362 & 0.162 & 0.09 & 0.406 & 0.25 & 0.241 & 0.161    \\
   & (2) & 100 & 10 & 0.097 & 0.04 & 0.218 & 0.13 & 0.06 & 0.021 & 0.164 & 0.077 & 0.046 & 0.022 \\

\cline{2-14}

& (1)& 200 & 5 & 0.813 & 0.676 & 0.929 & 0.865 & 0.238 & 0.128 & 0.378 & 0.224 & 0.085 & 0.044 \\
   & (1) & 200 & 10 & 0.262 & 0.140 & 0.433 & 0.305 & 0.093 & 0.045 & 0.137 & 0.061 & 0.047 & 0.023  \\
   & (2) & 200 & 5 &  0.773 & 0.662 & 0.778 & 0.689 & 0.398 & 0.263 & 0.797 & 0.665 & 0.581 & 0.505   \\
   & (2) & 200 & 10 & 0.290 & 0.171 & 0.384 & 0.296 & 0.136 & 0.065 & 0.300 & 0.173 & 0.141 & 0.077  \\

\toprule
\end{tabular}
\\ Note: In Examples \ref{eg:size}-\ref{eg:mutua1-2}, $\tilde{d}$ denotes the number of random variables $d$. In Example \ref{eg:mutua1-3}, $\tilde{d}$ stands for $p$.
\end{table}



\begin{table}[!h]\footnotesize
\centering
\caption{Empirical size and power for the bootstrap-assisted joint independence tests (based on the U-statistics) for $c=2$ and $0.5$. The results are obtained based on 1000 replications and the number of bootstrap resamples is taken to be 500.}
\label{table:tb1}
\begin{tabular}{c ccc cc cc cc | cc cc cc}
\toprule
&&&& \multicolumn{6}{c}{$c=2$} & \multicolumn{6}{c}{$c=0.5$}\\
&&&&\multicolumn{2}{c}{$\widetilde{JdCov^2}$}&\multicolumn{2}{c}{$\widetilde{JdCov^2_S}$}&\multicolumn{2}{c}{$\widetilde{JdCov^2_R}$}&\multicolumn{2}{c}{$\widetilde{JdCov^2}$}&\multicolumn{2}{c}{$\widetilde{JdCov^2_S}$}&\multicolumn{2}{c}{$\widetilde{JdCov^2_R}$}
\\ \cmidrule(r){5-6}\cmidrule(r){7-8}\cmidrule(r){9-10}\cmidrule(r){11-12}\cmidrule(r){13-14}\cmidrule(r){15-16}
& & $n$ & $\tilde{d}$ &
10\% & 5\% & 10\% & 5\% & 10\% & 5\% & 10\% & 5\% & 10\% & 5\% & 10\% & 5\% \\
\hline
\multirow{12}{*}{Ex \ref{eg:size}} & (1) & 50 & 5 & 0.097 & 0.05 & 0.099 & 0.052 & 0.094 & 0.045 & 0.102 & 0.05 & 0.115 & 0.061 & 0.099 & 0.054 \\
    & (1) & 50 & 10 & 0.103 & 0.049 & 0.112 & 0.056 & 0.097 & 0.048 & 0.102 & 0.051 & 0.116 & 0.067 & 0.107 & 0.051\\
    & (2) & 50 & 5 & 0.106 & 0.057 & 0.102 & 0.058 & 0.094 & 0.045 & 0.113 & 0.053 & 0.110 & 0.058 & 0.099 & 0.054 \\
     & (2) & 50 & 10 & 0.107 & 0.051 & 0.107 & 0.06 & 0.097 & 0.048 & 0.125 & 0.074 & 0.120 & 0.071 & 0.107 & 0.051 \\
     & (3) & 50 & 5 & 0.063 & 0.017 & 0.101 & 0.048 & 0.094 & 0.045 & 0.058 & 0.019 & 0.105 & 0.052 & 0.099 & 0.054 \\
     & (3) & 50 & 10 & 0.056 & 0.022 & 0.100 & 0.053 & 0.097 & 0.048 & 0.026 & 0.009 & 0.096 & 0.049 & 0.107 & 0.051 \\
\cline{2-16}
    & (1) & 100 & 5 & 0.087 & 0.043 & 0.098 & 0.049 & 0.085 & 0.046 & 0.097 & 0.059 & 0.107 & 0.066 & 0.098 & 0.042 \\
     & (1) & 100 & 10 & 0.104 & 0.049 & 0.107 & 0.050 & 0.098 & 0.052 & 0.087 & 0.040 & 0.117 & 0.056 & 0.104 & 0.053\\

     & (2)& 100 & 5 & 0.088 & 0.046 & 0.091 & 0.039 & 0.085 & 0.046 & 0.104 & 0.059 & 0.108 & 0.057 & 0.098 & 0.042\\
     & (2)& 100 & 10 & 0.099 & 0.060 & 0.105 & 0.065 & 0.098 & 0.052 & 0.101 & 0.060 & 0.101 & 0.054 & 0.104 & 0.053\\

     & (3)&  100 & 5 & 0.080 & 0.034 & 0.113 & 0.057 & 0.085 & 0.046 & 0.086 & 0.034 & 0.120 & 0.063 & 0.098 & 0.042 \\
     & (3)& 100 & 10 & 0.077 & 0.029 & 0.117 & 0.053 & 0.098 & 0.052 & 0.044 & 0.019 & 0.100 & 0.055 & 0.104 & 0.053 \\
\hline
\multirow{12}{*}{Ex \ref{eg:power_normal}}& (1)& 50 & 5 & 0.644 & 0.526 & 0.629 & 0.504 & 0.630 & 0.517 & 0.434 & 0.323 & 0.291 & 0.196 & 0.610 & 0.499 \\
    & (1)& 50 & 10 & 0.690 & 0.580 & 0.603 & 0.473 & 0.718 & 0.610 & 0.220 & 0.125 & 0.163 & 0.105 & 0.615 & 0.498 \\
    & (2)& 50 & 5 & 0.857 & 0.777 & 0.836 & 0.750 & 0.837 & 0.760 & 0.641 & 0.519 & 0.439 & 0.318 & 0.816 & 0.734 \\
     & (2)& 50 & 10 & 0.944 & 0.887 & 0.872 & 0.798 & 0.953 & 0.914 & 0.313 & 0.212 & 0.221 & 0.165 & 0.887 & 0.811 \\
     & (3)& 50 & 5 & 0.903 & 0.851 & 0.889 & 0.835 & 0.892 & 0.846 & 0.773 & 0.692 & 0.596 & 0.510 & 0.876 & 0.821 \\
     & (3)& 50 & 10 & 0.957 & 0.918 & 0.912 & 0.842 & 0.967 & 0.929 & 0.370 & 0.254 & 0.266 & 0.198 & 0.915 & 0.868 \\
\cline{2-16}
	& (1)& 100 & 5 & 0.935 & 0.890 & 0.912 & 0.877 & 0.932 & 0.886 & 0.747 & 0.637 & 0.453 & 0.346 & 0.916 & 0.867\\
     & (1)& 100 & 10 & 0.979 & 0.943 & 0.927 & 0.860 & 0.983 & 0.963 & 0.308 & 0.194 & 0.188 & 0.129 & 0.949 & 0.890 \\

     & (2)& 100 & 5 & 0.994 & 0.987 & 0.991 & 0.986 & 0.991 & 0.983 & 0.938 & 0.897 & 0.705 & 0.605 & 0.988 & 0.981\\
     & (2)& 100 & 10 & 1 & 1 & 1 & 0.999 & 1 & 1 & 0.476 & 0.352 & 0.274 & 0.210 & 1 & 1 \\

     & (3)& 100 & 5 & 0.998 & 0.997 & 0.998 & 0.994 & 0.997 & 0.991 & 0.980 & 0.962 & 0.872 & 0.817 & 0.997 & 0.991\\
     & (3)& 100 & 10 & 1 & 1 & 1 & 0.999 & 1 & 1 & 0.559 & 0.444 & 0.336 & 0.274 & 1 & 0.998\\
\hline
\multirow{4}{*}{Ex \ref{eg:mutua1-2}}& (1) & 50 & 3 & 0.797 & 0.567 & 0.978 & 0.893 & 0.267 & 0.155& 1 & 1 & 1 & 1 & 1 & 0.984 \\
   & (2) & 50 & 3 & 1 & 1 & 1 & 1 & 0.959 & 0.593 & 1 & 1 & 1 & 1 & 1 & 1 \\
   & (1) & 100 & 3 & 1 & 0.999 & 1 & 1 & 0.704 & 0.458 & 1 & 1 & 1 & 1 & 1 & 1 \\
   & (2) & 100 & 3 & 1 & 1 & 1 & 1 & 1 & 1 & 1 & 1 & 1 & 1 & 1 & 1\\
\hline
\multirow{8}{*}{Ex \ref{eg:mutua1-3}}

& (1)& 100 & 5 & 0.198 & 0.087 & 0.295 & 0.195 & 0.109 & 0.047 & 0.605 & 0.413 & 0.768 & 0.638 & 0.178 & 0.096 \\
   & (1) & 100 & 10 & 0.074 & 0.018 & 0.171 & 0.092 & 0.045 & 0.017 & 0.149 & 0.046 & 0.357 & 0.221 & 0.050 & 0.020 \\
   & (2) & 100 & 5 & 0.342 & 0.221 & 0.444 & 0.315 & 0.180 & 0.095 & 0.438 & 0.338 & 0.496 & 0.419 & 0.267 & 0.143 \\
   & (2) & 100 & 10 & 0.083 & 0.034 & 0.179 & 0.105 & 0.055 & 0.016 & 0.134 & 0.056 & 0.266 & 0.176 & 0.066 & 0.027 \\
\cline{2-16}

& (1)& 200 & 5 & 0.435 & 0.293 & 0.619 & 0.462 & 0.162 & 0.083 & 0.981 & 0.951 & 0.995 & 0.987 & 0.438 & 0.281\\
   & (1) & 200 & 10 & 0.146 & 0.063 & 0.243 & 0.146 & 0.077 & 0.032 & 0.465 & 0.308 & 0.664 & 0.528 & 0.132 & 0.057 \\
   & (2) & 200 & 5 & 0.698 & 0.571 & 0.781 & 0.669 & 0.338 & 0.212 & 0.715 & 0.623 & 0.688 & 0.611 & 0.534 & 0.400  \\
   & (2) & 200 & 10 & 0.214 & 0.129 & 0.316 & 0.213 & 0.120 & 0.052 & 0.349 & 0.241 & 0.442 & 0.352 & 0.169 & 0.082\\

\toprule
\end{tabular}
\\Note: In Examples \ref{eg:size}-\ref{eg:mutua1-2}, $\tilde{d}$ denotes the number of random variables $d$. In Example \ref{eg:mutua1-3}, $\tilde{d}$ stands for $p$.
\end{table}

\begin{table}[!h]\footnotesize
\centering
\caption{Empirical size and power for the bootstrap-assisted joint independence tests based on the V-statistic type estimators, with $c=1$. The results are obtained based on 1000 replications and the number of bootstrap resamples is taken to be 500.}
\label{table:tb_bias}

\begin{tabular}{c ccc  cc cc cc}
\toprule
&&&&\multicolumn{2}{c}{$\widehat{JdCov^2}$}&\multicolumn{2}{c}{$\widehat{JdCov^2_S}$}&\multicolumn{2}{c}{$\widehat{JdCov^2_R}$}
\\ \cmidrule(r){5-6}\cmidrule(r){7-8}\cmidrule(r){9-10}
& & $n$ & $\tilde{d}$ &
10\% & 5\% & 10\% & 5\% & 10\% & 5\%  \\
\hline
\multirow{12}{*}{Ex \ref{eg:size}} & (1) & 50 & 5 & 0.093 & 0.033 & 0.269 & 0.131 & 0.103 & 0.052   \\
    & (1) & 50 & 10 & 0.130 & 0.067 & 0.257 & 0.139 & 0.110 & 0.063 \\
    & (2) & 50 & 5 &  0.142 & 0.081 & 0.106 & 0.061 & 0.103 & 0.052  \\
     & (2) & 50 & 10 & 0.452 & 0.130 & 0.077 & 0.020 & 0.110 & 0.063  \\
     & (3) & 50 & 5 & 0.118 & 0.067 & 0.200 & 0.118 & 0.103 & 0.052  \\
     & (3) & 50 & 10 & 0.124 & 0.069 & 0.195 & 0.111 & 0.110 & 0.063  \\
\cline{2-10}
    & (1) & 100 & 5 & 0.068 & 0.024 & 0.204 & 0.113 & 0.090 & 0.044 \\
     & (1) & 100 & 10 & 0.086 & 0.042 & 0.184 & 0.092 & 0.107 & 0.058 \\

     & (2)& 100 & 5 & 0.121 & 0.061 & 0.102 & 0.053 & 0.090 & 0.044\\
     & (2)& 100 & 10 & 0.222 & 0.050 & 0.056 & 0.013 & 0.107 & 0.058 \\

     & (3)&  100 & 5 & 0.128 & 0.066 & 0.191 & 0.116 & 0.090 & 0.044  \\
     & (3)& 100 & 10 & 0.114 & 0.061 & 0.168 & 0.102 & 0.107 & 0.058 \\
\hline
\multirow{12}{*}{Ex \ref{eg:power_normal}}& (1)& 50 & 5 & 0.485 & 0.299 & 0.649 & 0.450 & 0.637 & 0.528\\
    & (1)& 50 & 10 & 0.284 & 0.161 & 0.428 & 0.271 & 0.727 & 0.627 \\
    & (2)& 50 & 5 & 0.746 & 0.571 & 0.806 & 0.659 & 0.846 & 0.768\\
     & (2)& 50 & 10 & 0.393 & 0.250 & 0.544 & 0.371 & 0.955 & 0.911 \\
     & (3)& 50 & 5 & 0.822 & 0.725 & 0.877 & 0.788 & 0.895 & 0.848  \\
     & (3)& 50 & 10 & 0.479 & 0.325 & 0.637 & 0.459 & 0.965 & 0.938  \\
\cline{2-10}
	& (1)& 100 & 5 & 0.850 & 0.717 & 0.830 & 0.693 & 0.932 & 0.886 \\
     & (1)& 100 & 10 & 0.298 & 0.168 & 0.428 & 0.276 & 0.980 & 0.955\\

     & (2)& 100 & 5 & 0.985 & 0.947 & 0.974 & 0.922 & 0.992 & 0.985  \\
     & (2)& 100 & 10 & 0.500 & 0.328 & 0.595 & 0.436 & 1.000 & 1.000  \\

     & (3)& 100 & 5 & 0.995 & 0.983 & 0.989 & 0.977 & 0.998 & 0.993\\
     & (3)& 100& 10 & 0.613 & 0.441 & 0.700 & 0.551 & 1.000 & 1.000 \\
\hline
\multirow{4}{*}{Ex \ref{eg:mutua1-2}}& (1) & 50 & 3 & 0.985 & 0.928 & 0.999 & 0.997 & 0.647 & 0.377 \\
   & (2) & 50 & 3 & 1 & 1 & 1 & 1 & 1 & 1  \\
   & (1) & 100 & 3 & 1 & 1 & 1 & 1 & 1 & 0.999 \\
   & (2) & 100 & 3 & 1 & 1 & 1 & 1 & 1 & 1   \\

\toprule

\end{tabular}
\\Note: $\tilde{d}$ denotes the number of random variables $d$.
\end{table}

\begin{table}[!h]\footnotesize
\centering
\caption{Empirical size and power for the bootstrap-assisted joint independence tests (based on the U-statistics) with $c$ chosen according to the heuristic idea discussed in Remark \ref{choice_c}. The results are obtained based on 1000 replications and the number of bootstrap resamples is taken to be 500.}
\label{table:tb3}

\begin{tabular}{c ccc cc cc cc cc}
\toprule
&&&&\multicolumn{2}{c}{$c$}&\multicolumn{2}{c}{$\widetilde{JdCov^2}$}&\multicolumn{2}{c}{$\widetilde{JdCov^2_S}$}&\multicolumn{2}{c}{$\widetilde{JdCov^2_R}$}
\\ \cmidrule(r){5-6}\cmidrule(r){7-8}\cmidrule(r){9-10}\cmidrule(r){11-12}
& & $n$ & $\tilde{d}$ &
10\% & 5\% & 10\% & 5\% & 10\% & 5\% & 10\% & 5\%  \\
\hline
\multirow{12}{*}{Ex \ref{eg:size}} & (1) & 50 & 5 & 1.646 & 1.724 & 0.103 & 0.053 & 0.107 & 0.055 & 0.107 & 0.053 \\
    & (1) & 50 & 10 & 1.657 & 1.732 & 0.101 & 0.049 & 0.106 & 0.056 & 0.095 & 0.052 \\
    & (2) & 50 & 5 & 0.440 & 0.533 & 0.116 & 0.055 & 0.114 & 0.058 & 0.110 & 0.052  \\
     & (2) & 50 & 10 & 1.519 & 1.636 & 0.099 & 0.052 & 0.087 & 0.045 & 0.094 & 0.050 \\
     & (3) & 50 & 5 & 0.438 & 0.527 & 0.050 & 0.020 & 0.113 & 0.048 & 0.110 & 0.052 \\
     & (3) & 50 & 10 & 0.438 & 0.527 & 0.027 & 0.011 & 0.094 & 0.051 & 0.107 & 0.048 \\
\cline{2-12}
    & (1) & 100 & 5 & 1.657 & 1.731 & 0.102 & 0.047 & 0.105 & 0.054 & 0.089 & 0.046 \\
     & (1) & 100 & 10 & 1.656 & 1.731 & 0.108 & 0.049 & 0.101 & 0.046 & 0.101 & 0.060 \\

     & (2)& 100 & 5 & 0.438 & 0.527 & 0.112 & 0.063 & 0.109 & 0.058 & 0.098 & 0.044\\
     & (2)& 100 & 10 & 0.484 & 0.620 & 0.104 & 0.064 & 0.104 & 0.048 & 0.116 & 0.066\\

     & (3)&  100 & 5 & 0.438 & 0.527 & 0.082 & 0.039 & 0.116 & 0.070 & 0.098 & 0.044 \\
     & (3)& 100 & 10 & 0.438 & 0.527 & 0.051 & 0.020 & 0.100 & 0.051 & 0.107 & 0.058 \\
\hline
\multirow{12}{*}{Ex \ref{eg:power_normal}}& (1)& 50 & 5 & 1.646 & 1.724 & 0.637 & 0.517 & 0.603 & 0.484 & 0.630 & 0.502 \\
    & (1)& 50 & 10 & 1.657 & 1.732 & 0.651 & 0.517 & 0.529 & 0.403 & 0.718 & 0.600 \\
    & (2)& 50 & 5 & 1.646 & 1.724 & 0.842 & 0.761 & 0.815 & 0.728 & 0.844 & 0.760 \\
     & (2)& 50 & 10 & 1.657 & 1.732 & 0.906 & 0.834 & 0.801 & 0.706 & 0.948 & 0.909 \\
     & (3)& 50 & 5 & 1.646 & 1.724 & 0.901 & 0.844 & 0.882 & 0.819 & 0.889 & 0.845  \\
     & (3)& 50 & 10 & 1.657 & 1.732 & 0.928 & 0.871 & 0.843 & 0.766 & 0.957 & 0.919  \\
\cline{2-12}
	& (1)& 100 & 5 & 1.657 & 1.731 & 0.923 & 0.884 & 0.891 & 0.845 & 0.929 & 0.883  \\
     & (1)& 100 & 10 & 1.656 & 1.731 & 0.951 & 0.905 & 0.867 & 0.778 & 0.982 & 0.953 \\

     & (2)& 100 & 5 & 1.657 & 1.731 & 0.990 & 0.986 & 0.985 & 0.977 & 0.992 & 0.985 \\
     & (2)& 100 & 10 & 1.656 & 1.731 & 0.986 & 0.982 & 0.976 & 0.962 & 1.000 & 1.000 \\

     & (3)& 100 & 5 & 1.657 & 1.731 & 0.998 & 0.996 & 0.996 & 0.990 & 0.996 & 0.992  \\
     & (3)& 100& 10 & 1.656 & 1.731 & 0.991 & 0.984 & 0.974 & 0.965 & 1.000 & 0.999\\
\hline
\multirow{4}{*}{Ex \ref{eg:mutua1-2}}& (1) & 50 & 3 & 0.554 & 0.729 & 0.984 & 0.962 & 0.998 & 0.994 & 0.899 & 0.843  \\
   & (2) & 50 & 3 & 0.438 & 0.527 & 1.000 & 1.000 & 1.000 & 1.000 & 1.000 & 1.000\\
   & (1) & 100 & 3 & .439 & 0.530 & 1.000 & 1.000 & 1.000 & 1.000 & 1.000 & 1.000\\
   & (2) & 100 & 3 & 0.438 & 0.527 & 1.000 & 1.000 & 1.000 & 1.000 & 1.000 & 1.000 \\

\hline
\multirow{8}{*}{Ex \ref{eg:mutua1-3}}

& (1)& 100 & 5 & 1.427 & 1.545 & 0.300 & 0.166 & 0.417 & 0.314 & 0.131 & 0.064 \\
   & (1) & 100 & 10 & 1.589 & 1.680 & 0.090 & 0.023 & 0.204 & 0.111 & 0.050 & 0.018  \\
   & (2) & 100 & 5 & 0.537 & 0.659 & 0.398 & 0.293 & 0.468 & 0.375 & 0.233 & 0.129 \\
   & (2) & 100 & 10 & 1.040 & 1.198 & 0.106 & 0.040 & 0.212 & 0.132 & 0.058 & 0.018  \\
\cline{2-12}

& (1)& 200 & 5 & 1.177 & 1.350 & 0.681 & 0.568 & 0.804 & 0.720 & 0.250 & 0.153 \\
   & (1) & 200 & 10 & 1.503 & 1.609 & 0.221 & 0.117 & 0.340 & 0.234 & 0.086 & 0.040 \\
   & (2) & 200 & 5 & 0.440 & 0.532 & 0.709 & 0.621 & 0.681 & 0.603 & 0.539 & 0.392  \\
   & (2) & 200 & 10 & 0.606 & 0.735 & 0.290 & 0.179 & 0.381 & 0.277 & 0.149 & 0.065 \\

\toprule

\end{tabular}
\\Note: In Examples \ref{eg:size}-\ref{eg:mutua1-2}, $\tilde{d}$ denotes the number of random variables $d$. In Example \ref{eg:mutua1-3}, $\tilde{d}$ stands for $p$.
\end{table}

\end{document}